\definecolor{DarkGreen}{rgb}{0.0, 0.5, 0.0}
\DeclareFontFamily{OMX}{yhex}{}
\DeclareFontShape{OMX}{yhex}{m}{n}{<->yhcmex10}{}
\DeclareSymbolFont{yhlargesymbols}{OMX}{yhex}{m}{n}
\DeclareMathAccent{\wideparen}{\mathord}{yhlargesymbols}{"F3}
\newcommand{\naturals}{\mathbb{N}}
\newcommand{\integers}{\mathbb{Z}}
\newcommand{\reals}{\mathbb{R}}
\newcommand{\bone}{\textbf{1}}
\newcommand{\coss}[1]{\cos\left(#1\right)}
\newcommand{\sinn}[1]{\sin\left(#1\right)}
\newcommand{\norm}[1]{\left\lVert#1\right\rVert}
\newcommand{\ignore}[1]{}
\newcommand{\ignoreproof}[1]{}
\newcommand{\dd}{\,\mathrm{d}}
\newcommand{\msp}{\textsc{MSP}}
\DeclareMathOperator*{\argmin}{arg\,min}
\title{
Spirals and Beyond:\\
Competitive Plane Search with Multi-Speed Agents\thanks{This paper is the full version of a work accepted to the 17th Latin American Theoretical Informatics Symposium (LATIN 2026), to be held April~13--17,~2026, in Florianópolis, Brazil~\cite{GJMSpiralLatin26}.}
}
\author{
Konstantinos Georgiou\orcidID{0009-0007-9677-341X}\thanks{Research supported in part by TMU Faculty of Science Dean’s Research Fund 2024, and by NSERC of Canada. \texttt{konstantinos@torontomu.ca}}
\and
Caleb Jones\thanks{Research supported in part by NSERC of Canada. \texttt{caleb.w.jones@torontomu.ca}}
\and
Matthew Madej\thanks{\texttt{matthew.madej@torontomu.ca}}
}
\institute{
Department of Mathematics,
Toronto Metropolitan University,
Toronto, ON, Canada
}
\date{}  
\begin{document}
\maketitle
\thispagestyle{empty}

\begin{abstract}
We consider the problem of minimizing the worst-case search time for a hidden point target in the plane using multiple mobile agents of differing speeds, all starting from a common origin. The search time is \emph{normalized} by the target's distance to the origin, following the standard convention in competitive analysis. The objective is to minimize the maximum such normalized time over all possible target locations, known as the \emph{search cost}. As a foundation for our main results, we extend the known result for a single unit-speed agent, which achieves a provably optimal search cost of approximately $\mathcal{U}_1 = 17.28935$ via a logarithmic spiral, to the setting of $n$ unit-speed agents. We design a symmetric spiral-based algorithm in which each agent follows a logarithmic spiral offset by \emph{equal} angular phases. A key feature of this construction is that the resulting search cost is independent of which agent finds the target. We provide a closed-form upper bound $\mathcal{U}_n$ for the search cost in this setting, which forms the basis for our general result.

Our main contribution is an upper bound on the worst-case normalized search time for $n$ agents with \emph{arbitrary speeds}. Inspired by natural optimality conditions, we describe an algorithmic framework that selects a subset of the agents and assigns spiral-type trajectories with \emph{speed-dependent angular offsets}, ensuring that the search cost is independent of which agent reaches the target. A key corollary shows that a team of $n$ multi-speed agents, with the fastest agent having speed 1, can outperform $k$ unit-speed agents (i.e., achieve cost below $\mathcal{U}_k$) if the geometric mean of their speeds exceeds $\mathcal{U}_n / \mathcal{U}_k$. An implication of this result is that slow agents may be excluded from the spiral search if their presence lowers the geometric mean too much. This motivates the study of non-spiral algorithms that can do better in such cases. We establish new upper bounds for point search in \emph{cones} and \emph{conic complements} using a single unit-speed agent, which may be of independent interest. These bounds are then used to design hybrid strategies that combine spiral and directional search, and we show that these can outperform the previously derived spiral-based algorithms when some agents are sufficiently slow. This suggests that spiral trajectories may not be optimal in the general multi-speed setting.
\end{abstract}

\newpage
\thispagestyle{empty}
\tableofcontents

\newpage
\pagenumbering{arabic}

\section{Introduction}

The \emph{lost-in-a-forest problem}, introduced by Bellman in 1955~\cite{bellman1956minimization}, asks how to navigate from an unknown location in a known region to its boundary while minimising expected travel time. It remains open and foundational to search theory. Many variants have since been studied, yielding notable upper bounds and generally loose lower bounds. We review selected results in Section~\ref{sec: related work}.

A central variant involves a unit-speed agent searching for a hidden \emph{shoreline} in the plane, modelled as an \emph{infinite line}. The case where the shoreline distance is known was solved in 1957~\cite{isbell1957optimal}, one of the few known optimal results. The unknown-distance case remains open, with only conjectured optimal upper bounds~\cite{baeza1988searching} and weak unconditional lower bounds~\cite{baeza1995parallel}. The best strategy known uses a \emph{logarithmic spiral}, and the best lower bound known~\cite{langetepe2012searching} is not tight. 

We consider a related variant introduced in~\cite{gal2010search}, where the target is a \emph{point} in the plane. A point is \emph{exposed} by an agent if it lies on the line segment between the agent’s current position and the origin. The goal is to design a trajectory that minimises the worst-case target exposure time, normalized by the target's distance from the origin. The logarithmic spiral was first shown to be optimal among monotone periodic trajectories~\cite{gal2010search}, and later proven unconditionally optimal in~\cite{langetepe2010optimality}.

We extend the upper bound results to a team of \emph{multi-speed agents}, a rarely studied generalization. First, we construct natural, non-intersecting logarithmic spiral trajectories where all agents incur equal cost, and identify speed regimes where agents do or do not contribute to exposing new points. As a corollary, we relate multi-speed performance to that of unit-speed agents, and give a condition, based on the geometric mean of the fastest agents’ speeds, for effective contribution. Second, motivated by the limitations of spiral trajectories in low-speed regimes, we introduce and analyze non-spiral algorithms and show that they can \emph{outperform the previously derived spirals} in such settings. This suggests that spiral-search may not always be optimal, as has been shown or conjectured in other models.
Throughout the paper, statements of optimality are with respect to the underlying point search problem as defined above, the standard performance measure, and the agent speed regimes explicitly specified in each result.

\subsection{Related Work}
\label{sec: related work}

Search and exploration problems have a long history in operations research, robotics, and theoretical computer science. Early formal models date back to Koopman~\cite{koopman1946search}, with foundational work on deterministic and probabilistic search by Beck and Bellman~\cite{Beck64,Bellman63}. Stone~\cite{stone1975theory} later introduced Bayesian models. This early work evolved into modern search theory, summarized in various surveys and books~\cite{alpern2013search,alpern2006theory,chung2011search,CGK19search,gal2010search,kranakis2024survey}.
A core concept nowadays in search theory is the competitive ratio~\cite{borodin2005online}, defined as the worst-case ratio between the cost of an online algorithm (without full information) and that of an optimal offline strategy. This and related frameworks~\cite{Lopez-Ortiz2016} have led to extensive work on both deterministic~\cite{Albers03,AS11,demaine2006online} and randomized or fault-tolerant search~\cite{CILP13,DDKPU13}.

In one dimension, the classic Cow-Path problem asks a single agent to optimally search for a point. A spiral-like, exponentially expanding strategy is optimal~\cite{baeza1988searching}. Numerous variants have been studied, including cost-based~\cite{demaine2006online}, geometric~\cite{CzyzowiczKKNOS17}, and agent-specific models~\cite{coleman2025multimodal,georgiou2024overcoming}, all employing cyclic, monotonic spirals. An exception arises in the case of probabilistically faulty agents on a half-line, where monotone spirals are suboptimal~\cite{BGMP2022pfaulty}.

In two dimensions, a closely related problem is \emph{shoreline search}, introduced by Bellman~\cite{bellman1958dynamic}, where the target is an infinite line. The case with known shoreline distance to the agent's starting position was solved by Isbell~\cite{isbell1957optimal}, while the general “lost-in-a-forest” version (where the shape of the forest is part of the input) remains open~\cite{berzsenyi1995lost,finch2004lost}, 
with specific convex environments and general shapes having also been studied~\cite{gibbs2016bellman,kubel2021approximation}. 
For unknown distances, the best strategy known for shoreline search is a logarithmic spiral with competitive ratio $13.81$~\cite{baeza1988searching,finch2005searching} and the best lower bounds known are $6.3972$ (unconditional)~\cite{baeza1995parallel} and $12.5385$ (for cyclic trajectories)~\cite{langetepe2012searching}.

When the target is a point, which is the focus of this work, a logarithmic spiral achieves a search cost of $17.289$, and has been proven to be optimal~\cite{langetepe2010optimality}. This result resolved a long-standing conjecture that spiral trajectories are best among all possible strategies. Spiral strategies in planar domains have been extensively studied~\cite{finch2005logarithmic,finch2005searching}. Gal~\cite{gal2010search} established optimality within the class of monotone, periodic strategies, and Langetepe~\cite{langetepe2010optimality} later proved unconditional optimality, confirming the conjecture. For known-distance variants, recent work provides exact and average-case optimal results~\cite{ConleyGeorgiou25Inspection}.
Related geometric search models also admit non-spiral optimal trajectories; for example, recent work based on Fermat’s principle establishes optimal average-case strategies for disk inspection in a different setting~\cite{GeorgiouFermatSTACS26}.

Collaborative search with multiple agents has also received significant attention,
see \cite{Albers00,CFMS10,CILP13} and \cite{DFKNS07,DKP91,DDKPU13,HKLT13,lopez2015optimal}.
The well-known ANTS problem~\cite{feinerman2012collaborative} considers identical agents searching for an adversarially placed treasure in the plane, with performance depending on agent count and target distance. This extends earlier distance-cost models~\cite{baeza1993searching} and connects to other frameworks such as layered graph search~\cite{papadimitriou1991shortest} and ray-based exploration~\cite{angelopoulos2011multi}.
When two-dimensional search involves geometric specifications, the problem becomes more complex. Early work considered polygons~\cite{FeketeGK10} and disks~\cite{CzyzowiczGGKMP14}, later extending to triangles, squares, and $\ell_p$ balls~\cite{BagheriNO19,CzyzowiczKKNOS15,georgiou2022triangle,GLLKllp2023}. Multi-agent extensions introduce coordination and fault-tolerance challenges~\cite{behrouz2023byzantine,BGMP2022pfaulty,CzyzowiczGGKKRW17}, with some studies exploring worst- and average-case trade-offs~\cite{chuangpishit2020multi}.

In multi-agent shoreline search, spiral strategies can improve the search performance. A double spiral achieves a competitive ratio of $5.2644$ for $n=2$~\cite{baeza1995parallel}, and for $n \geq 3$, a ray-based strategy achieves $1/\cos(\pi/n)$ with matching lower bounds~\cite{AcharjeeGKS19,dobrev2020improved}. Partial-information variants (e.g., known slope or orientation) yield diverse bounds~\cite{gluss1961alternative,gluss1961minimax,jez2009two}. Other models consider circular targets~\cite{gluss1961minimax}, memory-limited agents on grids~\cite{emek2015many,Emekicalp2014,fricke2016distributed,LangnerKUW15}, and navigation with cost-information trade-offs~\cite{bouchard2018deterministic,pelc2018reaching,pelc2018information,pelc2019cost}.

Heterogeneous agent capabilities, such as communication~\cite{georgiou2022evacuation} and speed~\cite{BampasCGIKKP19}, introduce further complexity. Of particular relevance to this work is heterogeneity in agent speeds, which has been rarely studied. Related results include multi-agent evacuation~\cite{lamprou2016fast}, energy-efficient sensing~\cite{WIFBZP11}, gathering~\cite{beauquier2010utilizing}, patrolling~\cite{CGKK11,KK12}, and one-dimensional point search in bounded domains~\cite{czyzowicz2014beachcombers}.

\section{Definitions and Main Results}

\subsection{Problem Definition, Notation \& Terminology}

We denote by $\reals_*$ the reals excluding $0$, and hence by $\reals_*^2$ the Euclidean plane excluding the \emph{origin} $O = (0,0)$. 
A point in the plane, written in polar coordinates, is denoted by $\langle \rho, \chi \rangle$ (to differentiate from Cartesian coordinates), where $\rho$ is the radial coordinate and $\chi$ the angular coordinate (in radians). 
Thus, $\langle \rho, \chi \rangle = \rho (\coss{\chi},\sinn{\chi})$, and in particular, $O = \langle 0, 0 \rangle$. 
We write $\bone$ to denote a vector of all-$1$s of dimension that will be self-evident from the context. 
For $\psi \in [0,2\pi]$, we define the \emph{ray} $\mathcal L_\psi$ as the set of positive multiples of the vector $\langle 1, \psi \rangle$, that is 
$$
\mathcal L_\psi := \{ \langle x, \psi \rangle : x \in \reals_{> 0} \}.
$$
For a \emph{target} point $A \in \reals^2_*$, we denote by $\norm{A}$ its Euclidean norm, that is its distance from the origin, which is always positive. We say that a point $P = \langle x, \psi \rangle \in \reals_*^2$ \emph{exposes} $A = \langle y, \psi \rangle \in \reals_*^2$ if $x \geq y$, meaning $A$ is a convex combination of $O$ and $P$. 
We use the abbreviation $[n] = \{0,1,2,\ldots,n-1\}$ and refer to $i \in [n]$ as \emph{mobile agent} $i$ (or simply agent-$i$). By convention, agent-$n$ refers to agent-$0$.

We define the problem under consideration as the \emph{Multi-Speed Point Search on $\mathcal D$ by $n$ Agents}, denoted by $\mathcal D\text{-}\msp_n(c)$ (or simply $\msp_n(c)$ when $\mathcal D$ is clear from context), where $c \in \reals^n$ is the input and $\mathcal D \subseteq \reals^2_*$ is the \emph{search domain}. 
The parameter $c_i$ represents the speed of mobile agent $i$, with $c_n = c_0$ being the speed of agent-$0$. We assume speeds are given in non-increasing order, i.e., $c_0 \geq c_1 \geq \dots \geq c_{n-1}$, to simplify expressions in our results. Without loss of generality (after appropriate scaling), we assume $c_0 = 1$ is the highest speed. 
We write $\mathcal{G}_l$ for the geometric mean of the speeds of agents $0$ through $l-1$, that is, 
$$\mathcal{G}_l := \left( \prod\nolimits_{i \in [l]} c_i \right)^{1/l}.$$
Observe that $c_l \leq \mathcal G_l$ for every $l = 1, \dots, n$.

A \emph{feasible solution} to $\mathcal D\text{-}\msp_n(c)$ consists of continuous trajectories $\{\tau_i\}_{i \in [n]}$ in $\reals^2$, where each agent $i \in [n]$ follows a trajectory $\tau_i$ that depends on both $c$ and $\mathcal D$, satisfying:  \\
- All agents start at the origin.  \\
- Agent $i$ moves along $\tau_i$ at a speed of at most $c_i$.  \\
- Each target $A \in \mathcal D$ is (eventually) exposed by at least one agent. \\
For simplicity, we say that agent $i$ \emph{exposes} a point $A \in \mathcal D$ if some point on $\tau_i$ exposes $A$.  
For each $A \in \mathcal D$, let $t_A$ be the earliest time some agent exposes $A$. The goal is to design trajectories that minimize the \emph{search cost}, defined as  
$$
\sup_{A \in \mathcal D} \frac{t_A}{\norm{A}},
$$
following the normalization of competitive analysis.\footnote{
The search cost corresponds to the notion of competitive ratio in competitive analysis. In some settings, the competitive ratio is defined with an additive term, or with an explicit restriction that excludes instances in which the reference cost becomes arbitrarily small (in our case, targets at very small distance $\norm{A}$ from the origin), as a way to deal with pathological inputs. In geometric search, efficiency is instead measured directly by the ratio defining the search cost, as we do in this work. In our formulation, excluding the origin from the search domain is a sufficient and clean way to make this ratio well defined. This avoids both imposing an arbitrary positive lower bound on the target distance and introducing an additive term in the competitive ratio, which would be an unnecessary artefact in this model.} 
Note that the origin is excluded from the search domain $\mathcal D$ so as to have a well-defined search cost. 
Intuitively, the denominator $\norm{A}$ corresponds to the time required by an agent that knows the target location in advance to reach it, while the normalization yields a meaningful worst case measure by comparing the search time as a function of the distance of the hidden target from the origin, which would otherwise be unbounded.

In this work we consider the following search domains $\mathcal D$: \\
- The \emph{Plane Excluding the Origin} defined as 
    $
    \mathcal{P} := \bigcup_{\psi \in [0,2\pi)} \mathcal{L}_\psi = \mathbb{R}^2_*
    $, \\
- the \emph{Cone of Angle $\phi$} 
defined as
    $
    \mathcal{C}_\phi := \bigcup_{\psi \in [0,\phi)} \mathcal{L}_\psi
    $, 
where $\phi \in (0, \pi)$, and \\
- the \emph{Conic Complement of Angle $\phi$} defined as 
    $
    \mathcal{W}_\phi := \mathcal{P} \setminus \mathcal{C}_\phi
    $, where $\phi \in (0, \pi)$.

The notion of exposure corresponds to a well-known search/visibility criterion; a mobile agent sees a point if and only if the point lies between the agent's location and the origin. 
Agents are modelled as points and do not interfere with one another; in particular, the exposure condition is defined per agent and is not affected by the locations of other agents (there is no notion of occlusion or blocking in the model).
Our problem is to design trajectories for $n$ agents, considering their speed constraints, to locate a hidden target while minimizing the worst-case time required to expose a point relative to its distance from the origin.  

Since trajectories are only relevant until a target is exposed, we assume each agent $i \in [n]$ moves at full speed $c_i$. 
Thus, we model agent trajectories as continuous, piecewise differentiable functions
$$
\tau_i:\reals\to\reals^2,
$$
where each $\tau_i$ satisfies
$$
\lim_{t\to -\infty}\tau_i(t)=(0,0),
$$
with the understanding that $t$ is an abstract parameter and does not represent physical time.
This convention should be read as a normalization, not as a physical requirement that the agent acts for an infinite amount of time. The parameter $t$ is a curve parameter, and physical time is obtained from arc length; saying that execution begins at $t\to-\infty$ simply means that the trajectory can be extended backwards so that, for every $\varepsilon>0$, the agent reaches distance at most $\varepsilon$ from the origin at some parameter value. In particular, this avoids introducing an arbitrary positive lower bound on the radius (or an arbitrary starting offset along the curve) when describing spiral-type trajectories.

We emphasize that we do not impose a common starting positive radius for the agents. For unit-speed trajectories, fixing a starting circle amounts only to a time shift and does not affect the competitive analysis, while introducing unnecessary normalization constants. More importantly, in the multi-speed setting such a requirement would be artificial: equal starting radii do not correspond to equal time or cost for agents of different speeds, and enforcing this constraint would obscure the geometric asymmetry that the algorithms in this paper are designed to exploit.

The piecewise differentiability assumption is standard in continuous motion models with bounded speed, as it ensures that instantaneous speed is well defined almost everywhere and that speed constraints can be meaningfully enforced. 
Curves $\tau_i$ are assumed to be of infinite length to guarantee that every point in the plane is exposed by some agent. That is, for each $A \in \mathcal D$, there exist $i \in [n]$, $t_A \in \reals$, and $\lambda \in [0,1]$ such that  
$
A = \lambda \tau_i(t_A).
$

Distances are measured with respect to the standard Euclidean metric on $\reals^2$, and speed bounds are interpreted with respect to arc length in this metric.
Thus, the curve length from the origin to $\tau_i(t_A)$ is  
$$
\ell_i(t_A) :=
\int_{-\infty}^{t_A}
\norm{
\tau_{i}'(t)
}
\dd t.
$$
This integral is taken over the differentiable intervals, and $\ell_i$ is defined for the fixed trajectory $\tau_i$.  
Since agent-$i$ is assumed to have speed $c_i$, the time required for agent-$i$ to reach $\tau_i(t_A)$ is  
$
\frac{\ell_i(t_A)}{c_i},
$
relating this way parameter $t_A$ to time for an agent of speed $c_i$ following trajectory $\tau_i(\cdot)$. 

From the above terminology and notation, it follows that if $t_A^i$ is the earliest time at which $\tau_i(t_A^i)$ exposes $A \in \mathcal D$ (defined as infinity if no such time exists) in 
$\mathcal D\text{-}\msp_n(c)$, then the objective of the problem is to find trajectories $\{\tau_i\}_{i\in [n]}$ minimizing the search cost  
$$
\sup_{A\in \mathcal D} 
\frac{1}{\norm{A}}
\min_{i\in [n]} 
\frac{\ell_i(t^i_A)}{c_i}.
$$

\subsection{Main Contributions \& Paper Organization}
\label{sec: main contributions}

In this section, we summarize our main contributions, focusing on a simplified presentation of our results without explicitly describing the search algorithms. 
The primary focus of our work is to establish upper bounds to the search cost of $\mathcal{P}\text{-}\msp_n(c)$, i.e. the problem of searching the plane with arbitrary speed agents. As an essential component for the follow-up results, we first consider the simpler case $\mathcal{P}\text{-}\msp_n(\bone)$, where all $n$ agents have unit speed. The proof of the following upper bound, based on a logarithmic-spiral type trajectory, generalizes the optimal solution for $\mathcal{P}\text{-}\msp_1(1)$ from \cite{langetepe2010optimality}. The notation introduced in the following theorem, as well as the derived results, are essential for presenting our more general results.

\begin{theorem}
\label{thm: spiral with n unit speed agents}
For every $n \in \naturals$, the search problem $\mathcal{P}\text{-}\msp_n(\bone)$ admits a solution with search cost  
$$
\mathcal{U}_n := \tfrac{\sqrt{1+\kappa_n^2}}{\kappa_n}e^{2\pi \kappa_n/n},
$$
where $\kappa_n$ admits a closed-form expression, obtained as the root of an explicit cubic polynomial (see Lemma~\ref{lem: closed form of k_i}).
\end{theorem}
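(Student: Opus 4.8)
The plan is to construct explicit logarithmic-spiral trajectories for the $n$ unit-speed agents and bound the resulting search cost. The key idea is symmetry: I would have each agent-$i$ follow the same logarithmic spiral shape, but rotated by an angular phase of $2\pi i/n$, so the $n$ spirals partition the angular directions evenly. Concretely, I would parametrize agent-$i$'s trajectory as $\tau_i(t) = \langle e^{\kappa_n t}, t + 2\pi i/n\rangle$ for some growth parameter $\kappa_n > 0$ to be determined, so that each agent traces a logarithmic spiral with the characteristic property that the radial coordinate grows exponentially in the angle. The single-agent optimal solution of \cite{langetepe2010optimality} is the $n=1$ special case, and the parameter $\kappa_n$ generalizes the optimal spiral growth rate.

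Next I would compute the search cost for this family. For a target $A = \langle y, \psi\rangle$, the agent that exposes it most efficiently is the one whose spiral first reaches radial distance $\geq y$ along the ray $\mathcal{L}_\psi$. Because the spirals are rotationally symmetric by $2\pi/n$, the worst-case target direction relative to the nearest spiral arm spans an angular gap of exactly $2\pi/n$, and the worst case occurs at the trailing edge of each agent's reachable sector. Computing the arclength $\ell_i(t_A)$ of a logarithmic spiral from the origin is a routine integral: since $\norm{\tau_i'(t)} = \sqrt{1+\kappa_n^2}\, e^{\kappa_n t}\cdot\text{(const)}$, the arclength up to the exposing point is $\frac{\sqrt{1+\kappa_n^2}}{\kappa_n} e^{\kappa_n t_A}$. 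Dividing by $\norm{A} = y$ and accounting for the extra angular travel of up to $2\pi/n$ needed to align with the target's ray gives the factor $e^{2\pi\kappa_n/n}$, yielding the claimed expression $\mathcal{U}_n = \frac{\sqrt{1+\kappa_n^2}}{\kappa_n} e^{2\pi\kappa_n/n}$. The crucial feature I would verify is that this cost is \emph{independent of which agent exposes the target}, which follows directly from the equal-phase symmetry of the construction.

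The value $\kappa_n$ should be chosen to minimize $\mathcal{U}_n$, so I would set $\frac{d}{d\kappa}\mathcal{U}_n = 0$ and solve the resulting transcendental optimality condition; this is presumably where Lemma~\ref{lem: closed form of k_i} supplies the closed form. Taking logarithms, minimizing $\mathcal{U}_n$ is equivalent to minimizing $\tfrac12\ln(1+\kappa^2) - \ln\kappa + 2\pi\kappa/n$, whose derivative $\frac{\kappa}{1+\kappa^2} - \frac{1}{\kappa} + \frac{2\pi}{n} = 0$ rearranges to a condition that Lemma~\ref{lem: closed form of k_i} evidently resolves explicitly. I would invoke that lemma to express $\kappa_n$ in closed form and complete the statement.

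The main obstacle I anticipate is \emph{correctly identifying the worst-case target} and proving the supremum is actually attained with the claimed value rather than being strictly larger for some awkwardly placed point. In particular, I must confirm that no target forces an agent to travel more than the angular offset $2\pi/n$ past the optimal arm — that is, that the sectors covered by consecutive agents tile the plane without leaving a worse gap — and that the monotonicity of the spiral guarantees the first exposure happens precisely at the computed arclength. Verifying this tiling/coverage argument rigorously, together with checking that the optimization over $\kappa_n$ indeed gives a global (not merely local) minimum of the search cost, is the technically delicate heart of the proof; the arclength integral itself is routine once the geometry is pinned down.
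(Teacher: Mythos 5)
Your proposal follows essentially the same route as the paper: equally phase-shifted logarithmic spirals $\langle e^{kt}, t \pm 2\pi i/n\rangle$, the standard arclength integral giving $\frac{\sqrt{1+k^2}}{k}e^{kt}$, a worst-case target just beyond the previous agent's crossing of the ray (angular gap $2\pi/n$, hence the factor $e^{2\pi k/n}$), and optimization of $k$ via the condition $\kappa^3+\kappa = n/2\pi$, which is exactly what Lemma~\ref{lem: closed form of k_i} resolves by Cardano's formula (including the convexity check guaranteeing a global minimum). The delicate points you flag are precisely the ones the paper's Lemmas~\ref{lem: feasible of uniform spiral} and~\ref{lem: unit speed spiral bounds} address.
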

The proof of Theorem~\ref{thm: spiral with n unit speed agents} is given in Section~\ref{sec: n unit speed agents} and is a simple generalization of the single-agent search analysis.
Some values for $\mathcal U_n$ are given in Table~\ref{tab:transposed} on page~\pageref{tab:transposed}. 
Quite importantly, we also show that $\mathcal{U}_n$ is decreasing in $n$ and approaches $1$ as $n \to \infty$. 

This motivates the question of how the performance of $n$ multi-speed agents compares to that of $n$ idealized unit-speed agents in plane search. The following theorem provides one of our \emph{main technical contributions} by establishing a general upper bound, expressed in terms of the previously established quantities $\mathcal{U}_i$ and $\kappa_i$.
We remind the reader that $\mathcal{G}_\ell$ denotes the geometric mean of the speeds of the fastest $\ell$ agents $[\ell]$. 
For convenience, we set $\mathcal{G}_0 = 1$ and define the search cost $\mathcal{U}_0$ for zero agents as positive infinity.
\begin{theorem}
\label{thm: simplified general spiral upper bound}
$\mathcal{P}\text{-}\msp_n(c)$ admits a solution with search cost
$$
\min_{i=1,\ldots,n } \tfrac{\mathcal{U}_i }{ \mathcal{G}_i}.
$$
In particular, this upper bound is achieved by utilizing only the agents $[\ell+1] = \{0, 1, \ldots, \ell\}$ to expose targets, where $\ell \in [n]$ is the largest index (corresponding to the slowest participating agent with speed $c_\ell$) satisfying
$$
c_\ell \geq \left(\mathcal{U}_{\ell+1}/\mathcal{U}_{\ell}\right)^{\ell+1}   
\mathcal{G}_\ell.
$$
\end{theorem}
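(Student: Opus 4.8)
The plan is to first upgrade the symmetric unit-speed construction of Theorem~\ref{thm: spiral with n unit speed agents} to a \emph{speed-weighted} one that uses only the fastest $k$ agents, for each fixed $k\in\{1,\dots,n\}$, and to show that its cost is exactly $\mathcal{U}_k/\mathcal{G}_k$. The stated bound then follows by taking the best such $k$, and the ``in particular'' clause is the optimization of the one-parameter family $k\mapsto \mathcal U_k/\mathcal G_k$.

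For the construction I would have all $k$ participating agents trace a common logarithmic spiral of rate $\kappa_k$ (the rate optimizing the unit-speed expression $\mathcal U_k$), each moving at its own speed $c_i$ but started with a \emph{speed-dependent angular phase}. Passing to the log-radius coordinate $x=\ln\rho$, the crossings of a fixed ray direction by agent $i$ become an arithmetic progression of step $2\pi\kappa_k$ whose offset is the agent's phase. A target placed just beyond a crossing is next exposed by the agent owning the following crossing, at normalized cost $\tfrac{\sqrt{1+\kappa_k^2}}{\kappa_k c_i}e^{g_i}$, where $g_i$ is the $x$-gap preceding agent $i$'s crossing. Forcing all these per-agent costs to be \emph{equal} — which is exactly what makes the search cost independent of which agent finds the target — pins down $g_i=\tfrac{2\pi\kappa_k}{k}+\ln(c_i/\mathcal G_k)$. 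Since the gaps sum to the period $2\pi\kappa_k$ and $\sum_{i\in[k]}\ln c_i=k\ln\mathcal G_k$, the common cost solves out to $\tfrac{1}{\mathcal G_k}\cdot\tfrac{\sqrt{1+\kappa_k^2}}{\kappa_k}e^{2\pi\kappa_k/k}=\mathcal U_k/\mathcal G_k$. This is the step I expect to carry the real content: one must justify the worst-case exposure reduction, confirm that the optimal rate is still $\kappa_k$ after the speed reweighting, and, crucially, verify the feasibility constraint $g_i\ge 0$, i.e. $c_i\ge \mathcal G_k\,e^{-2\pi\kappa_k/k}$ for the slowest included agent; when this fails that agent admits no valid phase and must be dropped, which is the mechanism behind excluding slow agents.

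It then remains to optimize $f(k):=\mathcal U_k/\mathcal G_k$ over $k$. Using $\mathcal G_{k+1}^{k+1}=c_k\,\mathcal G_k^{k}$, hence $\mathcal G_k/\mathcal G_{k+1}=(\mathcal G_k/c_k)^{1/(k+1)}$, one gets
$$
\frac{f(k+1)}{f(k)}=\frac{\mathcal U_{k+1}}{\mathcal U_k}\left(\frac{\mathcal G_k}{c_k}\right)^{1/(k+1)},
$$
so that $f(k+1)\le f(k)$ is equivalent to $c_k\ge(\mathcal U_{k+1}/\mathcal U_k)^{k+1}\mathcal G_k$, which is precisely the displayed condition with $k=\ell$. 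Thus appending agent $\ell$ helps exactly when this inequality holds, and the proposed index $\ell+1$ is where the marginal gains stop.

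The main obstacle, beyond the feasibility check, is to show that this \emph{marginal} (local) condition actually identifies the \emph{global} minimizer of $f$. Because the speeds are only assumed non-increasing, the threshold $(\mathcal U_{k+1}/\mathcal U_k)^{k+1}\mathcal G_k$ need not be monotone in $k$ (a sharp speed drop lowers $\mathcal G_k$ and hence the threshold), so $f$ is not unimodal for arbitrary inputs. I would therefore either (i) prove, using the convexity of $k\mapsto\ln\mathcal U_k$ together with $c_0\ge c_1\ge\cdots$, that the indices satisfying the condition form an initial segment — so that ``largest index satisfying'' coincides with the greedy stopping point and genuinely pins down the global optimum — or (ii) read $\ell+1$ as the first count at which the marginal condition fails and verify directly that no later dip of $f$ can undercut $f(\ell+1)$. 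Establishing this monotonicity cleanly is the delicate part; the remaining manipulations are routine.
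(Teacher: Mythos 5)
Your construction and cost computation coincide with the paper's Off-Set trajectory (Algorithm~\ref{algo: Off-Set Trajectory} with the equalizing phases of Lemma~\ref{lem: def of phi that make all agents expose points and cost ind of agent}), and your derivation of the marginal condition $f(k+1)\le f(k)\iff c_k\ge(\mathcal U_{k+1}/\mathcal U_k)^{k+1}\mathcal G_k$ is exactly the paper's. The genuine gap is that you never close the loop between optimality and feasibility. To achieve the headline bound $\min_i \mathcal U_i/\mathcal G_i$ you must run the construction with $\ell+1$ agents where $\ell+1$ is the minimizer of $f$, and for that you must verify your own feasibility constraint $c_\ell\ge \mathcal G_{\ell+1}e^{-2\pi\kappa_{\ell+1}/(\ell+1)}$ at that specific index. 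You flag this check but do not perform it, and it is the real technical content of the paper's proof: from the argmin property one only gets $f(\ell+1)\le f(\ell)$, i.e.\ $c_\ell\ge(\mathcal U_{\ell+1}/\mathcal U_\ell)^{\ell+1}\mathcal G_\ell$, and one must then establish the purely analytic inequality $(\mathcal U_{\ell+1}/\mathcal U_\ell)^{\ell}\ge e^{-2\pi\kappa_{\ell+1}/(\ell+1)}$ for all $\ell\ge 1$ (Lemma~\ref{lem: critical ratio at least 1}, proved via the monotonicity and concavity of $\kappa_n$ from Proposition~\ref{prop: monotonicity of Cn}). Without this step the cost-minimizing team size could in principle admit no valid phases, and the displayed bound would not follow.

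On your second concern: you are right that the condition $c_k\ge(\mathcal U_{k+1}/\mathcal U_k)^{k+1}\mathcal G_k$ need not carve out an initial segment and that $f$ need not be unimodal. For instance, with $c=(1,0.2,0.2)$ one gets $f(1)\approx 17.29$, $f(2)\approx 19.97$, $f(3)\approx 18.19$, so the largest index satisfying the condition is $\ell=2$ while the minimizer of $f$ is $1$; a variant with $c=(1,0.27,0.17,0.17)$ makes the greedy ``stop at the first failure'' rule stop at a non-global local minimum. Hence neither of your proposed repairs (i) nor (ii) can succeed, and the ``largest index'' phrasing in the theorem's second sentence is not literally equivalent to the argmin. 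The paper, however, needs no local-to-global argument: it \emph{defines} $\ell$ by $\ell:=\argmin_{i\in[n]}\mathcal U_{i+1}/\mathcal G_{i+1}$ and uses only the single comparison $f(\ell+1)\le f(\ell)$, valid at any argmin, to trigger the feasibility lemma above. If you replace your stopping rule by the argmin and supply the missing analytic lemma, your argument becomes the paper's proof.
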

The proof of Theorem~\ref{thm: simplified general spiral upper bound} appears in Section~\ref{sec: arbitrary speeds spiral},
and more specifically as the corollary of a sequence of lemmata, in Section~\ref{sec: simplified general spiral upper bound}. 
The proof constructs logarithmic-spiral trajectories tailored to the agents' speeds.  
Crucially, these trajectories introduce enough degrees of freedom to enable a competitive analysis that is agent-independent, a defining optimality condition for agents that contribute to search.
Note that ordering the speeds in non-increasing order simplifies the theorem's statement, as it maximizes the geometric mean over any subset of agents $[j]$, where $j \leq n$.  
Since $c_\ell \leq \mathcal{G}_\ell$ for all $\ell \in [n]$, Theorem~\ref{thm: simplified general spiral upper bound} implies that an agent with speed $c_\ell$ contributes to exposing new targets only if $c_\ell$ is \emph{sufficiently} large relative to the geometric mean of the speeds of all faster agents.  
This threshold is quantified by the factor 
$$\left(\mathcal{U}_{\ell+1}/\mathcal{U}_{\ell}\right)^{\ell+1} =1-\Theta\left( \ell^{-2/3} \right).$$
These asymptotics will follow by the formula of $\mathcal U_\ell$ from Theorem~\ref{thm: spiral with n unit speed agents} and Proposition~\ref{prop: monotonicity of Cn} on page~\pageref{prop: monotonicity of Cn} regarding the behaviour of $\kappa_n$, with representative numerical values provided in Table~\ref{tab: geometric mean factor}.  
The following corollary follows immediately from Theorem~\ref{thm: simplified general spiral upper bound}.
\begin{table}[h]
    \centering
\begin{tabular}{|c|c|c|c|c|c|c|c|c|c|}
\hline
$\ell$
& 1
& 2
& 3
& 4
& 5
& 10
& 100
& 1000
\\ \hline
$p_\ell$
& 0.26668
& 0.33831
& 0.38602
& 0.42395
& 0.45579
& 0.56406
& 0.86070
& 0.96689
\\ \hline
\end{tabular}
\caption{Some indicative numerical values of 
$p_\ell :=
\left(\mathcal{U}_{\ell+1}/\mathcal{U}_{\ell}\right)^{\ell+1}
$
as it appears in Theorem~\ref{thm: simplified general spiral upper bound}.
}
\label{tab: geometric mean factor}
\end{table}
\begin{corollary}
\label{cor: bound on geometric mean}
Searching for a point in the plane with $n$ agents of arbitrary speeds $c \in \mathbb{R}^n$ 
can be performed faster 
than searching with $k \leq n$ unit-speed agents, 
assuming that the geometric mean of the $n$ agents' speeds is at least  
$
\mathcal{U}_n / \mathcal{U}_k,
$
where $\mathcal{U}_i$ is the search cost for $i$ unit-speed agents, as given in Theorem~\ref{thm: spiral with n unit speed agents}.
\end{corollary}

Next, we explore alternative search trajectories for searching the plane, which, for certain agent speed ranges, strictly improve the upper bounds of Theorem~\ref{thm: simplified general spiral upper bound} obtained from spiral-type trajectories.  
To this end, we establish upper bounds for searching cones and conic complements, i.e., the domains $\mathcal{C}_\phi$ and $\mathcal{W}_\phi$ for $\phi \in (0,\pi)$, which we believe are of independent interest. Our contributions focus on proposing search trajectories and analyzing their competitive performance for single unit-speed agents.  
Due to the technical nature of the results, the derived upper bounds are expressed in terms of the minimum values of certain functions parameterized by the angle $\phi$.  
Numerical evaluations of these minima, for carefully chosen $\phi$, demonstrate how to strictly improve the upper bounds of Theorem~\ref{thm: simplified general spiral upper bound} for $\mathcal{P}\text{-}\msp_2(1,c)$ within a specific range of speeds $c \in (0,1)$.  

We begin by analyzing upper bounds for searching for a point in an angle-$\phi$ cone with a unit-speed agent.  

\begin{theorem}
\label{thm: cone search upper bound}
For $\phi \in (0,\pi)$, the problem $\mathcal C_\phi\text{-}\msp_1(1)$ admits a solution with search cost  
$$
\coss{\phi/2} \cdot \mathcal F_\phi,
$$
where $\mathcal F_\phi$ is the minimum of the function  
$$
\left(\frac{\sinn{\theta}}{\sinn{\theta - \phi}}\right)^2 \cdot \frac{1}{\coss{\theta - \phi/2}}
$$
over the interval $\theta \in (\phi , \pi/2 + \phi/2)$.  
\end{theorem}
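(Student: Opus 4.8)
The plan is to realize the search by a self-similar polygonal spiral that bounces between the two bounding rays $\mathcal L_0$ and $\mathcal L_\phi$ of the cone, with the distance from the origin growing geometrically at every bounce. Fix an angle $\theta$ and consider the line whose foot of perpendicular from $O$ points in direction $\mu := \theta - \pi/2$; along such a line the polar radius obeys $\rho(\psi) = h/\cos(\psi - \mu)$, so a chord of it running from $\mathcal L_0$ to $\mathcal L_\phi$ has endpoint-radius ratio $g := \rho(\phi)/\rho(0) = \cos\mu/\cos(\phi-\mu) = \sin\theta/\sin(\theta-\phi)$. I would take the trajectory to be the sequence of chords $P_0 P_1 P_2 \cdots$ with $P_{2k}\in\mathcal L_0$ and $P_{2k+1}\in\mathcal L_\phi$, each chord being the mirror image (across the cone's bisector $\mathcal L_{\phi/2}$) of the previous one, rescaled so that it continues from the previous endpoint. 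This forces $\|P_{k+1}\| = g\|P_k\|$, makes the whole trajectory invariant under scaling by $g^2$, and (since the cone is convex for $\phi<\pi$) keeps every chord inside $\mathcal C_\phi$ while crossing each intermediate ray exactly once.

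Next I would use self-similarity to reduce the search cost to a scale-free quantity. Along the $k$-th chord the crossing radius of a ray $\mathcal L_\psi$ is an explicit multiple of $h$, and for each fixed $\psi$ the successive crossing radii form the increasing sequence $\rho_1(\psi),\rho_2(\psi),g^2\rho_1(\psi),g^2\rho_2(\psi),\dots$; monotonicity, which I would read off from $\rho_2/\rho_1 = g\cos(\psi-\mu)/\cos(\phi-\mu-\psi)$, guarantees that a target on $\mathcal L_\psi$ is first exposed exactly at the next crossing of that ray. The curve length from $O$ to $P_k$ is the convergent geometric series $\left(\sum_{j\ge 1}g^{-j}\right)L_1\,g^{k} = L_1 g^k/(g-1)$, where $L_1=\|P_0P_1\|$, so the length to reach any crossing is explicit. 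A target placed just beyond the boundary vertex $P_1\in\mathcal L_\phi$ must wait until the trajectory returns to $\mathcal L_\phi$ at $P_3$ two chords later, giving worst-case ratio $\dfrac{\text{length to }P_3}{\|P_1\|} = \dfrac{L_1 g^2}{(g-1)\|P_0\|}$; the symmetric computation on $\mathcal L_0$ yields the same value.

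This ratio simplifies to the stated closed form by pure trigonometry. Writing $R=\|P_0\|$, one has $L_1/R = \sqrt{1+g^2-2g\cos\phi}$, and the identity $\sin^2\theta + \sin^2(\theta-\phi) - 2\sin\theta\sin(\theta-\phi)\cos\phi = \sin^2\phi$ (a short product-to-sum computation) gives $L_1/R = \sin\phi/\sin(\theta-\phi)$. Combined with $\sin\theta-\sin(\theta-\phi) = 2\cos(\theta-\phi/2)\sin(\phi/2)$, this yields
\[
\frac{L_1 g^2}{(g-1)R} = \frac{\sin\phi}{2\sin(\phi/2)}\left(\frac{\sin\theta}{\sin(\theta-\phi)}\right)^2\frac{1}{\cos(\theta-\phi/2)} .
\]
Minimizing the $\theta$-dependent factor over the admissible range $\theta\in(\phi,\pi/2+\phi/2)$ — which corresponds to $g\in(1,\infty)$ and a genuinely expanding zigzag, the two endpoints forcing $g\to\infty$ and $g\to1$ and hence the cost to $+\infty$ — defines $\mathcal F_\phi$ and leaves the constant $\frac{\sin\phi}{2\sin(\phi/2)}$ out front, which is exactly the claimed bound.

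The main obstacle is to justify that the supremum over all targets is attained (in the limit) on the boundary rays rather than on some interior ray $\mathcal L_\psi$. For an interior ray there are two candidate worst targets per period — one waiting across a single chord (gap $\rho_2/\rho_1$) and one across the complementary chord (gap $g^2\rho_1/\rho_2$) — and I would have to bound the two resulting arc-length-to-distance ratios and show each is at most $L_1 g^2/((g-1)R)$ for every $\psi\in(0,\phi)$. A check at the bisector $\psi=\phi/2$, where both gaps equal $g$, already gives a strictly smaller ratio, and as $\psi\to\phi^-$ the interior estimate degenerates continuously into the boundary one, so the boundary is the expected maximizer; turning this into a uniform inequality over $\psi$ is the one genuinely nonroutine step. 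The remaining verifications — monotonicity of the crossing radii, containment of each chord in the convex cone, and that the admissible parameter range is exactly $(\phi,\pi/2+\phi/2)$ — are elementary.
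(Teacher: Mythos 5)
Your construction coincides with the paper's: the bouncing polygonal spiral of Algorithm~\ref{algo: bouncing} with vertex radii growing by the factor $g=\beta=\sin\theta/\sin(\theta-\phi)$ per chord (this is exactly the sequence $q_i=\beta^i$ of Lemma~\ref{lem: choice of sequence g}), the same geometric-series arc length, the same worst-case witness placed just beyond a vertex on a bounding ray and forced to wait two chords, and the same trigonometric simplification via $\sin\theta-\sin(\theta-\phi)=2\cos(\theta-\phi/2)\sin(\phi/2)$. All of that checks out and reproduces the closed form $\frac{\sin\phi}{2\sin(\phi/2)}\cdot\bigl(\frac{\sin\theta}{\sin(\theta-\phi)}\bigr)^2\cdot\frac{1}{\cos(\theta-\phi/2)}$.

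The genuine gap is the one you flag yourself and then leave open: you never prove that targets on interior rays $\mathcal L_\psi$, $\psi\in(0,\phi)$, incur a normalized exposure time no larger than the boundary value $\alpha\beta^3/(\beta-1)$. A spot check at the bisector plus a continuity remark as $\psi\to\phi^-$ is not a proof that the supremum over all targets is attained at the boundary vertices; without a uniform inequality in $\psi$ your computation gives only the cost of one family of witnesses, not an upper bound on the search cost of the trajectory, so the theorem is not established. The paper closes exactly this step with a coarser argument that avoids per-ray analysis (Lemma~\ref{lem: general bouncing performance}): the triangles $\Delta_i$ with vertices $\Gamma_i,\Gamma_{i+1},\Gamma_{i+2}$ tile $\mathcal C_\phi$; every target $A\in\Delta_i$ is exposed by the time the agent reaches $\Gamma_{i+2}$ (by then all of $\bigcup_{j\leq i}\Delta_j$ has been swept) and satisfies $\norm{A}\geq\norm{\Gamma_i}$, so every target's ratio is dominated by $\frac{1}{\norm{\Gamma_i}}\sum_{j\leq i+2}\norm{\Gamma_j-\Gamma_{j-1}}=\alpha\beta^3/(\beta-1)$, which is precisely your boundary quantity. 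Importing that decomposition would close your gap in a few lines. One caveat worth recording: the inequality $\norm{A}\geq\norm{\Gamma_i}$ on $\Delta_i$ is immediate only when the foot of the perpendicular from $O$ to the chord $\Gamma_i\Gamma_{i+1}$ lies outside the segment, i.e.\ when $\theta\leq\pi/2$; for $\theta>\pi/2$ the closest point of $\Delta_i$ to the origin sits in the interior of that chord at distance $\norm{\Gamma_i}\sin\theta$, and one must then argue (essentially via your per-ray crossing radii) that such targets are exposed strictly before the agent reaches $\Gamma_{i+2}$. So the difficulty you isolated is the real crux for part of the parameter range, and your write-up needs either the paper's triangle argument (with that case handled) or the completed uniform-in-$\psi$ estimate to count as a proof.
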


We prove Theorem~\ref{thm: cone search upper bound} in Section~\ref{sec: Cone Search}.  
Next, we derive upper bounds for searching the conic complement of angle $\phi$ with a unit-speed agent, for $\phi \in (0,\pi)$. 
Importantly, the associated search cost will be strictly smaller than $\mathcal U_1$, which is optimal for searching the plane with a unit-speed agent. The following theorem establishes this result.

\begin{theorem}
\label{thm: wedge search upper bound}
For $\phi \in (0,\pi)$ and every $\lambda >1$, define  
$$
k_\lambda
=
\frac{\ln \lambda}{2\pi}
+
\frac{\lambda -1}{\sqrt{2\lambda(1-\coss{\phi}) }}
\left(
1-\frac{\phi}{2\pi}
\right).
$$
Then, the problem $\mathcal W_\phi\text{-}\msp_1(1)$ admits a solution with search cost  
$\mathcal R_\phi$, where $\mathcal R_\phi$ is the minimum of the function
$$
\frac{\sqrt{\lambda ^2-2 \lambda  \cos (\phi )+1}}{\lambda -1}
e^{2\pi k_\lambda}
$$
in the interval $\lambda \in (1,\infty)$. 
Moreover, for every $\phi \in (0,\pi)$ this value is strictly smaller than $\mathcal U_1$.
\end{theorem}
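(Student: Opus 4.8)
The plan is to construct an explicit one-parameter family of self-similar trajectories indexed by $\lambda \in (1,\infty)$, analyze the worst-case ratio of each, and then take the infimum over $\lambda$. In each \emph{period}, the agent first traces a logarithmic-spiral arc of angular width $2\pi-\phi$ that sweeps the complement's angular range $[\phi,2\pi)$, and then follows a straight \emph{chord} across the excluded wedge, which repositions it on the ray $\mathcal L_\phi$ at a larger radius; the whole pattern then repeats, dilated by a fixed factor $\Lambda$. The decisive design choice is to fix the spiral's per-radian growth rate at $\beta_\lambda := \frac{\lambda-1}{\sqrt{2\lambda(1-\cos\phi)}}$, chosen so that the spiral's arclength-to-radius ratio $\frac{\sqrt{1+\beta_\lambda^2}}{\beta_\lambda}$ equals the chord's length-to-radial-gain ratio $\frac{\sqrt{\lambda^2-2\lambda\cos\phi+1}}{\lambda-1}$; the algebraic identity $(\lambda^2-2\lambda\cos\phi+1)-(\lambda-1)^2 = 2\lambda(1-\cos\phi)$ shows these two conditions coincide. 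With this choice the radius along the chord scales by exactly $\lambda$, the per-period dilation is $\Lambda = \lambda\, e^{\beta_\lambda(2\pi-\phi)}$, and since the spiral advances the angle by $2\pi-\phi$ while the chord accounts for the remaining $\phi$, one checks $\Lambda = e^{2\pi k_\lambda}$ with $k_\lambda$ as in the statement.

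Next I would verify feasibility and compute the worst-case ratio. Feasibility is routine: the curve emanates from the origin, has infinite length, and the spiral arcs sweep every angle of $[\phi,2\pi)$. The key observation is that the chords lie entirely in the excluded cone $\mathcal C_\phi$, so every complement target is in fact exposed by a spiral arc. The balancing condition then forces the total arclength from the origin to any spiral point of radius $r$ to equal $\frac{\sqrt{1+\beta_\lambda^2}}{\beta_\lambda}\,r$ exactly: summing the geometric series of per-period lengths (spiral piece plus chord piece) and applying the balancing identity collapses the bookkeeping to this single constant, so the arclength-to-radius ratio is the same at every angle. Consequently the trajectory behaves, for exposure purposes, like a pure logarithmic spiral: the worst case is a target just missed on one pass and re-exposed one period later, incurring radial scaling $\Lambda$, so the worst-case normalized time is $\frac{\sqrt{1+\beta_\lambda^2}}{\beta_\lambda}\,\Lambda = \frac{\sqrt{\lambda^2-2\lambda\cos\phi+1}}{\lambda-1}\,e^{2\pi k_\lambda}$. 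Minimizing over $\lambda\in(1,\infty)$ yields $\mathcal R_\phi$.

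For the final claim $\mathcal R_\phi < \mathcal U_1$, I would not minimize but simply exhibit one good $\lambda$. Since $\beta_\lambda$ increases continuously from $0$ to $\infty$ as $\lambda$ ranges over $(1,\infty)$, pick $\lambda^\ast$ with $\beta_{\lambda^\ast} = \kappa_1$. Then the prefactor matches that of $\mathcal U_1$, and the cost equals $\mathcal U_1\cdot \lambda^\ast e^{-\kappa_1\phi}$; hence it suffices to show $\lambda^\ast < e^{\kappa_1\phi}$. By monotonicity of $\beta_\lambda$ this is equivalent to $\beta_{e^{\kappa_1\phi}} > \kappa_1$, which after substituting $\sqrt{2(1-\cos\phi)} = 2\sin(\phi/2)$ reduces to the clean inequality $\sinh(\kappa_1\phi/2) > \kappa_1\sin(\phi/2)$. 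This follows at once from the elementary bounds $\sinh t > t$ and $t > \sin t$ for $t>0$, giving $\sinh(\kappa_1\phi/2) > \kappa_1\phi/2 \ge \kappa_1\sin(\phi/2)$. Thus $\mathcal R_\phi \le \mathcal U_1\cdot\lambda^\ast e^{-\kappa_1\phi} < \mathcal U_1$ for every $\phi\in(0,\pi)$.

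The main obstacle is the worst-case analysis of the second paragraph: one must argue rigorously that no complement target is exposed earlier by a chord or by an off-design portion of the curve, and that the balancing condition indeed renders the arclength-to-radius ratio position-independent, so that the supremum over all targets is attained in the clean closed form rather than at some seam between a spiral arc and a chord. Once that uniformity is established, the optimization and the strict comparison with $\mathcal U_1$ are comparatively straightforward, the latter reducing to the elementary $\sinh$/$\sin$ inequality above.
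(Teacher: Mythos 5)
Your construction is, up to cosmetic reordering, the paper's own: the periodic trajectory alternating logarithmic-spiral arcs over the angular range $[\phi,2\pi]$ with straight chords across the excluded cone is exactly the Shortcut Spiral Trajectory of Algorithm~\ref{algo: shortcut spiral}, your growth rate $\beta_\lambda$ is the paper's $\alpha_\lambda$, and your balancing condition (chord length per radial gain equal to the spiral's arclength-to-radius ratio) is precisely the choice of $k_\lambda$ in Lemma~\ref{lem: choice of k lambda for Wphi} that zeroes the position-dependent term $\norm{X_0Y_0}-\tfrac{\sqrt{1+\alpha^2}}{\alpha}(\lambda-1)$ in the exposure time of Lemma~\ref{lem: exposure time}. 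The ``main obstacle'' you flag is exactly what Lemmas~\ref{lem: correctness of shortcut}--\ref{lem: exposure time} carry out, and your observation that the cumulative arclength to any spiral point of radius $r$ collapses to $\tfrac{\sqrt{1+\alpha_\lambda^2}}{\alpha_\lambda}\,r$ is what that bookkeeping yields, so this part is a correct match. Where you genuinely go beyond the paper is the final claim $\mathcal R_\phi<\mathcal U_1$: the paper's written proof only establishes via the Extreme Value Theorem that $p(\lambda)$ attains its minimum on $(1,\infty)$ and never actually carries out the comparison with $\mathcal U_1$. Your argument --- evaluate at the unique $\lambda^\ast$ with $\alpha_{\lambda^\ast}=\kappa_1$ (which exists since $\lambda\mapsto\alpha_\lambda$ increases from $0$ to $\infty$), note $p(\lambda^\ast)=\mathcal U_1\lambda^\ast e^{-\kappa_1\phi}$, and reduce $\lambda^\ast<e^{\kappa_1\phi}$ via $\sqrt{2(1-\cos\phi)}=2\sin(\phi/2)$ to the strict inequality $\sinh(\kappa_1\phi/2)>\kappa_1\sin(\phi/2)$, which follows from $\sinh t>t>\sin t$ for $t>0$ --- checks out and supplies a proof of the one assertion in the theorem that the paper leaves unargued.
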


We prove Theorem~\ref{thm: wedge search upper bound} in Section~\ref{sec: Conic Complement Search}.  
Having established these bounds, we now show that spiral-type algorithms may not be optimal for searching the plane with multiple agents. We focus on $\mathcal P\text{-}\msp_2(1,c)$, where two agents, one unit-speed and one speed-$c$, search the plane. We derive an alternative upper bound to Theorem~\ref{thm: simplified general spiral upper bound}, achieved through a hybrid strategy that combines cone and conic complement search techniques.  

\begin{theorem}
\label{thm: hybrid upper bound}
Let  
$$
\gamma(\phi) = \coss{\phi/2} \frac{\mathcal F_\phi}{\mathcal R_\phi},
$$
where $\mathcal F_\phi$ and $\mathcal R_\phi$ are as defined in Theorems~\ref{thm: cone search upper bound} and~\ref{thm: wedge search upper bound}, respectively.  
Then, for every $\phi \in (0,\pi)$, the problem $\mathcal P\text{-}\msp_2(1,\gamma(\phi))$ admits a solution with search cost equal to $\mathcal R_\phi$, or also equal to 
$$
\coss{\phi/2} \frac{\mathcal F_\phi}{\gamma(\phi)}.
$$
\end{theorem}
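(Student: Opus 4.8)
The plan is to exploit the fact that the cone $\mathcal C_\phi$ and its conic complement $\mathcal W_\phi$ partition the plane $\mathcal P$, so that two agents searching these two regions independently together cover every target exactly once. I would assign the faster (unit-speed) agent to the conic complement $\mathcal W_\phi$ and the slower (speed-$\gamma(\phi)$) agent to the cone $\mathcal C_\phi$, then show that the speed $\gamma(\phi)$ is defined precisely so that both agents incur identical search cost $\mathcal R_\phi$. This mirrors the agent-independence principle that governs the spiral constructions in Theorems~\ref{thm: spiral with n unit speed agents} and~\ref{thm: simplified general spiral upper bound}.

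First I would record the elementary \emph{speed-scaling} observation: if a fixed trajectory achieves search cost $S$ for a unit-speed agent on a domain $\mathcal D$, then the same curve traversed by an agent of speed $c$ achieves search cost $S/c$ on $\mathcal D$. This is immediate from the definition of search cost, since following a fixed curve at speed $c$ multiplies every exposure time $t_A$ by $1/c$ while leaving the normalizing distances $\norm{A}$ unchanged. Next I would invoke Theorem~\ref{thm: wedge search upper bound} to equip the unit-speed agent with a trajectory searching $\mathcal W_\phi$ at cost $\mathcal R_\phi$, and Theorem~\ref{thm: cone search upper bound} to obtain a unit-speed trajectory searching $\mathcal C_\phi$ at cost $\frac{\sinn{\phi}}{2\sinn{\phi/2}}\mathcal F_\phi$. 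By the speed-scaling observation, an agent of speed $\gamma(\phi)$ following the cone trajectory searches $\mathcal C_\phi$ at cost $\frac{1}{\gamma(\phi)}\cdot\frac{\sinn{\phi}}{2\sinn{\phi/2}}\mathcal F_\phi$, which is exactly the second expression in the statement. Substituting the definition $\gamma(\phi)=\frac{\sinn{\phi}}{2\sinn{\phi/2}}\frac{\mathcal F_\phi}{\mathcal R_\phi}$ collapses this quantity to $\mathcal R_\phi$, establishing simultaneously that the two displayed expressions coincide and that the two agents incur equal cost.

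Since each target in $\mathcal P$ lies in exactly one of $\mathcal C_\phi$ or $\mathcal W_\phi$ and is exposed by the corresponding agent, the search cost of the combined two-agent strategy is the maximum of the two individual costs, namely $\mathcal R_\phi$. The only points requiring care are purely routine verifications: using the identity $\frac{\sinn{\phi}}{2\sinn{\phi/2}}=\coss{\phi/2}$, the condition $\gamma(\phi)\le 1$ reduces to $\coss{\phi/2}\,\mathcal F_\phi\le\mathcal R_\phi$, which is consistent with the larger region $\mathcal W_\phi$ (of angle $2\pi-\phi>\pi$) being harder to search than the cone, thereby respecting the speed ordering $1\ge\gamma(\phi)$ assumed by $\msp_2$. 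I do not anticipate a substantive obstacle here: once the two regions are separated and the speed-scaling identity is in place, the theorem is essentially a balancing corollary of Theorems~\ref{thm: cone search upper bound} and~\ref{thm: wedge search upper bound}, with $\gamma(\phi)$ engineered to equalize the per-agent costs.
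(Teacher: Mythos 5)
Your proposal is correct and follows essentially the same route as the paper: partition $\mathcal P$ into $\mathcal C_\phi$ and $\mathcal W_\phi$, assign the unit-speed agent to the conic complement via Theorem~\ref{thm: wedge search upper bound} and the speed-$\gamma(\phi)$ agent to the cone via Theorem~\ref{thm: cone search upper bound}, and observe that the speed-scaling factor $1/\gamma(\phi)$ is chosen precisely to equalize the two costs at $\mathcal R_\phi$. Your explicit statement of the speed-scaling observation is a minor elaboration of a step the paper leaves implicit, and your side remark on $\gamma(\phi)\le 1$ matches what the paper handles only numerically, so nothing substantive differs.
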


We prove Theorem~\ref{thm: hybrid upper bound} in Section~\ref{sec: suboptimality}.  
At a high level, we use Theorem~\ref{thm: wedge search upper bound} to assign the unit-speed agent to search a conic complement, ensuring a search cost strictly smaller than $\mathcal U_1$.  
We then assign the speed-$c$ agent to search the corresponding cone using the trajectory from Theorem~\ref{thm: cone search upper bound}, carefully tuning the cone angle $\phi=\phi(c)$ so that its search cost does not exceed that of the faster agent.  
When the cone is sufficiently narrow, i.e. $\phi$ is small enough, even a slow agent can explore it efficiently.  
The relationship between the cone angle $\phi$ and the required agent speed is captured by the function $\gamma(\phi)$, which defines the minimum speed needed to ensure balanced competitive costs.

Notably, from our analysis, Theorem~\ref{thm: simplified general spiral upper bound} guarantees an upper bound of $\mathcal U_1$ for 
$$c \leq (\mathcal U_2/\mathcal U_1)^2 \approx 0.266687$$ 
when searching the plane with two agents (see also Table~\ref{tab: geometric mean factor}), one unit-speed and one with speed $c$. This matches the bound achieved by a single unit-speed agent, meaning the speed-$c$ agent does not contribute to exposing any points in this range.  
Furthermore, we show that if 
$$
c \leq 1/\mathcal U_1 \approx 0.0578391,
$$
then $\mathcal U_1$ is the optimal search cost for $\mathcal P\text{-}\msp_2(1,c)$. Theorem~\ref{thm: hybrid upper bound} improves the search cost of $\mathcal U_1$, when $c > 1/\mathcal U_1$ and for speeds slightly exceeding $0.266687$. This threshold corresponds to the speed at which both agents, with speeds $1$ and $c$, expose new targets in the spiral-type trajectory used in Theorem~\ref{thm: simplified general spiral upper bound}.  
Our numerical results, summarized in the next theorem, show that our spiral-type algorithm, based on non-intersecting trajectories and inducing an agent-independent competitive ratio (a defining optimality condition), is provably suboptimal for low-speed agents. This provides strong evidence that spiral-type trajectories may be suboptimal more generally for the planar search problem with arbitrary-speed agents. 
Since Theorem~\ref{thm: simplified general spiral upper bound} already yields strong upper bounds for larger values of $c$, we focus below on the case of small speeds.
\begin{theorem}
\label{thm: numerical upper bound}
For $c\in (0,0.5]$, the problem $\mathcal P\text{-}\msp_2(1,c)$ admits the upper bound depicted in Figure~\ref{fig: num upper bound plane 2 agents}. Moreover, for $c \geq 0.268872$, the bound is achieved by a logarithmic-spiral-type algorithm, as in Theorem~\ref{thm: simplified general spiral upper bound}.  
For $0.0578391 < c < 0.268872$, the upper bound is strictly better than $\mathcal U_1$, and is achieved by a strategy that partitions the plane into a cone $\mathcal C_\phi$ (searched by the speed-$c$ agent) and a wedge $\mathcal W_\phi$ (searched by the unit-speed agent), for an appropriate choice of $\phi \in (0,\pi)$.  
\begin{figure}[h!]
    \centering
    \includegraphics[width=0.42\textwidth]{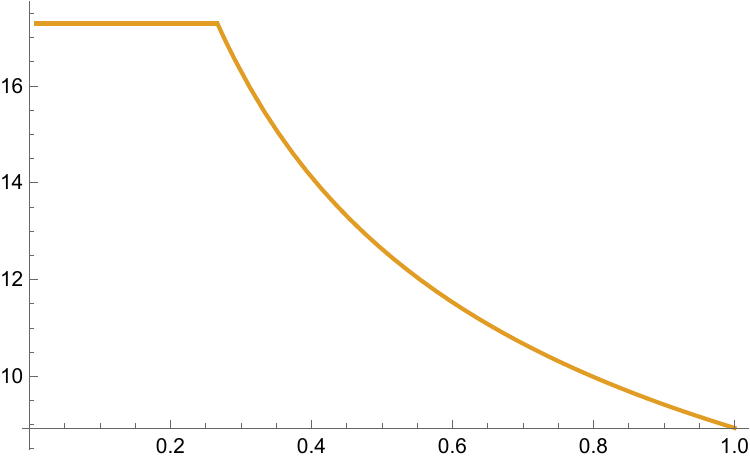}
    \includegraphics[width=0.42\textwidth]{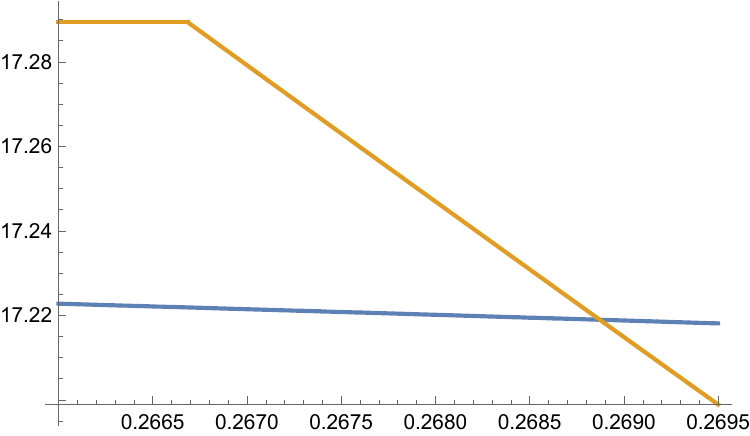}
    \caption{
The orange curve represents the upper bound from Theorem~\ref{thm: simplified general spiral upper bound}, achieved by the Off-Set Algorithm~\ref{algo: Off-Set Trajectory} for searching the plane with agent speeds $1,c$, as a function of $c \in (0,1)$. 
The transition value where the search cost is not differentiable in the speed is $c= (\mathcal U_2/\mathcal U_1)^2\approx 0.266687$, and note that past this threshold speed value, the induced search cost is strictly lower than $\mathcal U_1$.
The blue curve represents the upper bound from Theorem~\ref{thm: hybrid upper bound}, achieved by the Cone-Wedge Hybrid Algorithm~\ref{algo: cone wedge hybrid} for searching the plane with agent speeds $1,c$. The right panel zooms in on the region where the two bounds converge, showing that the latter strategy is strictly better for searching the plane  even for $c \in (0.266687,0.268872)$ in which both the speed-$1$ and speed-$c$ agents expose new targets following the Off-Set Algorithm~\ref{algo: Off-Set Trajectory}.
    }
    \label{fig: num upper bound plane 2 agents}
\end{figure}
\end{theorem}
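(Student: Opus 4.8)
The plan is to establish the depicted bound as the pointwise minimum of the two families of feasible strategies already constructed, and then to carry out the numerical comparison that locates the crossover. First I would record the spiral bound of Theorem~\ref{thm: simplified general spiral upper bound} specialized to $n=2$ with speeds $(c_0,c_1)=(1,c)$: since $\mathcal{G}_1=1$ and $\mathcal{G}_2=\sqrt{c}$, this bound is $S(c):=\min\{\mathcal{U}_1,\ \mathcal{U}_2/\sqrt{c}\}$, which equals $\mathcal{U}_1$ for $c\le(\mathcal{U}_2/\mathcal{U}_1)^2\approx0.266687$ and equals $\mathcal{U}_2/\sqrt{c}$ thereafter. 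Next I would convert the hybrid bound of Theorem~\ref{thm: hybrid upper bound} from its natural parametric form $(\gamma(\phi),\mathcal{R}_\phi)$ into a function of $c$. For a fixed cone angle $\phi$ the construction assigns the unit agent to the wedge $\mathcal{W}_\phi$ at cost $\mathcal{R}_\phi$ and the speed-$c$ agent to the cone $\mathcal{C}_\phi$ at cost $\frac{\sin\phi}{2\sin(\phi/2)}\mathcal{F}_\phi/c$; any $c\ge\gamma(\phi)$ keeps the cone cost at or below $\mathcal{R}_\phi$, so the strategy with parameter $\phi$ is feasible at cost $\mathcal{R}_\phi$ for every $c\ge\gamma(\phi)$. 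Since every feasible strategy yields a valid upper bound, the claimed curve is $\min\{S(c),\ H(c)\}$ where $H(c):=\inf\{\mathcal{R}_\phi:\gamma(\phi)\le c\}$.

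To turn $H(c)$ into the single blue curve in the figure I would establish the two monotonicity facts that make the balanced choice optimal: that $\gamma(\phi)$ is increasing and $\mathcal{R}_\phi$ decreasing on $(0,\pi)$ (a wider cone shrinks the wedge, lowering $\mathcal{R}_\phi$, but demands more speed of the slow agent). Granting these, the infimum defining $H(c)$ is attained at the largest admissible angle $\phi(c)=\gamma^{-1}(c)$, so $H(c)=\mathcal{R}_{\gamma^{-1}(c)}$ and the blue curve is exactly the parametric $(\gamma(\phi),\mathcal{R}_\phi)$. The left endpoint is pinned down by the limit $\phi\to0$: there $\frac{\sin\phi}{2\sin(\phi/2)}=\cos(\phi/2)\to1$, the wedge tends to the full plane so $\mathcal{R}_\phi\to\mathcal{U}_1$, and the narrow-cone cost $\frac{\sin\phi}{2\sin(\phi/2)}\mathcal{F}_\phi\to1$ forces $\mathcal{F}_\phi\to1$; hence $\gamma(\phi)\to1/\mathcal{U}_1\approx0.0578391$, which is precisely the left endpoint of the claimed improvement interval.

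With both curves in hand the improvement claims reduce to comparisons. The strict inequality $H(c)<\mathcal{U}_1$ over the whole range follows immediately from the final assertion of Theorem~\ref{thm: wedge search upper bound}, which gives $\mathcal{R}_\phi<\mathcal{U}_1$ for every $\phi\in(0,\pi)$; since $H(c)=\mathcal{R}_{\phi(c)}$ with $\phi(c)>0$ whenever $c>1/\mathcal{U}_1$, the hybrid lies strictly below $\mathcal{U}_1$ there. It then remains to locate the crossover $c^\ast$ with $S(c^\ast)=H(c^\ast)$ and to verify $H<S$ to its left and $S\le H$ to its right. This I would carry out by numerically evaluating the inner optimizations — $\mathcal{F}_\phi$ as the minimum over $\theta\in(\phi,\pi/2+\phi/2)$ and $\mathcal{R}_\phi$ as the minimum over $\lambda\in(1,\infty)$ of the functions in Theorems~\ref{thm: cone search upper bound} and~\ref{thm: wedge search upper bound} — assembling $\gamma(\phi)$ and $\mathcal{R}_\phi$, and solving $\mathcal{R}_{\gamma^{-1}(c)}=\mathcal{U}_2/\sqrt{c}$ to obtain $c^\ast\approx0.268872$; note $c^\ast>0.266687$, so the hybrid wins even on the sliver where both spiral agents are active.

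The main obstacle I anticipate is computational and analytic rather than conceptual: the hybrid curve is defined through two nested numerical minimizations and an implicit inversion of $\gamma$, so the delicate points are (i) proving the monotonicity of $\gamma$ and $\mathcal{R}_\phi$ rigorously enough to justify that the balanced angle is optimal and that $\gamma^{-1}$ is single-valued, and (ii) certifying the crossover value $c^\ast$ and the ordering of the two curves on either side of it to the stated precision, rather than merely reading them off the plot.
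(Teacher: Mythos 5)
Your proposal is correct and follows essentially the same route as the paper: the paper likewise numerically computes the parametric pairs $(\gamma(\phi),\mathcal R_\phi)$, compares them against the spiral bound $\min\{\mathcal U_1,\ \mathcal U_2/\sqrt{c}\}$, and checks that the range of $\gamma(\phi)$ covers $(1/\mathcal U_1,(\mathcal U_2/\mathcal U_1)^2)$. If anything, you are more careful than the paper, which simply reads the monotonicity of $\gamma$, the coverage of the interval, and the crossover $c^\ast\approx 0.268872$ off the plotted figures rather than establishing the limit $\gamma(\phi)\to 1/\mathcal U_1$ or the monotonicity facts you flag.
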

The improvement achieved by the hybrid strategy occurs only in a narrow interval of speed ratios. For very small values of $c$, the slow agent cannot contribute meaningfully, and the bound coincides with the unit-speed spiral strategy; for larger values of $c$, the spiral-based bound $\mathcal U_2/\sqrt{c}$ dominates. The intermediate regime visible in Figure~\ref{fig: num upper bound plane 2 agents} arises from a numerical optimization of the cone angle in the hybrid construction, and reflects a geometric trade-off between restricting the slow agent to a narrow cone and increasing the cost of searching the complementary region.
We prove Theorem~\ref{thm: numerical upper bound} in Section~\ref{sec: suboptimality}.

\section{Searching the Plane with $n$ Unit-Speed Agents}
\label{sec: n unit speed agents}
In this section, we prove Theorem~\ref{thm: spiral with n unit speed agents}, establishing upper bounds for the problem $\mathcal P\text{-}\msp_n(\bone)$, where $n$ unit-speed agents search the plane. The derived upper bounds naturally generalize the provably optimal algorithm for $\msp_1(1)$ due to~\cite{langetepe2010optimality}.  
The optimal trajectory for this problem follows a logarithmic spiral, parameterized by the \emph{expansion factor} $k > 0$. The trajectories for $n$ unit-speed agents are based on the same principle and are defined as follows.

\begin{figure}[h!]
    \centering
    \includegraphics[width=0.47\textwidth]{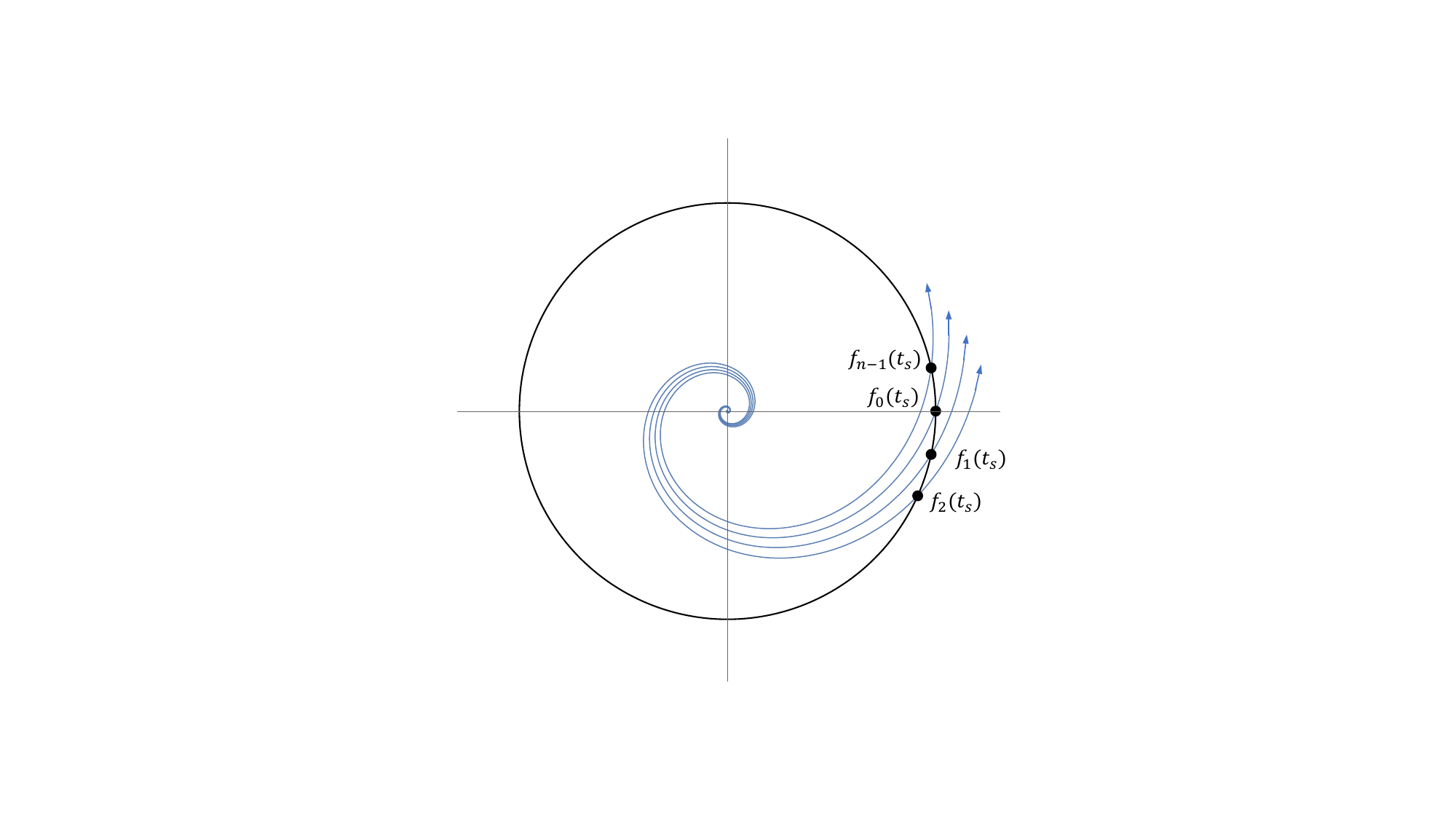}
    \caption{
 The depiction of trajectories of Algorithm~\ref{algo: uniform spiral trajectory}. 
 In this example, we use $k=0.356$, $n=30$, and we show the trajectories of agents $0,1,2,29$, along with their positions when $t=t_s = 2\pi s$, for some $s\in \integers$. 
 All agents lie on the circle of radius $e^{k t_s}$, while agent-$0$ lies on ray $\mathcal L_0$. The last agent who previously hit $\mathcal L_0$ was agent-$(n-1)$. 
    }
    \label{fig: uniform spiral}
\end{figure}

\begin{algorithm}[H]
\caption{Uniform Spiral Trajectory for $\mathcal P\text{-}\msp_n(\bone)$}
\label{algo: uniform spiral trajectory}
\begin{algorithmic}
\REQUIRE $k > 0$
\STATE \textbf{Output Trajectories:}  
$
f_i^k(t) = \langle 
e^{kt}, 
t - \tfrac{2\pi}{n}i
\rangle,
\quad t \in \reals, \quad i \in [n].
$
\end{algorithmic}
\end{algorithm}
The logarithmic spiral guarantees that the distance of newly exposed points from the origin grows proportionally to the length of the trajectory. This property is necessary and sufficient for achieving a constant search cost. Using the same expansion factor across agents further guarantees that trajectories do not overlap.

Algorithm~\ref{algo: uniform spiral trajectory} is depicted in Figure~\ref{fig: uniform spiral}. 
The execution of the algorithm begins at $t \to -\infty$.  
The condition $k > 0$ is necessary for correctness, ensuring that every point in the plane is eventually exposed. 
For any fixed parameter value $t$, all agents have the same distance $e^{kt}$ from the origin, while their angular separation is exactly $2\pi/n$. Consequently, at every snapshot of the algorithm, all agents lie on a common circle and their positions form a regular $n$-gon (up to rotation and scaling as $t$ varies).

The following lemma establishes that Algorithm~\ref{algo: uniform spiral trajectory} is indeed a feasible search strategy.  

\begin{lemma}
\label{lem: feasible of uniform spiral}
For every $k > 0$ and $n \in \naturals$, Algorithm~\ref{algo: uniform spiral trajectory} is feasible for $\mathcal P\text{-}\msp_n(\bone)$.
\end{lemma}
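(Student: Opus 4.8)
The plan is to verify directly the three defining conditions of a feasible solution to $\mathcal P\text{-}\msp_n(\bone)$: that all agents start at the origin, that each agent $i$ can traverse $\tau_i = f_i^k$ at speed at most $c_i = 1$, and that every target $A \in \mathcal P = \reals^2_*$ is eventually exposed by some agent. The first two conditions are immediate from the closed form of $f_i^k$; the only condition carrying content is the exposure condition, and even that reduces to a bookkeeping argument about the angular coordinate along each spiral.

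First I would dispatch the start-at-origin condition: since $f_i^k(t) = e^{kt}\langle 1, t - \tfrac{2\pi}{n}i\rangle$ and $k > 0$, we have $e^{kt} \to 0$ as $t \to -\infty$, so $\lim_{t\to -\infty} f_i^k(t) = (0,0)$ for every $i \in [n]$, matching the required parametrization from the origin at $t = -\infty$. For the speed condition I would compute the derivative and obtain $\norm{(f_i^k)'(t)} = \sqrt{1+k^2}\,e^{kt}$; since the competitive-analysis convention permits reparametrizing by arc length, each unit-speed agent can follow its spiral at the prescribed speed. The same computation yields the arc length $\ell_i(t) = \tfrac{\sqrt{1+k^2}}{k}e^{kt}$, which is finite for finite $t$ but diverges as $t \to \infty$, confirming that the curves have the infinite length required to reach arbitrarily distant points.

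The core of the argument is the exposure condition. Fix an arbitrary target $A = \langle y, \psi\rangle$ with $y = \norm{A} > 0$ and $\psi \in [0,2\pi)$. I would focus on a single agent $i$ (any one suffices): the angular coordinate of $f_i^k(t)$ equals $\psi$ modulo $2\pi$ precisely when $t = \psi + \tfrac{2\pi}{n}i + 2\pi m$ for some $m \in \integers$. Along this arithmetic progression of parameter values the radial coordinate is $e^{kt}$, which tends to $+\infty$ as $m \to \infty$ because $k > 0$. Hence there exist $m$ and a corresponding $t^\ast$ with $e^{kt^\ast} \geq y$; at such $t^\ast$ the point $f_i^k(t^\ast) = \langle e^{kt^\ast}, \psi\rangle$ lies on the ray $\mathcal L_\psi$ at radius at least $y$, so $A$ is a convex combination of $O$ and $f_i^k(t^\ast)$, i.e.\ $A$ is exposed. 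Since $A$ was arbitrary, every point of $\mathcal P$ is exposed, establishing feasibility.

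The main obstacle here is conceptual rather than technical: there is no genuine difficulty, since each individual spiral with $k > 0$ already sweeps through every angular direction with unboundedly increasing radius and thus covers the entire plane on its own, so the multiplicity of agents and the angular offsets $\tfrac{2\pi}{n}i$ play no role in feasibility and matter only later when bounding the search cost. The one point requiring care is matching the informal exposure notion to the precise convex-combination definition while handling the angle modulo $2\pi$, which the progression $t = \psi + \tfrac{2\pi}{n}i + 2\pi m$ makes transparent.
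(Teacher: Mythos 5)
Your proof is correct and follows essentially the same route as the paper's: verify the origin limit via $e^{kt}\to 0$, and expose an arbitrary $A=\langle y,\psi\rangle$ by locating a parameter value in the arithmetic progression where the chosen agent's angular coordinate hits $\psi$ modulo $2\pi$ with radial coordinate $e^{kt}\geq y$ (the paper does this with agent-$0$; you do it for an arbitrary agent $i$, which changes nothing). Your additional remarks on the speed/arc-length condition are a harmless elaboration of what the paper leaves implicit.
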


\begin{proof}
For every $i \in [n]$ and $k > 0$, we observe that  
$$
\lim_{t\rightarrow -\infty} \norm{ f_i^k(t) } = \lim_{t\rightarrow -\infty} e^{kt} = 0,
$$  
which confirms that agents start from the origin.  

Next, we show that every point $A = \langle x, \psi \rangle$, with $x > 0$ and $\psi \in [0,2\pi]$, is exposed. Consider the smallest $t$ such that $e^{kt} \geq x$, which exists since $k > 0$. Define the smallest $s \in \integers$ such that $2\pi s + \psi \geq t$. Then,  
$
f_0^k(2\pi s + \psi) \in \mathcal L_\psi,
$  
and  
$
\norm{ f_0^k(2\pi s + \psi) } = e^{2\pi s + \psi} \geq e^{kt} \geq x,
$  
meaning that agent-$0$ exposes $A$.  
\qed \end{proof}

The parameter $t$ in Algorithm~\ref{algo: uniform spiral trajectory} is related to, but not equal to, time when unit-speed agents follow trajectories $f_i^k(t)$. The next lemma explicitly describes this relationship and shows that the trajectory lengths $f_i^k(t)$ are invariant with respect to $i$.

\begin{lemma}
\label{lem: standard integral}
\ignore{
Fix $\phi\in \reals$, $k>0$, and consider $h:\reals\mapsto \reals^2$ with 
$h(t) = e^{kt} \left(
\coss{t+\phi }, \sinn{t+\phi}
\right)$. 
Then 
$$
\int_{-\infty}^{y} \norm{ h'(t) } dt
=
 \frac{\sqrt{1+k^2}}{k} e^{k y}. 
$$
}
Fix $\phi\in \reals$, $\alpha,\beta\in \reals$ with $\alpha > 0$, and consider $h:\reals\mapsto \reals^2$ with 
$h(t) = \langle
 e^{\alpha t + \beta},
 t+\phi
 \rangle$. 
 Then 
$$
\int_{-\infty}^{y} \norm{ h'(t) } dt
=
 \frac{\sqrt{1+\alpha^2}}{\alpha} e^{\alpha y + \beta}. 
$$
\end{lemma}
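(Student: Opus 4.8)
The plan is to compute the arclength integral directly by differentiating the polar parametrization and simplifying the resulting speed. First I would convert $h$ from polar to Cartesian form, writing
$$
h(t) = \langle e^{\alpha t + \beta}, t + \phi \rangle = e^{\alpha t + \beta}\bigl(\coss{t+\phi}, \sinn{t+\phi}\bigr),
$$
since a point $\langle \rho, \chi \rangle$ equals $\rho(\coss{\chi}, \sinn{\chi})$ by the paper's notational convention. Then I would differentiate each Cartesian component with respect to $t$ using the product rule, keeping in mind that the radial factor $e^{\alpha t + \beta}$ has derivative $\alpha e^{\alpha t + \beta}$ while the angular argument $t + \phi$ contributes a factor of $1$.

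Next I would assemble $\norm{h'(t)}$. The key simplification is that for a curve of the form $\rho(t)(\coss{\theta(t)}, \sinn{\theta(t)})$, the squared speed equals $(\rho')^2 + \rho^2 (\theta')^2$; this is the standard polar speed formula and follows because the radial and tangential components are orthogonal (the cross terms cancel via $\coss{}{}\sinn{}$ identities, i.e.\ $\coss{\theta}^2 + \sinn{\theta}^2 = 1$ and $\coss{\theta}\sinn{\theta} - \sinn{\theta}\coss{\theta} = 0$). Here $\rho(t) = e^{\alpha t + \beta}$ so $\rho'(t) = \alpha e^{\alpha t + \beta}$, and $\theta(t) = t + \phi$ so $\theta'(t) = 1$. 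Hence
$$
\norm{h'(t)}^2 = \alpha^2 e^{2(\alpha t + \beta)} + e^{2(\alpha t + \beta)} = (1 + \alpha^2) e^{2(\alpha t + \beta)},
$$
so $\norm{h'(t)} = \sqrt{1+\alpha^2}\, e^{\alpha t + \beta}$, using $\alpha > 0$ to drop the absolute value on $e^{\alpha t + \beta}$ (the exponential is positive regardless, and $\sqrt{1+\alpha^2} > 0$).

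Finally I would integrate this clean expression from $-\infty$ to $y$. Since
$$
\int_{-\infty}^{y} \sqrt{1+\alpha^2}\, e^{\alpha t + \beta}\dd t = \sqrt{1+\alpha^2}\left[\frac{1}{\alpha} e^{\alpha t + \beta}\right]_{-\infty}^{y} = \frac{\sqrt{1+\alpha^2}}{\alpha} e^{\alpha y + \beta},
$$
where the lower limit vanishes because $\alpha > 0$ forces $e^{\alpha t + \beta} \to 0$ as $t \to -\infty$, which yields exactly the claimed formula.

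I do not anticipate a genuine obstacle here, as the statement is a routine arclength computation; the only points requiring care are the correct application of the polar speed formula (ensuring the cross terms cancel so that the phase $\phi$ and the constant $\beta$ play no role in the final coefficient, other than $\beta$ surviving inside the exponential) and the use of the hypothesis $\alpha > 0$ both to guarantee convergence of the improper integral at the lower limit and to simplify $\sqrt{\alpha^2} = \alpha$. The result is precisely the general form of the logarithmic-spiral arclength that underlies the single-agent analysis, now stated with an affine exponent $\alpha t + \beta$ and arbitrary phase $\phi$ so that it can be applied uniformly to the offset trajectories $f_i^k(t)$ of Algorithm~\ref{algo: uniform spiral trajectory}.
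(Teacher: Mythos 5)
Your proposal is correct and follows essentially the same route as the paper: convert the polar parametrization to Cartesian form, compute $\norm{h'(t)}^2 = (1+\alpha^2)e^{2(\alpha t+\beta)}$ (the paper expands the two squared components explicitly rather than quoting the polar speed formula, but the computation is identical), and integrate using $\alpha>0$ for convergence at the lower limit.
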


\begin{proof}
In Cartesian coordinates, we have
$h(t) = e^{\alpha t + \beta} \left(
\coss{t+\phi }, \sinn{t+\phi}
\right)$.
Therefore, 
$$
\dd h(t) /\dd t =
e^{\alpha t +\beta} \left(  \alpha \cos (t+\phi )-\sin (t+\phi ),  \alpha \sin (t+\phi )+\cos (t+\phi )\right),
$$
and hence 
\begin{align*}
\norm{ \dd h(t) /\dd t }^2 
&= e^{2(\alpha t +\beta)} 
\left(
\left(  \alpha \cos (t+\phi )-\sin (t+\phi ) \right)^2 + \left(  \alpha \sin (t+\phi )+\cos (t+\phi )\right)^2 
\right)\\
& = e^{2(\alpha t + \beta)}  \left( 1+\alpha^2 \right).
\end{align*}
Therefore, 
$$
\int_{-\infty}^{y}
\norm{
\dd h(t) /\dd t
}
\dd t
= 
\sqrt{1+\alpha^2}
\int_{-\infty}^{y}
e^{\alpha t + \beta}
\dd t
= 
\frac{\sqrt{1+\alpha^2}}{\alpha} e^{\alpha y + \beta},
$$
where the last equality is due to the fact that $\alpha>0$. 
\qed \end{proof}

As an immediate corollary, we have that for the trajectories of Algorithm~\ref{algo: uniform spiral trajectory}, and for each $i \in [n]$, 
$
\int_{-\infty}^{y} \norm{ f_i'(t) } dt
=
 \frac{\sqrt{1+k^2}}{k} e^{k y}. 
$
We can now derive the following upper bound, which for $n=1$, and a proper choice of $k$, gives the known optimal result for $\mathcal P\text{-}\msp_1(1)$ due to~\cite{langetepe2010optimality}.

\begin{lemma}
\label{lem: unit speed spiral bounds}
The search cost of Algorithm~\ref{algo: uniform spiral trajectory} and input $k>0$ for $\mathcal P\text{-}\msp_n(\bone)$ equals
$$
\frac{\sqrt{1+k^2}}{k} e^{2\pi k/n}.
$$
\end{lemma}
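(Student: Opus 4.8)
The plan is to reduce the whole computation to the single fact, already recorded as the corollary of Lemma~\ref{lem: standard integral}, that for every agent $i \in [n]$ the arc length along $f_i^k$ up to parameter $t$ equals $\ell_i(t) = \tfrac{\sqrt{1+k^2}}{k}e^{kt}$, independently of $i$. Because all agents have unit speed, the time at which agent $i$ occupies $f_i^k(t)$ is exactly $\ell_i(t)$, and this is strictly increasing in $t$. Hence minimizing the exposure \emph{time} of a target is the same as minimizing the \emph{parameter} $t$ at which some agent exposes it, which is the key simplification that makes the argument essentially a reparametrized version of the single-agent analysis of~\cite{langetepe2010optimality}.

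Next I would characterize exactly when agent $i$ exposes a fixed target $A = \langle x, \psi\rangle$. By the definition of exposure, agent $i$ exposes $A$ only when its position lies on the ray $\mathcal L_\psi$ with radius at least $x$; since the angular coordinate of $f_i^k(t)$ is $t - \tfrac{2\pi}{n}i$, this forces $t - \tfrac{2\pi}{n}i \equiv \psi \pmod{2\pi}$ together with $e^{kt}\geq x$. The crucial observation is that, letting $i$ range over $[n]$ and the period index over $\integers$, the set of candidate parameters at which \emph{some} agent meets $\mathcal L_\psi$ is precisely the arithmetic progression
$$
\left\{\, \psi + \tfrac{2\pi}{n}\,j \;:\; j \in \integers \,\right\},
$$
an evenly spaced grid of gap $\tfrac{2\pi}{n}$. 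Thus the equal angular offsets collapse the $n$ spirals into a single effective spiral whose successive passes over any ray are spaced $\tfrac{2\pi}{n}$ apart in the parameter, rather than $2\pi$ as for one agent.

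From here the worst case is immediate. The earliest parameter $t_A$ at which $A$ is exposed is the least grid point satisfying $e^{kt}\geq x$, i.e. the least grid point that is $\geq \tfrac{\ln x}{k}$; since consecutive grid points differ by $\tfrac{2\pi}{n}$, this yields $0 \le t_A - \tfrac{\ln x}{k} < \tfrac{2\pi}{n}$. The normalized cost of $A$ is then
$$
\frac{\ell_i(t_A)}{\norm{A}}
= \frac{\sqrt{1+k^2}}{k}\cdot\frac{e^{k t_A}}{x}
= \frac{\sqrt{1+k^2}}{k}\, e^{\,k\left(t_A - \frac{\ln x}{k}\right)},
$$
and taking the supremum over all targets $A \in \mathcal P$ (letting $\tfrac{\ln x}{k}$ approach a grid point from above, so that the exponent approaches $\tfrac{2\pi}{n}$) gives exactly $\tfrac{\sqrt{1+k^2}}{k}e^{2\pi k/n}$, matching the claim and recovering the known single-agent bound when $n=1$.

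I expect the only genuinely delicate point to be the combinatorial identification of the combined exposure grid and the verification that the supremum is governed by the exponent approaching $\tfrac{2\pi}{n}$ (attained in the limit as the target's radius sits just past a grid radius). Everything else—the arc-length formula, monotonicity in $t$, and the final algebraic simplification—is routine given the earlier lemmata, so I would keep those steps brief and concentrate the exposition on why the uniform angular offsets produce a gap of $\tfrac{2\pi}{n}$ and hence the factor $e^{2\pi k/n}$.
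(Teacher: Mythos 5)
Your proof is correct and follows essentially the same route as the paper's: both reduce the analysis to the observation that the $n$ equally offset spirals cross any fixed ray at parameters spaced $\tfrac{2\pi}{n}$ apart, so the worst-case target sits just beyond the previously exposed radius, yielding the factor $e^{2\pi k/n}$. Your arithmetic-progression (``grid'') formulation is a clean repackaging of the paper's argument, which instead rotates the target onto $\mathcal L_0$ and identifies agent-$(n-1)$ as the previous agent to cross that ray; your justification that equal parameters mean equal times (via the $i$-independent arc-length formula) is exactly the symmetry the paper invokes.
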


\begin{proof}
By the symmetry of the trajectories $\{f_i^k(t)\}_{i\in [n]}$, defined for all $t \in \reals$, and by appropriately rotating the plane about the origin, we may assume that the target lies on ray $\mathcal L_0$. Furthermore, due to the symmetry of all trajectories, we may assume that the target is exposed by agent-$0$.  

Define $t_s := 2\pi s$ for $s \in \integers$. That is, the target $A_x = (x,0)$ is first exposed when agent-$0$ reaches $f_0^k(t_s)$ for some $s \in \integers$. It follows that $x \leq e^{kt_s} = e^{k(2\pi s)}$. Moreover, by Lemma~\ref{lem: standard integral}, agent-$0$ reaches this point at time  
$$
\ell_0(t_s) = \frac{\sqrt{1+k^2}}{k} e^{k t_s} = \frac{\sqrt{1+k^2}}{k} e^{k(2\pi s)}.
$$

Next, we establish a lower bound on $x$ under the assumption that agent-$0$ is the first to expose $A_x$. Since time increases with the parameter $t$ along the trajectories, the largest $t < t_s$ for which $f_j^k(t)$ was on the positive horizontal axis is given by  
$$
t + \frac{2\pi(n-j)}{n} \leq t_s = 2\pi s,
$$  
which implies  
$$
t \leq 2\pi s - \frac{2\pi(n-j)}{n} \leq 2\pi s - \frac{2\pi}{n},
$$  
where the last inequality is tight for $j = n-1$, see also Figure~\ref{fig: uniform spiral}. 
Thus, before $A_x$ was exposed, the last agent to cross the positive horizontal axis was agent-$(n-1)$, reaching the point $f_{n-1}^k(2\pi s - 2\pi/n)$ and exposing all points in the closed interval $\left[0, e^{k(2\pi s - 2\pi/n)}\right]$. It follows that  
$
x > e^{k(2\pi s - 2\pi/n)},
$  
with strict inequality.  
Hence, the worst-case search cost for points exposed by agent-$0$ is  
$$
\sup_{s\in \integers}
\sup_{x \in \left( e^{k(2\pi s - 2\pi/n)} , e^{k(2\pi s)} \right] } 
\frac{\ell_0(t_s)}{x}
=
\sup_{s\in \integers}
\frac{\sqrt{1+k^2}}{k} 
\frac{e^{k (2\pi s)}}
{e^{k(2\pi s - 2\pi/n)}}
=
\frac{\sqrt{1+k^2}}{k} e^{2\pi k/n}.
$$
In particular, these calculations show that the search cost is independent of $s$ in the above analysis and also independent of agent-$i$, as established by the symmetry argument at the beginning of the proof.
\qed \end{proof}

Minimizing the search cost expression in Lemma~\ref{lem: unit speed spiral bounds} with respect to $k = k(n)$ yields the following upper bounds for $\mathcal P\text{-}\msp_n(\bone)$, along with the corresponding optimal expansion factors, denoted by $\kappa_n$. The next lemma provides a closed formula for $\kappa_n$.
\begin{lemma}
\label{lem: closed form of k_i}
For each $n \in \naturals$, 
the search cost of trajectories $\{f_i^k(t)\}_{i\in [n]}$
for $\mathcal P\text{-}\msp_n(\bone)$, as seen in Lemma~\ref{lem: unit speed spiral bounds}, is convex in expansion factor $k$. Moreover, the corresponding optimizers $k=\kappa_n$ 
are given as the unique positive root to 
$k^3+k=n/2\pi$, and therefore
$$
\kappa_n :=
 \sqrt[3]{\sqrt{\frac{n^2}{16 \pi^2}+\frac{1}{27}}+\frac{n}{4 \pi}}
-\sqrt[3]{\sqrt{\frac{n^2}{16 \pi^2}+\frac{1}{27}}-\frac{n}{4 \pi}}.
$$
\end{lemma}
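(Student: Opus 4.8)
The plan is to treat the single-variable search cost
$$
g(k) := \tfrac{\sqrt{1+k^2}}{k}\, e^{2\pi k/n}
$$
from Lemma~\ref{lem: unit speed spiral bounds} as a function on $(0,\infty)$, establish its convexity, and then solve the resulting first-order condition in closed form. The cleanest route is to pass to logarithms, since both the convexity claim and the critical-point computation simplify considerably. I would write $\ln g(k) = \tfrac12\ln(1+k^2) - \ln k + \tfrac{2\pi}{n}k$, so that $(\ln g)'(k) = \frac{k}{1+k^2} - \frac1k + \frac{2\pi}{n}$.

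For convexity, I would differentiate once more and combine over a common denominator: $(\ln g)''(k) = \frac{1-k^2}{(1+k^2)^2} + \frac{1}{k^2} = \frac{1+3k^2}{k^2(1+k^2)^2}$, which is strictly positive for every $k>0$. Hence $\ln g$ is strictly convex, and since $\exp$ is convex and increasing, $g=\exp(\ln g)$ is convex (log-convexity being the stronger statement), which is the first assertion of the lemma.

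Next I would extract the optimizer from $(\ln g)'(k)=0$. Multiplying through by $k(1+k^2)>0$ collapses the equation to $k^3+k = n/(2\pi)$. Uniqueness of a positive solution is then immediate: $k\mapsto k^3+k$ has derivative $3k^2+1>0$, so it increases strictly from $0$ to $+\infty$ and meets the level $n/(2\pi)>0$ exactly once. Together with $(\ln g)'\to-\infty$ as $k\to 0^+$ and $(\ln g)'\to \tfrac{2\pi}{n}>0$ as $k\to\infty$, this critical point is the unique global minimizer $\kappa_n$.

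Finally I would solve the depressed cubic $k^3+k-\tfrac{n}{2\pi}=0$ by Cardano's formula with $p=1$ and $q=-n/(2\pi)$. Writing $D:=\sqrt{\tfrac{n^2}{16\pi^2}+\tfrac{1}{27}}$, the discriminant term $q^2/4+p^3/27 = D^2>0$ is positive, so there is a single real root expressible with real cube roots, namely $\sqrt[3]{\tfrac{n}{4\pi}+D}+\sqrt[3]{\tfrac{n}{4\pi}-D}$. The only genuinely delicate point is matching this to the stated form of $\kappa_n$: because $D > \tfrac{n}{4\pi}$, the second radicand is negative, so $\sqrt[3]{\tfrac{n}{4\pi}-D} = -\sqrt[3]{D-\tfrac{n}{4\pi}}$, which converts the sum of cube roots into the difference $\sqrt[3]{D+\tfrac{n}{4\pi}}-\sqrt[3]{D-\tfrac{n}{4\pi}}$ appearing in the lemma. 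I expect no conceptual obstacle here; the only risks are bookkeeping — getting the common-denominator simplification of $(\ln g)''$ exactly right, and handling the cube-root sign convention so that the real root is recovered in the claimed form.
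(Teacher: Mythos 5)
Your proposal is correct, and it reaches the same first-order condition $k^3+k=n/(2\pi)$ and the same Cardano resolution (including the sign flip on the negative radicand) as the paper, but it proves the convexity claim by a genuinely different and cleaner route. The paper works with $z(k)=\tfrac{\sqrt{1+k^2}}{k}e^{k/a}$ directly: it computes $z''$ explicitly, isolates the numerator $a^2(3k^2+2)+(k^3+k)(k^3+k-2a)$, views it as a quadratic in $a$, and shows its discriminant $-k^2-5k^4-7k^6-3k^8$ is negative so that the numerator keeps the sign of its leading coefficient $3k^2+2>0$; this yields $z''>0$ everywhere. You instead pass to $\ln g$, where the second derivative collapses to $\tfrac{1+3k^2}{k^2(1+k^2)^2}>0$ by an elementary common-denominator computation, and then invoke the standard fact that log-convexity implies convexity (composition of the convex increasing $\exp$ with a convex function). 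Your argument establishes the strictly stronger property of log-convexity with far less algebra and no discriminant bookkeeping, at the mild cost of relying on that composition fact; the paper's direct computation, while heavier, keeps everything at the level of explicit derivatives of the cost function itself. Both approaches correctly identify the unique positive critical point as the global minimizer (you via strict convexity of $\ln g$ plus the boundary behaviour of $(\ln g)'$, the paper via positivity of $z''$ at the critical point and globally), and the remaining steps coincide.
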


\begin{proof}
Consider the function $z(k) = \frac{\sqrt{1+k^2}}{k} e^{k/a}$ parameterized by $a \in \reals_{>0}$. We compute  
$$
\frac{\dd}{\dd k} z(k) = \frac{e^{k/a} \left(-a+k^3+k\right)}{a k^2 \sqrt{k^2+1}},
$$  
so that any $k_a$ satisfying $k_a^3 + k_a = a$ is a critical point. We show that this corresponds to a global minimum.  
Indeed, computing the second derivative, we obtain  
$$
\tfrac{\dd^2}{\dd k^2} z(k) =
e^{k/a}
\frac{  a^2 \left(3 k^2+2\right) +\left(k^3+k\right)\left(k^3+k-2a\right) 
}
{a^2 k^3 \left(k^2+1\right)^{3/2}}.
$$  
It follows that  
$$
\frac{\dd^2}{\dd k^2} z(k_a) = \frac{\left(3 k_a^2+1\right) e^{k_a/a}}{k_a^3 \left(k_a^2+1\right)^{3/2}} >0,
$$  
confirming that $k_a$ is indeed a minimizer.  

More generally, we analyze the discriminant of the numerator in  $\tfrac{\dd^2}{\dd k^2} z(k)$, which equals
$$
a^2 \left(3 k^2+2\right) +\left(k^3+k\right)\left(k^3+k-2a\right),
$$  
and we view it as a degree-2 polynomial in $a$. The discriminant evaluates to  
$$
- k^2 - 5 k^4 - 7 k^6 - 3 k^8 < 0.
$$  
Since this is negative, the polynomial maintains a constant sign, which is determined by the leading coefficient, $3k^2+2>0$. It follows that $\frac{\dd^2}{\dd k^2} z(k) > 0$ for all $k$, proving that $z(k)$ is convex.  

Next, we derive a closed-form expression for $k_a$, which satisfies the equation  
$
k^3 + k - a = 0.
$  
The discriminant of this degree-3 polynomial is  
$
-4(1)^3 - 27(-a)^2 = -4 - 27a^2 < 0.
$  
Since the discriminant is negative, the equation has only one real root, given by Cardano’s formula, and that is
$$
k_a =
\left( \sqrt{\frac{a^2}{4} + \frac{1}{27}} + \frac{a}{2} \right)^{\frac{1}{3}}
- \left( \sqrt{\frac{a^2}{4} + \frac{1}{27}} - \frac{a}{2} \right)^{\frac{1}{3}}.
$$  
Setting $\kappa_n = k_a$ with $a = n/2\pi$ completes the proof.
\qed \end{proof}

The proof of Theorem~\ref{thm: spiral with n unit speed agents} is obtained now as an easy corollary. 

\begin{proof}[of Theorem~\ref{thm: spiral with n unit speed agents}]
We use the trajectories from Algorithm~\ref{algo: uniform spiral trajectory} with expansion factor $\kappa_n$ from Lemma~\ref{lem: closed form of k_i}. By Lemma~\ref{lem: unit speed spiral bounds}, the resulting search cost is as stated in the theorem.
\qed \end{proof}

\begin{table}[h]
    \centering
    \begin{tabular}{|c|c|c|c|c|c|c|c|c|c|c|c|c|c|c|c|c|c|c|c|c|}
        \hline
        $n$ & 1 & 2 & 3 & 4 & 5 & 6 & 7  \\
        \hline
        Search Cost $\mathcal U_n$ & 17.28935 & 8.92852 & 6.22135 & 4.90385 & 4.13049 & 3.62351 & 3.2659   \\
        \hline
        $\kappa_n$ & 0.155402 & 0.293124 & 0.409031 & 0.506602 & 0.590194 & 0.663214 & 0.728099  \\
        \hline
    \end{tabular}
    \caption{Search cost for $\msp_n(\bone)$ and corresponding optimizer expansion factors $k=\kappa_n$ values for trajectories $\{f_i^k\}_{i\in[n]}$, when $n=1,\ldots,7$.}
    \label{tab:transposed}
\end{table}

Table~\ref{tab:transposed} provides some indicative numerical values for search costs $\mathcal U_n$, for small values of $n$. 
The values $\mathcal U_1$ and $\kappa_1$ are the known values that have given the provably optimal algorithm to $\mathcal P\text{-}\msp_1(1)$ due to~\cite{langetepe2010optimality}.
Below, we explore the behavior of $\mathcal U_n,\kappa_n$ as a function of $n$, which will be useful later. 
\begin{proposition}
\label{prop: monotonicity of Cn}
Expression $\mathcal U_n$ of Theorem~\ref{thm: spiral with n unit speed agents} is strictly decreasing in $n$, and $\lim_{n\rightarrow \infty} \mathcal U_n = 1$. 
Moreover, the optimizer expansion factor $\kappa_n$ is positive, it is strictly increasing, and strictly concave in $n$, with 
$\kappa_n = \Theta \left(n^{1/3}\right)$.
\end{proposition}

\begin{proof}
By construction, $\kappa_n$ is the unique positive solution to 
$
q(x):= x^3 + x = \frac{n}{2\pi}.
$
Function $q(x)$ is strictly increasing and strictly convex on $x>0$, therefore it has an inverse. Hence, $\kappa_n$ is given as $q^{-1}(n/2\pi)$, and as a result, $\kappa_n$ is positive, it is strictly increasing, and strictly concave in $n$. 
Among others, this also implies that $\kappa_n'>0$. 

\ignore{
Consider expansion factor $k_a$, with $a=n/2\pi>0$ corresponding to parameter $\kappa_n$ of Lemma~\ref{lem: closed form of k_i}. Elementary calculations show that 
$$
\frac{\dd}{\dd a} k_a =
\frac{\sqrt[3]{\sqrt{81 a^2+12}-9 a}+\sqrt[3]{\sqrt{81 a^2+12}+9 a}}{\sqrt[3]{2} \sqrt[6]{3} \sqrt{27 a^2+4}}.
$$
Since $\sqrt{12 + 81 a^2} \pm 9a >0$, it follows that $k_a'>0$, and hence $\kappa_n'>0$ where the latter derivative is with respect to $n$. 
}

Next, we calculate 
$$
\frac{\dd}{\dd a} k_a =
\frac{\sqrt[3]{\sqrt{81 a^2+12}-9 a}+\sqrt[3]{\sqrt{81 a^2+12}+9 a}}{\sqrt[3]{2} \sqrt[6]{3} \sqrt{27 a^2+4}}.
$$
and we see that $$
\frac{\dd}{\dd n} \mathcal U_n
=
-\frac{e^{\frac{2 \pi  \kappa_n}{n}} \left(n^2 \kappa_n'+2 \pi  \kappa_n \left(\kappa_n^2+1\right) \left(\kappa_n-n \kappa_n'\right)\right)}{n^2 \kappa_n^2 \sqrt{\kappa_n^2+1}}, 
$$
Therefore, $\mathcal U_n$ is decreasing as long as we show that $\kappa_n-n \kappa_n'>0$. This is what we show next. 
Again, because 
$a=n/2\pi$, we see that 
\begin{equation}
\label{equa: show positive}
\frac{\kappa_n}{n \tfrac{\dd}{\dd n}\kappa_n'}
= 
\frac{k_a}{a \tfrac{\dd}{\dd a} k_a'}
=
\frac{
(\alpha-\beta) \left( (\alpha-\beta)^2+5\cdot 12^{1/3} \right) }{6a}
\end{equation}
where $\alpha = (9a+\sqrt{12+81a^2})^{1/3}$ and $\beta = (-9a+\sqrt{12+81a^2})^{1/3}$.
Now observe that $\alpha > 9a$ for all $a>0$, and therefore $\alpha>\beta$, concluding that expression~\eqref{equa: show positive} is indeed positive, which in return, implies that $\mathcal U_n$ is decreasing in $n$. 

Lastly, we show that $\mathcal U_n$ tends to $1$, as the number of agents $n$ grows. For that, we see from Lemma~\ref{lem: closed form of k_i} that $\kappa_n = \Theta \left(n^{1/3}\right)$, and hence $\kappa_n/n = \Theta \left(n^{-2/3}\right)$. But then, 
\begin{align*}
\lim_{n \rightarrow \infty}
\mathcal U_n  
 =
\lim_{n \rightarrow \infty}
 \frac{\sqrt{1+\kappa_n^2}}{\kappa_n}e^{2\pi \kappa_n/n} 
 =
\left( 
\lim_{n \rightarrow \infty}
 \frac{\sqrt{1+\kappa_n^2}}{\kappa_n}
 \right)
 e^{\lim_{n \rightarrow \infty} 2\pi \kappa_n/n} 
 = 1 \cdot 1 =1.
\end{align*}
\qed \end{proof}

\ignore{
MORE DETAILED JUSTIFICATION .... 
Let
\[
\alpha = \left( x+\sqrt{12+x^2} \right)^{\frac{1}{3}}, \quad \beta = \left( \sqrt{12+x^2}-x \right)^{\frac{1}{3}}.
\]
Then
\[
\alpha^3 = x+\sqrt{12+x^2}, \quad \beta^3 = \sqrt{12+x^2}-x,
\]
so that
\[
\alpha^3-\beta^3 = 2x \quad \text{and} \quad \alpha\beta = \left( \alpha^3\beta^3 \right)^{\frac{1}{3}} = 12^{\frac{1}{3}}.
\]
Using the identity
\[
(\alpha-\beta)^3 + 3\alpha\beta(\alpha-\beta) = \alpha^3-\beta^3,
\]
we obtain
\[
(\alpha-\beta)^3 + 3\,12^{\frac{1}{3}}(\alpha-\beta) = 2x,
\]
or equivalently,
\[
x = \frac{1}{2}\left[ (\alpha-\beta)^3 + 3\,12^{\frac{1}{3}}(\alpha-\beta) \right].
\]
Thus,
\[
x+12^{\frac{1}{3}}(\alpha-\beta) = \frac{1}{2}\left[ (\alpha-\beta)^3 + 3\,12^{\frac{1}{3}}(\alpha-\beta) \right] + 12^{\frac{1}{3}}(\alpha-\beta)
= \frac{1}{2}(\alpha-\beta)\left[ (\alpha-\beta)^2+5\,12^{\frac{1}{3}} \right].
\]
Since \(x>0\) implies \(\sqrt{12+x^2}>x\) so that \(\alpha>\beta\) (hence \(\alpha-\beta>0\)) and \((\alpha-\beta)^2+5\,12^{\frac{1}{3}}>0\), the product is positive. Therefore,
\[
x+12^{\frac{1}{3}}\left[\left( x+\sqrt{12+x^2} \right)^{\frac{1}{3}}-\left( \sqrt{12+x^2}-x \right)^{\frac{1}{3}}\right]>0\quad\text{for all }x>0.
\]
}

\section{Searching the Plane with $n$ Agents of Arbitrary Speeds}
\label{sec: arbitrary speeds spiral}

In this section, we prove Theorem~\ref{thm: simplified general spiral upper bound}, providing upper bounds for the problem $\mathcal P\text{-}\msp_n(c)$, where $n$ agents of arbitrary speeds search the plane.  
First, in Section~\ref{sec: off-set algo}, we introduce the \emph{Off-Set Spiral Trajectory}, a general algorithm for the problem, presented as Algorithm~\ref{algo: Off-Set Trajectory}.  
Then, in Section~\ref{sec: off-set general properties}, we analyze its general properties with respect to its parameters and the input speeds $c \in \mathbb{R}^n$.  
Building on these observations, Section~\ref{sec: off-set optimal parameters} determines an appropriate subset of agents and selects optimal parameters for running the Off-Set algorithm. This leads to Theorem~\ref{thm: opt off-set algo and performance} (page~\pageref{thm: opt off-set algo and performance}), which establishes certain conditions on the input speeds and upper bounds for $\mathcal P\text{-}\msp_n(c)$. The parameters used in the theorem are computed by Algorithm~\ref{algo: Off-Set Trajectory Parameter Selection} and serve as input to Algorithm~\ref{algo: Off-Set Trajectory}.  
Finally, in Section~\ref{sec: simplified general spiral upper bound}, we simplify the premises of Theorem~\ref{thm: opt off-set algo and performance}, leading to the proof of Theorem~\ref{thm: simplified general spiral upper bound}.  

Our construction should be viewed as a best-possible extension of spiral-based search to heterogeneous speeds, under the strong requirement that all participating agents incur the same competitive cost. One of the goals of the paper is to understand whether this natural generalization remains optimal once agent speeds differ.

\subsection{The Speed-Dependent Off-Set Spiral Algorithm for $\mathcal P\text{-}\msp_n(c)$}
\label{sec: off-set algo}

In this section, we generalize the trajectory $\{f_i^k\}_{i\in[n]}$ for $\msp_n(\bone)$ in order to propose a polynomial-time algorithm that solves $\msp_n(c)$ for arbitrary $c\in \reals^n$. Recall that we made the assumptions about $\{c_i\}_{i\in [n]}$ are given in non-increasing order and that $c_0=1$.  As a reminder, we also refer to agent-$0$ as agent-$n$, and we use $c_0=c_n=1$. 
In that direction, we propose a family of spiral-type search algorithms, as described in Algorithm~\ref{algo: Off-Set Trajectory}, which we call \emph{Off-Set Trajectories}, in which agents' trajectories do not intersect. 
The key difference from Algorithm~\ref{algo: uniform spiral trajectory} is that agents in the new trajectories are assigned arbitrary angular separations, parameterized by values $\phi_i$, which will later be defined in terms of the agents' speeds. These degrees of freedom, in addition to the spiral expansion parameter $k$, enable a competitive analysis that is agent-independent, a defining optimality property, thereby improving the overall efficiency of the agents that will contribute to search.

\begin{algorithm}[H]
\caption{Off-Set Trajectory for $\mathcal P\text{-}\msp_n(c)$}
\label{algo: Off-Set Trajectory}
\begin{algorithmic}
\REQUIRE 
$\phi \in \reals^n$ and $k\in \reals$, satisfying $2\pi = \phi_0 \geq \phi_2 \geq \ldots \geq \phi_{n-1} \geq \phi_n = 0$ and $k>0$.
\STATE \textbf{Output Trajectories:}  
$
g_i^k(t) = 
\langle 
c_i \cdot e^{kt},
t+\phi_i
\rangle,
\quad t \in \reals, \quad i \in [n].
$
\end{algorithmic}
\end{algorithm}

For any fixed parameter value $t$, the distance of agent $i$ from the origin equals $c_i e^{kt}$, and thus the distances of any two agents remain proportional. Consequently, as $t$ varies, the point configuration $\{g_i^k(t)\}_{i\in[n]}$ evolves by uniform scaling and rotation, and any two such configurations are similar. This contrasts with the unit speed case, where all agents have the same distance from the origin at each snapshot, and the configuration forms a regular $n$-gon. Thus, in the multispeed setting, the configuration need not be regular, but its shape is preserved over time up to similarity, see Figure~\ref{fig: nonuniform spiral}.

\begin{figure}[h!]
    \centering
    \includegraphics[width=0.42\textwidth]{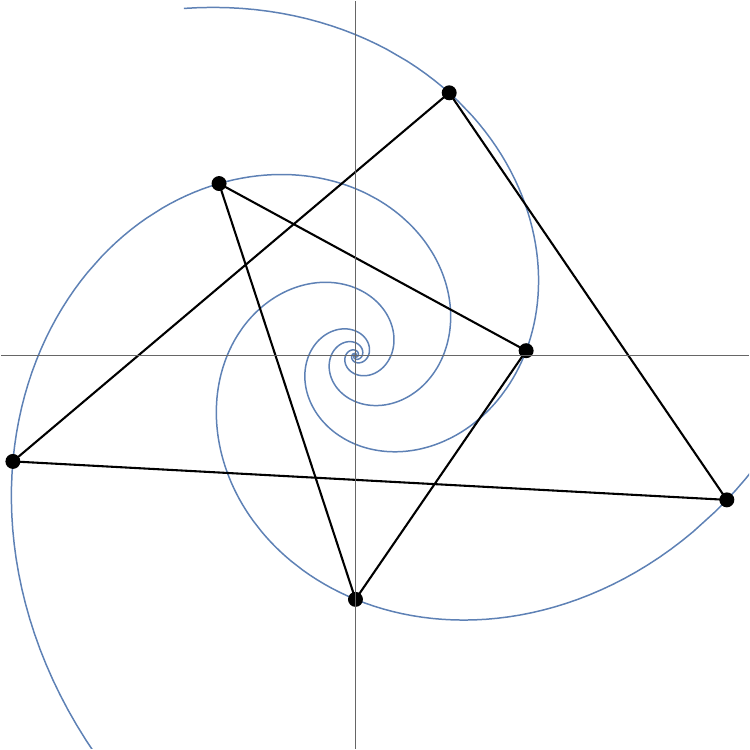}
    \caption{
    The depiction of trajectories of Algorithm~\ref{algo: Off-Set Trajectory}. 
    In this example, we use $n=3$ agents with speeds $c_0=1$, $c_1=0.9$, and $c_2=0.7$, and we plot the corresponding offset logarithmic spirals (as defined by the algorithm). 
    We also show two snapshots of the agents, at two distinct parameter values, together with the triangles formed by connecting the three agent positions in each snapshot. 
    In contrast to the unit speed case (Figure~\ref{fig: uniform spiral}), where all agents lie on the same circle at every snapshot and hence form a regular $3$-gon, the multispeed agents lie at different distances from the origin; these distances remain proportional, and the two triangles are similar (their side lengths scale by the same factor between snapshots).
    }
    \label{fig: nonuniform spiral}
\end{figure}

Note that the parameters $\phi_0 = 2\pi$ and $\phi_n = 0$ are fixed and correspond to agent-0 (also referred to as agent-$n$ for notational convenience). Due to the periodicity of trigonometric functions, we have $g_0(t) = g_n(t)$, ensuring cyclic behavior in the agents' trajectories.
As before, the condition $k > 0$ is necessary for the correctness of the search algorithm, guaranteeing that every point in the plane is eventually exposed. The ordering  
$
\phi_0 \geq \phi_1 \geq \dots \geq \phi_{n-1} \geq 0
$  
is a normalization that determines the appropriate sequence in which agents should be considered based on their speeds.  
This ordering  
ensures that whenever agent $i$ reaches a given ray, the next agent to do so is agent $(i+1) \bmod n$. Consequently, every ray is visited in cyclic order by agents $0,1,2,\dots,n-1$, repeating indefinitely.

The main contribution of this section is a polynomial-time algorithm that takes as input $\{c_i\}_{i\in [n]}$ and determines parameters for the Off-Set Algorithm~\ref{algo: Off-Set Trajectory}, yielding strong upper bounds for $\mathcal P\text{-}\msp_n(c)$. Notably, we show that for an arbitrary input $c \in \reals^n$, not all agents necessarily expose new points in a solution. Thus, minimizing the search cost of the Off-Set algorithm for $\msp_n(c)$ reduces to identifying a subset $A \subseteq [n]$ and parameters $k > 0$ and $\{\phi_i\}_{i \in A}$ that minimize the search cost.

To compute these parameters, we propose Algorithm~\ref{algo: Off-Set Trajectory Parameter Selection}, which utilizes precomputed expansion factors $\kappa_j$ from Lemma~\ref{lem: closed form of k_i} and the search cost $\mathcal U_j$ for $j$ unit-speed agents from Theorem~\ref{thm: spiral with n unit speed agents}. Algorithm~\ref{algo: Off-Set Trajectory Parameter Selection} returns parameters $\phi_j$ only for the agents that actively contribute to the search. Specifically, it identifies a key agent and all agents at least as fast, while ignoring slower agents, as they do not expose new points. The values of $\phi_j$ for these ignored agents are irrelevant to the solution.

\begin{algorithm}[H]
\caption{Off-Set Trajectory Parameter Selection}
\label{algo: Off-Set Trajectory Parameter Selection}
\begin{algorithmic}[1]
\REQUIRE $c\in \reals^n$ of $\msp_n(c)$, with $c_0=1$ and $c_i$ are non-increasing in $i\in [n]$.
\STATE \label{step:find_l} Find  
index $l\in [n]$ that satisfies
$c_l \geq e^{\frac{-2\pi \kappa_{l+1}}{l+1}}
\mathcal G_{l+1}$  
(or equivalently $c_l \geq e^{\frac{-2\pi \kappa_{l+1}}{l}} \mathcal G_l$).
\STATE \label{step:set_mu} Set 
$\mu \leftarrow \argmin_{m\in[l]} \mathcal U_{m+1} /
\mathcal G_{m+1}$ 
and $\nu \leftarrow \mu+1$.
\STATE \label{step:output} \textbf{Output:} 
$\mu, \nu$,
and 
$\phi_j = 
\frac{2\pi (\nu-j)}{\nu} - \frac{j}{\kappa_{\nu}} \ln \left(   
\mathcal G_j /\mathcal G_\nu
\right)$, 
for $j\in [\mu]$. 
\end{algorithmic}
\end{algorithm}

We emphasize that the index $l$ in Step~\ref{step:find_l} of Algorithm~\ref{algo: Off-Set Trajectory Parameter Selection} can be chosen as the largest index satisfying the condition, corresponding to the slowest agent. This choice will later optimize our upper bounds.
The goal in subsequent Sections~\ref{sec: off-set general properties} and~\ref{sec: off-set optimal parameters} is to prove the next theorem, which can be seen as precursor to Theorem~\ref{thm: simplified general spiral upper bound}. 

\begin{theorem}
\label{thm: opt off-set algo and performance}
Let $\mu$ and $\nu=\mu+1$ be the indices computed by Algorithm~\ref{algo: Off-Set Trajectory Parameter Selection} on input $c \in \mathbb{R}^n$, along with parameters $\{\phi_i\}_{i\in [\mu]}$.  
Then, we have $\mu \geq 0$, which implies $\nu \geq 1$, and the parameters $\{\phi_i\}_{i\in [\mu]}$ along with expansion factor $k=\kappa_{\nu}$ are valid for the Off-Set Algorithm~\ref{algo: Off-Set Trajectory} applied to the problem $\mathcal P\textrm{-}\msp_n(c)$.  
Moreover, the induced search cost is
$\mathcal U_{\nu}
/\mathcal G_{\nu}$,
where the value $\mathcal U_\nu$ is the search cost for searching with $\nu$ unit-speed agents, and is given in Theorem~\ref{thm: spiral with n unit speed agents}.
\end{theorem}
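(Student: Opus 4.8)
The plan is to verify the three assertions of the theorem in order — that the selection is well defined (giving $\mu\geq 0$), that the produced parameters are admissible for Algorithm~\ref{algo: Off-Set Trajectory}, and that the resulting search cost equals $\mathcal U_\nu/\mathcal G_\nu$ — with the cost computation being a speed-tracking version of the proof of Lemma~\ref{lem: unit speed spiral bounds}. For $\mu\geq 0$ I would first note that index $l=0$ always passes the test in Step~\ref{step:find_l}: since $c_0=1=\mathcal G_1$ and $e^{-2\pi\kappa_1}<1$, the inequality $c_0\geq e^{-2\pi\kappa_1}\mathcal G_1$ holds. Hence a largest admissible $l$ exists, the $\argmin$ defining $\mu$ ranges over a non-empty set of non-negative indices, and so $\mu\geq 0$ and $\nu=\mu+1\geq 1$.

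For the cost I would apply Lemma~\ref{lem: standard integral} with $\alpha=k$ and $\beta=\ln c_i$ to obtain arclength $\tfrac{\sqrt{1+k^2}}{k}c_i e^{kt}$ along $g_i^k$, so that the \emph{time} for agent-$i$ to reach parameter $t$ is $\tfrac{\sqrt{1+k^2}}{k}e^{kt}$, independent of $c_i$. Fixing a target ray and using $2\pi=\phi_0\geq\cdots\geq\phi_{\nu-1}\geq\phi_\nu=0$, the agents cross the ray in cyclic order $0,1,\dots,\nu-1$, with agent-$i$ following agent-$(i-1)$ after a parameter gap $\Delta_i:=\phi_{i-1}-\phi_i$. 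Exactly as in Lemma~\ref{lem: unit speed spiral bounds}, the worst target newly exposed by agent-$i$ sits just beyond the radius of agent-$(i-1)$'s preceding crossing, giving cost $\tfrac{\sqrt{1+k^2}}{k\,c_{i-1}}e^{k\Delta_i}$. Substituting the offsets of Step~\ref{step:output} and simplifying with the telescoping identity $j\ln\mathcal G_j-(j-1)\ln\mathcal G_{j-1}=\ln c_{j-1}$ yields the key relation $\Delta_i=\tfrac{2\pi}{\nu}+\tfrac1k\ln(c_{i-1}/\mathcal G_\nu)$, whence $e^{k\Delta_i}/c_{i-1}=e^{2\pi k/\nu}/\mathcal G_\nu$ is \emph{independent of $i$}. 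Setting $k=\kappa_\nu$ collapses the per-agent cost to the common value $\tfrac{\sqrt{1+\kappa_\nu^2}}{\kappa_\nu}e^{2\pi\kappa_\nu/\nu}\cdot\tfrac1{\mathcal G_\nu}=\mathcal U_\nu/\mathcal G_\nu$, which is therefore the worst-case cost (the supremum over each radial interval is attained at its lower end, and scale-invariance removes the dependence on the cycle index).

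Finally, for admissibility I must check $k=\kappa_\nu>0$, which is immediate from Lemma~\ref{lem: closed form of k_i}, and that the offsets are non-increasing, i.e. $\Delta_i\geq 0$ for all $i$. By the relation above this is equivalent to $c_{i-1}\geq\mathcal G_\nu e^{-2\pi\kappa_\nu/\nu}$; since speeds are non-increasing the binding instance is the slowest participating agent $c_\mu=c_{\nu-1}$, which is precisely the Step~\ref{step:find_l} inequality evaluated at $l=\mu$ (the boundary values $\phi_0=2\pi$ and $\phi_\nu=0$ being read off the formula directly). The main obstacle is therefore closing this loop: I must show that the \emph{selected} $\mu$ — chosen as an $\argmin$ rather than by the Step~\ref{step:find_l} test — itself passes that test. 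I plan to handle this in the general-properties analysis by proving that the admissible indices form an initial segment $\{0,1,\dots,l\}$: the test at index $m$ is exactly the statement that the slowest of the $m+1$ fastest agents strictly decreases $\mathcal U_{m+1}/\mathcal G_{m+1}$, and I expect this ``contribution'' property to be downward closed, so that every index not exceeding the maximal admissible $l$ — in particular $\mu$ — is admissible and yields monotone offsets. With this established, both feasibility and the exact search cost $\mathcal U_\nu/\mathcal G_\nu$ follow.
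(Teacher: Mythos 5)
Your decomposition into the three checks is the right one, and your cost computation is essentially the paper's own route (Lemmas~\ref{lem: preserved lengths}, \ref{lem: off-set spiral for n agents}, \ref{lem: search cost with assumptions} and~\ref{lem: def of phi that make all agents expose points and cost ind of agent}) carried out by direct telescoping: the identity $\Delta_i=\tfrac{2\pi}{\nu}+\tfrac1k\ln(c_{i-1}/\mathcal G_\nu)$ is correct and does collapse the per-agent cost to $\mathcal U_\nu/\mathcal G_\nu$ once $k=\kappa_\nu$. The base case $\mu\geq 0$ is also handled exactly as in the paper.

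The gap is in the step you yourself flag as the main obstacle, and your proposed fix does not work. First, the Step~\ref{step:find_l} test at index $m$ is \emph{not} the statement that $c_m$ strictly decreases $\mathcal U_{m+1}/\mathcal G_{m+1}$: decreasing that ratio is equivalent to $c_m\geq(\mathcal U_{m+1}/\mathcal U_m)^{m+1}\mathcal G_m$, whereas the test reads $c_m\geq e^{-2\pi\kappa_{m+1}/m}\mathcal G_m$. These thresholds differ, and proving that the former implies the latter is exactly the content of the technical Lemma~\ref{lem: critical ratio at least 1}. Second, and more seriously, the set of indices passing the test is \emph{not} an initial segment, so downward closure fails. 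Take $c=(1,0.1,0.1)$: at $m=1$ the test (in its second form) is $c_1\geq e^{-2\pi\kappa_2}\approx 0.159$, which fails for $c_1=0.1$; at $m=2$ it is $c_2^2\geq e^{-2\pi\kappa_3}c_1\approx 0.0077$, which holds since $c_2^2=0.01$. So admissibility at $m$ does not propagate to $m-1$, and your planned "general-properties'' lemma is false as stated. (The theorem survives in this instance only because the $\argmin$ happens to land on $\mu=0$.)

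The correct way to close the loop --- and what the paper actually does, in Section~\ref{sec: simplified general spiral upper bound} rather than inside the proof of Theorem~\ref{thm: opt off-set algo and performance} --- is to use the $\argmin$ property directly: when $\mu\geq 1$, the index $\mu-1$ also lies in the range of the minimization in Step~\ref{step:set_mu}, so $\mathcal U_\mu/\mathcal G_\mu\geq\mathcal U_{\mu+1}/\mathcal G_{\mu+1}$, which rearranges to $c_\mu\geq(\mathcal U_{\mu+1}/\mathcal U_\mu)^{\mu+1}\mathcal G_\mu$; Lemma~\ref{lem: critical ratio at least 1} then upgrades this to $c_\mu\geq e^{-2\pi\kappa_{\mu+1}/\mu}\mathcal G_\mu$, i.e.\ $c_\mu\geq e^{-2\pi\kappa_\nu/\nu}\mathcal G_\nu$, which is precisely the admissibility condition you need for all $\Delta_i\geq 0$ (and for every participating agent to expose new points). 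That last inequality is a genuinely non-trivial analytic fact relying on the monotonicity and concavity of $\kappa_n$ from Proposition~\ref{prop: monotonicity of Cn}; it cannot be replaced by a combinatorial downward-closure argument, so your proof is incomplete until you import or reprove that lemma.
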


\subsection{Properties of the Off-Set Trajectories}
\label{sec: off-set general properties}

In this section, we identify and analyze key properties of Algorithm~\ref{algo: Off-Set Trajectory}. Notably, in this section we do not assume that the agent speeds $c_i$ are given in non-increasing order.  
We begin by showing that agents following the trajectories $\{g_i^k(t)\}_{i \in [n]}$ search the plane in an invariant pattern.

\begin{lemma}
\label{lem: preserved lengths}
For the Off-Set search strategy $\{g^k_i\}_{i\in[n]}$ to $\msp_n(c)$, and for every $i \in [n]$, we have $\frac{1}{c_i} \ell_i(t_0) = \ell_0(t_0)$, for all $t_0 \in \reals$. 
As a result, at any moment of the execution of the search strategy, any two agents maintain the same angular distance. 
\end{lemma}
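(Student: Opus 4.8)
The plan is to obtain both claims as near-immediate consequences of Lemma~\ref{lem: standard integral}. First I would rewrite each trajectory in the form to which that lemma applies: since
$$
g_i^k(t) = \langle c_i e^{kt}, t+\phi_i\rangle = \langle e^{kt + \ln c_i}, t+\phi_i\rangle,
$$
invoking Lemma~\ref{lem: standard integral} with $\alpha = k > 0$, $\beta = \ln c_i$, and phase $\phi_i$ gives
$$
\ell_i(t_0) = \int_{-\infty}^{t_0}\norm{(g_i^k)'(t)}\dd t = \frac{\sqrt{1+k^2}}{k}\,e^{k t_0 + \ln c_i} = \frac{\sqrt{1+k^2}}{k}\,c_i\,e^{k t_0}.
$$
Dividing by $c_i$ yields $\tfrac{1}{c_i}\ell_i(t_0) = \tfrac{\sqrt{1+k^2}}{k}e^{k t_0}$, an expression manifestly independent of $i$. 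Specializing to $i=0$ and using the standing normalization $c_0 = 1$ identifies this common value with $\ell_0(t_0)$, which proves the first claim $\tfrac{1}{c_i}\ell_i(t_0) = \ell_0(t_0)$.

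For the second claim I would translate the parameter $t$ into physical time. An agent-$i$ moving at speed $c_i$ reaches the point $g_i^k(t)$ at physical time $\ell_i(t)/c_i$, which by the computation above equals $\tfrac{\sqrt{1+k^2}}{k}e^{kt}$ for \emph{every} $i$. Since this parameter-to-time relation is the same strictly increasing function of $t$ for all agents, at any fixed physical time $T$ all agents simultaneously occupy the common parameter value $t = t(T)$ determined by $T = \tfrac{\sqrt{1+k^2}}{k}e^{kt}$. At that shared $t$, agent-$i$ has angular coordinate $t+\phi_i$, so the angular separation between agents $i$ and $j$ is $(t+\phi_i) - (t+\phi_j) = \phi_i - \phi_j$, which is constant in $t$ and hence constant in time.

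There is no genuinely hard step here; the result is a direct corollary of the already-established arc-length formula. The only point requiring care—and the one I would state explicitly to avoid confusion—is the synchronization argument in the second paragraph: one must observe that because $\ell_i(t)/c_i$ is independent of $i$, equal physical times correspond to equal values of the parametrization parameter $t$ across all agents, which is precisely what makes the pairwise angular gaps collapse to the constant offsets $\phi_i - \phi_j$. This invariance is the structural feature that the subsequent competitive analysis exploits.
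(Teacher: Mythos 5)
Your proof is correct and follows essentially the same route as the paper: both arguments reduce to Lemma~\ref{lem: standard integral} to show $\ell_i(t_0) = c_i\,\ell_0(t_0)$ (you via the substitution $\beta=\ln c_i$, the paper by factoring $c_i$ out of the integrand), and both then observe that equal parameter values correspond to equal physical times across agents, so the angular gaps are the constant offsets $\phi_i-\phi_j$. Your explicit statement of the parameter-to-time synchronization is a slightly more careful rendering of the same idea.
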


\begin{proof}
Fix some $t_0 \in \reals$, and agent $i$ with speed $c_i$. 
We observe that the angular distance between points $g_0^k(t_0)$ and $g_i^k(t_0)$ equals $2\pi-\phi_i$. Hence, the two agents preserve their angular distance if and only if they reach the corresponding points $g_0^k(t_0)$ and $g_i^k(t_0)$ at the same time. 
We argue next that the length of trajectory $g^k_i$ up to $t=t_0$ is $\ell_i(t_0) = c_i \ell_0(t_0)$. 
Indeed, let $h:\reals\to \reals^2$ be as in the statement of Lemma~\ref{lem: standard integral}. We note that $h(t)=g_0^k(t)$, since $\phi_0=2\pi$ and $c_0=1$. 
But then, by the same lemma, and for any speed-$c_i$, we have 
$$
\ell_i(t_0) = 
\int_{-\infty}^{t_0}
\norm{
\dd g_i^k(t) / \dd t
}
\dd t
= 
c_i 
\int_{-\infty}^{t_0}
\norm{
\dd h(t) / \dd t
}
\dd t
= 
c_i 
\int_{-\infty}^{t_0}
\norm{
\dd g_0^k(t) / \dd t
}
\dd t
= 
c_i 
\ell_0(t_0),
$$
as promised. 
But then, the speed-$c_i$ agent reaches point $g_i^k(t_0)$ in time $\frac1{c_i} \ell_i(t_0)= \frac{1}{c_i}c_i \ell_0(t_0)= \ell_0(t_0)$, which is the same as the time that the unit-speed agent needs to reach point $g_0^k(t_0)$. 
That is, the angular distance between agent-$i$ and agent-$0$ is preserved, hence the angular distance between any two agents is preserved. 
\qed \end{proof}

The next lemma establishes a condition for every agent to expose new points, and recall that for the Off-Set trajectory parameters, we assume that $2\pi=\phi_0 \geq \phi_1 \geq \ldots \geq \phi_{n-1}\geq 0$.

\begin{lemma}
\label{lem: off-set spiral for n agents}
For the Off-Set search strategy $\{g^k_i\}_{i\in[n]}$ to $\msp_n(c)$ with parameters $\{\phi_i\}_{i\in [n]}$ and $k>0$, all agents expose new points if and only if
$$
c_{i} e^{- k \phi_{i}} < c_{i+1} e^{- k \phi_{i+1}}, i\in [n].
$$
Here, agent $n$ is identified as agent $0$ and $c_0=1$, $\phi_n=0$.
Moreover, if the above condition holds, then for every $s\in \integers$ and for $i \in [n]$, agent $i+1$ is the first to expose all points $(x,0)\in \reals^2$ on the ray $\mathcal L_0$, where 
$$
x\in \left(
 c_i e^{2\pi k s - k \phi_i}, 
 c_{i+1} e^{2\pi k s  - k \phi_{i+1}}
\right]. 
$$
\end{lemma}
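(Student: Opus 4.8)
The plan is to track, along the ray $\mathcal{L}_0$, the chronological sequence of crossing events produced by the agents, and to compare the radial distance each crossing exposes against the largest distance exposed so far. First I would set up a common clock: by Lemma~\ref{lem: preserved lengths} (together with the corollary to Lemma~\ref{lem: standard integral}), the speed-$c_i$ agent reaches $g_i^k(t)$ at real time $\ell_0(t) = \tfrac{\sqrt{1+k^2}}{k}e^{kt}$, independent of $i$ and strictly increasing in $t$. Hence the parameter $t$ is a faithful proxy for time, and ordering events by $t$ coincides with ordering them chronologically.

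Next I would locate the crossings. Agent $i$'s angular coordinate $t+\phi_i$ is a multiple of $2\pi$ exactly when $t = t_{i,m} := 2\pi m - \phi_i$ for $m \in \integers$, and there the exposed radius is $R_{i,m} := c_i e^{k(2\pi m - \phi_i)} = a_i e^{2\pi k m}$, where I abbreviate $a_i := c_i e^{-k\phi_i}$; at that instant agent $i$ exposes precisely the points $(x,0)$ with $0 < x \leq R_{i,m}$. The ordering $2\pi = \phi_0 \geq \cdots \geq \phi_{n-1} \geq \phi_n = 0$ gives $t_{0,m} \leq t_{1,m} \leq \cdots \leq t_{n-1,m} \leq t_{n,m} = 2\pi m = t_{0,m+1}$, so in chronological order the agents visit $\mathcal{L}_0$ cyclically as $0,1,\dots,n-1,0,\dots$, and the associated radii form the time-ordered sequence $\dots, a_0 e^{2\pi k m}, a_1 e^{2\pi k m}, \dots, a_{n-1}e^{2\pi k m}, a_0 e^{2\pi k(m+1)}, \dots$, using the identity $a_n = e^{2\pi k}a_0$ coming from $c_n = 1$, $\phi_n = 0$. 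I would also record that for a general ray $\mathcal{L}_\psi$ the same picture holds with every radius scaled by $e^{k\psi}$, so analyzing $\mathcal{L}_0$ loses no generality.

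The core step is then immediate. Because the exposed region is always an initial segment $[0, R_{\max}]$, with $R_{\max}$ the running maximum of the crossing radii seen so far, a given crossing exposes new points if and only if its radius strictly exceeds all earlier ones. Consequently, every agent exposes new points at every crossing if and only if the time-ordered radius sequence is strictly increasing; and since that sequence is obtained by periodically scaling the block $(a_0,\dots,a_{n-1},a_n)$ by powers of $e^{2\pi k}$, this is equivalent to $a_0 < a_1 < \cdots < a_{n-1} < a_n$, i.e. to $c_i e^{-k\phi_i} < c_{i+1}e^{-k\phi_{i+1}}$ for all $i \in [n]$. For the converse I would argue the contrapositive: if $a_i \geq a_{i+1}$ for some $i$, then crossing $(i+1,m)$ occurs no earlier than the already-exposing crossing $(i,m)$ yet has radius $R_{i+1,m} \leq R_{i,m}$, so agent $i+1$ exposes nothing new. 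Finally, under the condition the running maximum just before crossing $(i+1,s)$ equals $R_{i,s}$ (it is the last crossing of block $s$ preceding it, and exceeds every earlier radius), so the newly exposed points are exactly $(R_{i,s}, R_{i+1,s}] = \bigl(c_i e^{2\pi k s - k\phi_i},\, c_{i+1}e^{2\pi k s - k\phi_{i+1}}\bigr]$, a crossing belonging to agent $i+1$, which is the ``moreover'' claim.

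I expect the only real obstacle to be bookkeeping rather than depth: correctly pinning down the crossing parameters $t_{i,m}$ and radii $R_{i,m}$, justifying the chronological order through the common-clock reduction, and handling the agent-$n$/agent-$0$ wraparound via the identity $a_n = e^{2\pi k}a_0$ so that the cyclic boundary case $i = n-1$ is subsumed without a separate argument.
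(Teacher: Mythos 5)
Your proposal is correct and follows essentially the same route as the paper's proof: reduce to the ray $\mathcal L_0$, use Lemma~\ref{lem: preserved lengths} to identify the parameter $t$ with a common clock, locate the consecutive crossing times $2\pi s-\phi_i$ with radii $c_ie^{2\pi ks-k\phi_i}$, and compare them. Your global running-maximum bookkeeping (making explicit that the last crossing before agent $(i+1)$'s is agent $i$'s, and that it dominates all earlier radii) is a slightly more systematic rendering of what the paper argues pairwise, but it is not a different argument.
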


\begin{proof}
Consider $i \in [n]$ and agents $i$ and $i+1$ with speeds $c_i$ and $c_{i+1}$, respectively, so that $\phi_i \geq \phi_{i+1}$. Let $t_0$ be the time at which agent-$i$ reaches $\mathcal L_0$. By rotating the plane around the origin, we may assume that $t_0+\phi_i = 2\pi s$ for some $s \in \integers$.  
Using Lemma~\ref{lem: preserved lengths}, we see that by time $\ell_0(t_0)$, all points in the closed interval  
$
[0, c_i e^{kt_0}] = [0, c_i e^{2\pi s k - k\phi_i}]
$  
are already exposed by agent-$i$. Since time increases with $t_0$ and $\phi_i \geq \phi_{i+1}$, it follows that  
$
0 \leq \phi_i - \phi_{i+1} \leq 2\pi.
$  
The next time agent-$(i+1)$ reaches the positive horizontal axis is when it arrives at  
$$
g_{i+1}^k(t_0+\phi_i - \phi_{i+1}) = g_{i+1}^k(2\pi s - \phi_{i+1})
= c_{i+1} e^{2\pi k s - k \phi_{i+1}} (1,0).
$$  
New points are exposed precisely when  
$$
c_{i+1} e^{2\pi k s - k \phi_{i+1}} > c_i e^{2\pi k s - k \phi_i},
$$  
or equivalently,  
$$
c_{i+1} e^{- k \phi_{i+1}} > c_i e^{- k \phi_i}.
$$  
Furthermore, the newly exposed points lie in the semi-open (and semi-closed) interval as stated in the lemma.
\qed \end{proof}

We now calculate the search cost of the Off-Set algorithm under specific conditions on its parameters and input speeds. To do so, we introduce the \emph{agent-$i$ search cost}, defined as the search cost of the Off-Set algorithm restricted to points exposed by agent-$i$.  
If only a subset of agents $A \subseteq [n]$ expose new points, the overall search cost of the algorithm is given by the maximum agent-$i$ search cost over all $i \in A$.

\begin{lemma}
\label{lem: search cost with assumptions}
Consider the Off-Set search strategy $\{g^k_i\}_{i\in[n]}$ with parameters $\{\phi_i\}_{i\in [n]}$ and expansion factor $k>0$, and assume that all agents expose new points. Then, the agent-$(i+1)$ search cost equals 
$$
\frac{\sqrt{1+k^2}}{k} \frac{1}{c_{i} }
e^{-k (\phi_{i+1}-\phi_{i})},~~
i\in [n],
$$
where agent-$n$ is identified as agent-$0$, and $c_0=1$ and $\phi_0=0$. 
Moreover, if the agent-$i$ search cost is the same for all agents $i\in [n]$, then the search cost of the search strategy equals 
$$
\frac{\sqrt{1+k^2}}{k}
\frac
{
e^{2\pi k/n}
}
{
\mathcal G_n
}.
$$
\end{lemma}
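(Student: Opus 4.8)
The plan is to derive the first formula by combining the exposure intervals from Lemma~\ref{lem: off-set spiral for n agents} with the arrival-time computation of Lemmas~\ref{lem: preserved lengths} and~\ref{lem: standard integral}, and then to obtain the second formula from the equal-cost hypothesis via a multiplicative (geometric-mean) argument in which the angular offsets telescope.

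First I would fix $i \in [n]$ and recall from Lemma~\ref{lem: off-set spiral for n agents} that, for each $s \in \integers$, agent-$(i+1)$ is the first agent to expose exactly the points $(x,0)$ on $\mathcal L_0$ with $x \in \left( c_i e^{2\pi k s - k\phi_i},\, c_{i+1} e^{2\pi k s - k\phi_{i+1}} \right]$. The key observation is that all of these points are exposed simultaneously: agent-$(i+1)$ crosses $\mathcal L_0$ at the parameter $t = 2\pi s - \phi_{i+1}$, at which moment it sits at radius $c_{i+1} e^{k(2\pi s - \phi_{i+1})}$ and thereby exposes the entire segment from the origin out to that radius. Hence $t_A$ is constant over the whole interval and, by Lemma~\ref{lem: preserved lengths} together with Lemma~\ref{lem: standard integral} applied to $g_0^k$ (which matches the curve $h$ with $\alpha = k$, $\beta = 0$), equals $\ell_0(2\pi s - \phi_{i+1}) = \tfrac{\sqrt{1+k^2}}{k} e^{k(2\pi s - \phi_{i+1})}$. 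Since $t_A$ is fixed, the ratio $t_A/\norm A = t_A/x$ is largest at the infimum of $x$, i.e. the open left endpoint $x \to c_i e^{2\pi k s - k\phi_i}$, giving $\tfrac{\sqrt{1+k^2}}{k}\tfrac{1}{c_i} e^{-k(\phi_{i+1}-\phi_i)}$. The factor $e^{2\pi k s}$ cancels, so this value is independent of $s$, establishing the claimed agent-$(i+1)$ search cost.

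For the second statement, I would denote the common agent-$i$ search cost by $V$ and write, for every $i \in [n]$,
$$
V = \frac{\sqrt{1+k^2}}{k}\,\frac{1}{c_i}\, e^{k(\phi_i - \phi_{i+1})}.
$$
Taking the product over all $i \in [n]$ turns the $1/c_i$ factors into $1/\prod_{i\in[n]} c_i = 1/\mathcal{G}_n^{\,n}$ and turns the exponents into $e^{k\sum_{i\in[n]}(\phi_i - \phi_{i+1})}$. The exponent sum telescopes to $\phi_0 - \phi_n = 2\pi - 0 = 2\pi$ under the boundary conventions $\phi_0 = 2\pi$ and $\phi_n = 0$ (with agent-$n$ identified as agent-$0$). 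Hence $V^n = \bigl(\tfrac{\sqrt{1+k^2}}{k}\bigr)^n \tfrac{1}{\mathcal{G}_n^{\,n}}\, e^{2\pi k}$, and taking $n$-th roots yields $V = \tfrac{\sqrt{1+k^2}}{k}\,\tfrac{e^{2\pi k/n}}{\mathcal{G}_n}$, as required; since all agents expose new points and share this cost, it is also the overall search cost.

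The computation is essentially bookkeeping, and I expect two points to require the most care. The first is recognizing that, because agent-$(i+1)$ exposes its entire exposure interval at a single crossing time, $t_A$ is constant there and the worst case is governed purely by the open left endpoint (and is uniform in $s$ thanks to the explicit $e^{2\pi k s}$ scaling). The second, and the genuine subtlety, is the correct handling of the cyclic identification agent-$n$ $=$ agent-$0$ alongside the asymmetric boundary values $\phi_0 = 2\pi$ and $\phi_n = 0$: treating $\phi$ as a genuinely periodic index would make the telescoping sum collapse to $0$, whereas the intended convention produces the crucial factor $e^{2\pi k}$ that, after the $n$-th root, supplies the $e^{2\pi k/n}$ term in the final expression.
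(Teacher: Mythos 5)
Your proposal is correct and follows essentially the same route as the paper's proof: the first formula is obtained by combining the exposure interval of Lemma~\ref{lem: off-set spiral for n agents} with the arrival time $\ell_0(2\pi s-\phi_{i+1})$ from Lemmas~\ref{lem: preserved lengths} and~\ref{lem: standard integral} and taking the supremum at the open left endpoint, and the second formula comes from the same telescoping product $S^n=\prod_i S_i$ with the boundary convention $\phi_0=2\pi$, $\phi_n=0$. Your explicit remark that all newly exposed points on the ray share a single exposure time, and your caution about the asymmetric boundary values in the telescoping sum, are both accurate readings of what the paper does implicitly.
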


\begin{proof}
By appropriately rotating the plane and the search trajectories (without altering them), we may assume that the target is at $(x,0) \in \mathcal L_0$, where $x > 0$. Since $k > 0$, there exists $s \in \integers$ such that  
$$
x \in \left(e^{2\pi k s - 2\pi k}, e^{2\pi k s} \right].
$$  
Now, fix $i \in [n]$. By Lemma~\ref{lem: off-set spiral for n agents}, agent-$(i+1)$ exposes all points  
$
x \in \left(
c_{i} e^{2\pi k s  - k \phi_{i}} , c_i e^{2\pi k s - k \phi_{i+1}}
\right],
$  
where $c_0 = 1$, $\phi_0 = 2\pi$, and agent-$n$ is identified with agent-$0$. For notational convenience, denote this interval as $\mathcal{I}^s_i$.  
By Lemma~\ref{lem: preserved lengths}, agent-$(i+1)$ exposes $(x,0)$ at time  
$$
\frac{\ell_{i+1}(2\pi k s - k \phi_{i+1})}{c_{i+1}} = \ell_0(2\pi k s - k \phi_{i+1}).
$$  
Thus, the search cost for agent-$(i+1)$ is given by  
$$
\sup_{s\in \integers}
\sup_{ x\in \mathcal{I}^s_i } 
\frac{\ell_0(2\pi k s - k \phi_{i+1})}{x}
\stackrel
{
\text{Lemma~\ref{lem: standard integral}}
}
{=}
\sup_{s\in \integers}
\frac{\sqrt{1+k^2}}{k} 
\frac{e^{2\pi k s - k \phi_{i+1}}}
{
c_{i} e^{2\pi k s  - k \phi_{i}}
}
=
\frac{\sqrt{1+k^2}}{k} \frac{1}{c_{i} }\frac{e^{-k \phi_{i+1}}}{e^{-k \phi_{i}}},
$$  
concluding the first claim of the lemma.  

Now, assume that the agent-$i$ search cost, denoted by $S_i$, is invariant for all $i \in [n]$, meaning $S_i = S$ for some search cost $S$.  
Since agent-$0$ is identified with agent-$n$, and we have $c_0 = c_n = 1$ and $\phi_0 = 2\pi$, $\phi_n = 0$, it follows that  
$$
S^n 
= \prod_{i=1}^{n} S_i 
= 
\left( \frac{\sqrt{1+k^2}}{k} \right)^n
 \frac{1}{\prod_{i=0}^{n-1} c_{i} }
\prod_{i=0}^{n-1}
 \frac{e^{-k \phi_{i+1}}}{e^{-k \phi_{i}}}
=
 \left( \frac{\sqrt{1+k^2}}{k} \right)^n
 \frac{1}{\prod_{i=0}^{n-1} c_{i} }
e^{2\pi k},
$$  
which establishes the desired formula for $S$.
\qed \end{proof}

The next lemma provides a condition on $c \in \reals^n$ and parameter $k > 0$ that ensures $\msp_n(c)$ admits a solution using an Off-Set trajectory where all agents expose new points and have the same agent search cost.

\begin{lemma}
\label{lem: def of phi that make all agents expose points and cost ind of agent}
Consider agents $\{c_i\}_{i \in [n]}$ satisfying  
$
\min_{i\in [n]} c_i \geq e^{-2\pi k/n} \mathcal{G}_n
$  
for some constant $k > 0$. For $j \in [n]$, define  
\begin{equation}
\label{equa: equializers phi}
\phi_j = \frac{2\pi (n-j)}{n}  
- \frac{j}{k} \ln \left( 
\frac{\mathcal{G}_j}{\mathcal{G}_n}
\right).
\end{equation}
Then, the parameters $\{\phi_i\}_{i\in [n]}$ along with $k$ are valid for the Off-Set trajectory $\{g_i^k\}_{i\in [n]}$, ensuring that all agents expose new points and the agent-$i$ search cost remains independent of $i \in [n]$.
\end{lemma}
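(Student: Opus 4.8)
The plan is to establish the three assertions of the lemma — that the $\phi_j$ are admissible parameters for Algorithm~\ref{algo: Off-Set Trajectory} (the ordering $2\pi=\phi_0\geq\phi_1\geq\cdots\geq\phi_{n-1}\geq\phi_n=0$), that every agent exposes new points, and that the agent-$i$ search cost is independent of $i$ — all from a single telescoping identity for the consecutive differences $\phi_i-\phi_{i+1}$. Once that identity is in hand, each conclusion is read off by comparing against the hypothesis $\min_{i\in[n]}c_i\geq e^{-2\pi k/n}\mathcal{G}_n$ and invoking the earlier lemmas, so the real content is one clean algebraic computation. I would first record the boundary values: substituting $j=0$ into~\eqref{equa: equializers phi} gives $\phi_0=2\pi$ (the logarithmic term carries a factor $j$ and vanishes), while $j=n$ gives $\phi_n=\frac{n}{k}\ln(\mathcal{G}_n/\mathcal{G}_n)=0$, so the endpoint constraints hold automatically.

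For the monotonicity I would abbreviate $j\ln\mathcal{G}_j=\sum_{m=0}^{j-1}\ln c_m=:L_j$, which rewrites~\eqref{equa: equializers phi} as
$$\phi_j=\frac{2\pi(n-j)}{n}-\frac{1}{k}L_j+\frac{j}{k}\ln\mathcal{G}_n.$$
Since $L_{i+1}-L_i=\ln c_i$, the difference telescopes to the key identity
$$\phi_i-\phi_{i+1}=\frac{2\pi}{n}+\frac{1}{k}\ln\!\left(\frac{c_i}{\mathcal{G}_n}\right),\qquad i\in[n].$$
The ordering $\phi_i\geq\phi_{i+1}$ is then equivalent to $\frac{1}{k}\ln(c_i/\mathcal{G}_n)\geq-\frac{2\pi}{n}$, i.e.\ to $c_i\geq e^{-2\pi k/n}\mathcal{G}_n$; as $i$ ranges over $[n]$ this covers every agent, so it is precisely the hypothesis and the parameters are valid.

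Next I would use Lemma~\ref{lem: off-set spiral for n agents}: every agent exposes new points iff $c_i e^{-k\phi_i}<c_{i+1}e^{-k\phi_{i+1}}$, equivalently $\ln c_i-\ln c_{i+1}<k(\phi_i-\phi_{i+1})$. Substituting the key identity, the $\ln c_i$ terms cancel and this collapses to $c_{i+1}>e^{-2\pi k/n}\mathcal{G}_n$, which again follows from the hypothesis as $i+1$ ranges over all agents (recalling $c_n=c_0=1$). For the cost, I would feed the identity into the agent-$(i+1)$ formula $\frac{\sqrt{1+k^2}}{k}\,\frac{1}{c_i}\,e^{-k(\phi_{i+1}-\phi_i)}$ of Lemma~\ref{lem: search cost with assumptions}: since $e^{-k(\phi_{i+1}-\phi_i)}=e^{2\pi k/n}\,c_i/\mathcal{G}_n$, the factor $c_i$ cancels and the cost equals $\frac{\sqrt{1+k^2}}{k}\,\frac{e^{2\pi k/n}}{\mathcal{G}_n}$ for every $i$, which is the claimed agent-independence.

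The one genuinely delicate point — and the part I expect to require care — is the strict-versus-nonstrict boundary: the hypothesis is stated with $\geq$, whereas the exposure criterion of Lemma~\ref{lem: off-set spiral for n agents} is a \emph{strict} inequality. I would therefore need to argue that the degenerate case $c_{i+1}=e^{-2\pi k/n}\mathcal{G}_n$, in which that agent exposes only a measure-zero interval, either does not alter the stated search cost or is ruled out by the strict form of the condition enforced upstream in Algorithm~\ref{algo: Off-Set Trajectory Parameter Selection}. Everything else reduces to the telescoping and cancellation described above.
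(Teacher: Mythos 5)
Your proof is correct and follows essentially the same route as the paper: the same telescoping identity $\phi_i-\phi_{i+1}=\tfrac{2\pi}{n}+\tfrac{1}{k}\ln\left(c_i/\mathcal{G}_n\right)$ drives all three claims, with validity and exposure read off against the hypothesis exactly as in the paper's proof. Your direct substitution into the agent-$(i+1)$ cost formula is slightly cleaner than the paper's product argument for cost-invariance, and the strict-versus-nonstrict boundary issue you flag is genuine but is equally present in the paper's own proof (which derives only $e^{k\phi_j}/e^{k\phi_{j+1}}\geq c_j/c_{j+1}$ where Lemma~\ref{lem: off-set spiral for n agents} requires strict inequality), so you are not missing anything the authors resolved.
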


\begin{proof}
Fix $k > 0$ and angle parameters $\phi_j$ as per~\eqref{equa: equializers phi}. We use $c_0 = c_n = 1$ and $\phi_n = 0$.

To show that $\{\phi_i\}_{i\in [n]}$ along with $k$ are valid parameters for the Off-Set Algorithm~\ref{algo: Off-Set Trajectory}, we must verify that $\phi_j \geq \phi_{j+1}$ for all $j\in [n]$.  
Equation~\eqref{equa: equializers phi} ensures that $\phi_0 = 0$ and $\phi_n = 2\pi$. For the remaining parameters, we compute  
$$
\phi_{j} - \phi_{j+1}
\stackrel{\eqref{equa: equializers phi}}{=} 
\frac{2\pi}{n}
+ \frac{1}{k} \ln\left( \frac{c_{j}}{\mathcal{G}_n} \right)
\geq 
\frac{2\pi}{n} - \frac{2\pi}{n} = 0,
$$  
where the last inequality follows from $\min_{i\in [n]}  c_i \geq e^{-2\pi k/n} \mathcal{G}_n$.

Next, we prove that all agents expose new points. We must show that  
$$
c_{j} e^{- k \phi_{j}} < c_{j+1} e^{- k \phi_{j+1}}
$$  
for all $j \in [n]$, which, by Lemma~\ref{lem: off-set spiral for n agents}, is equivalent to all agents exposing new points. Expanding this inequality using~\eqref{equa: equializers phi}, we obtain  
$$
\frac{e^{k\phi_{j}}}{e^{k\phi_{j+1}}}
\stackrel{\eqref{equa: equializers phi}}{=} 
\frac{e^{2\pi k (n-j)/n}}{e^{2\pi k (n-j-1)/n}}
\frac{\prod_{i=0}^{j}c_i}{\left(\prod_{i=0}^{n-1}c_i \right)^{(j+1)/n}}
\frac{\left(\prod_{i=0}^{n-1}c_i \right)^{j/n}}{\prod_{i=0}^{j-1}c_i}
=
e^{2\pi k /n} c_{j} \left(\prod_{i=0}^{n-1}c_i \right)^{-1/n}
\geq \frac{c_{j}}{c_{j+1}},
$$  
where the last inequality follows from $c_i \geq e^{-2 \pi k/n} \mathcal{G}_n$ for all $i\in [n]$, applied at $i = j+1$.  

Thus, we have verified that parameters $k > 0$ and $\{\phi_i\}_{i\in [n]}$ are valid for the Off-Set trajectory $\{g_i^k\}_{i\in [n]}$ and that all agents expose new points. It remains to show that each agent induces the same search cost.  

Let $S_j$ denote the agent-$j$ search cost and $S$ the Off-Set trajectory search cost. We show that $S = S_j$ if and only if the trajectory parameters satisfy~\eqref{equa: equializers phi}. Since we have already established the validity of $\{\phi_i\}_{i \in [n]}$ and ensured that all agents expose new points, we apply Lemma~\ref{lem: search cost with assumptions} to obtain  
$$
S^j 
= \prod_{i=1}^{j} S_i 
= 
\left( \frac{\sqrt{1+k^2}}{k} \right)^j
 \frac{1}{\prod_{i=0}^{j-1} c_{i} }
\prod_{i=0}^{j-1} \frac{e^{-k \phi_{i+1}}}{e^{-k \phi_{i}}}
=
 \left( \frac{\sqrt{1+k^2}}{k} \right)^j
 \frac{1}{\prod_{i=0}^{j-1} c_{i} }
e^{2\pi k - k \phi_{j}}.
$$  
Lemma~\ref{lem: search cost with assumptions} also provides the formula for $S = S(k,n,c)$. Solving this equation for $\phi_j$ yields~\eqref{equa: equializers phi}, as required.  
\qed \end{proof}

\subsection{Proof \& Discussion of Theorem~\ref{thm: opt off-set algo and performance}}
\label{sec: off-set optimal parameters}
This section is devoted to proving Theorem~\ref{thm: opt off-set algo and performance}, which establishes the correctness of the Off-Set Algorithm~\ref{algo: Off-Set Trajectory} when using the parameters selected by Algorithm~\ref{algo: Off-Set Trajectory Parameter Selection} for input speeds $c \in \reals^n$.  
We first prove that Algorithm~\ref{algo: Off-Set Trajectory Parameter Selection} always returns a valid index $\mu \in [n]$. Then, we verify that the returned parameters are valid inputs to the Off-Set Algorithm~\ref{algo: Off-Set Trajectory} and ensure that all agents expose new points. Finally, we compute the search cost for $\msp_n(c)$, completing the proof of Theorem~\ref{thm: opt off-set algo and performance}.  

\begin{proof}[of Theorem~\ref{thm: opt off-set algo and performance}]
We begin by showing that Algorithm~\ref{algo: Off-Set Trajectory Parameter Selection} always returns some $\mu \in [n]$ when given input $c \in \reals^n$. Since $l = 0$ satisfies the condition in Step~\ref{step:find_l}, the check reduces to verifying $1 \geq e^{-2\pi\kappa_1}$, which holds because $\kappa_1 > 0$.  

Next, we verify that the parameters returned by Algorithm~\ref{algo: Off-Set Trajectory Parameter Selection} satisfy the conditions required for the Off-Set Algorithm~\ref{algo: Off-Set Trajectory}. Specifically, let $\mu, \nu, \{\phi_i\}_{i\in[\mu]}$ be the returned parameters, where the expansion factor used in Theorem~\ref{thm: opt off-set algo and performance} is $k = \kappa_\nu$.  
By Step~\ref{step:set_mu}, we have $\mu < l$, and by Step~\ref{step:find_l} and Lemma~\ref{lem: def of phi that make all agents expose points and cost ind of agent}, the parameters $\kappa_\nu$ and $\{\phi_i\}_{i\in [\mu]}$ are valid for the Off-Set Algorithm~\ref{algo: Off-Set Trajectory} using a subset of $\nu$ agents, indexed by $[\mu]$.  
By the same lemma, all agents $i \in [\mu]$ expose new points, and the agent-$i$ search cost is independent of $i$.  
Thus, applying Lemma~\ref{lem: search cost with assumptions} and using the definition of $\mathcal{U}_\nu$ from Theorem~\ref{thm: spiral with n unit speed agents}, we conclude that the induced search cost of the Off-Set algorithm matches the bound given in Theorem~\ref{thm: opt off-set algo and performance}, as required.  
\qed \end{proof}  

To better understand the calculations performed by Algorithm~\ref{algo: Off-Set Trajectory Parameter Selection}, we provide a high-level overview of its operation on input $c \in \reals^n$.  
The algorithm effectively selects an index $\mu \in [n]$, meaning that only agents at least as fast as agent-$\mu$ will participate in exposing new points.  
If agents are sorted in non-increasing order of speed, the participating agents correspond to $\nu = \mu+1$, that is, agents $j$ with $j = 0, \ldots, \mu$.  
The returned angular parameters $\phi_j$ are chosen so that the agent search cost is independent of the agent index, ensuring that the points exposed by each agent induce the same cost.  

A crucial aspect of Algorithm~\ref{algo: Off-Set Trajectory Parameter Selection} is that it filters out agents with insufficiently fast speeds.  
Among all agents with ``sufficiently large'' speeds, it selects a subset of the strictly fastest agents, denoted as $A$, to minimize the ratio  
$
\mathcal{U}_{|A|} / \mathcal{G}_{|A|}.
$  

An interesting observation is that Theorem~\ref{thm: opt off-set algo and performance} generalizes Theorem~\ref{thm: spiral with n unit speed agents}, whose proof also serves as a key building block in establishing the former result.
In particular, consider $n$ unit-speed agents, meaning $c_i = 1$ for all $i \in [n]$ is given as input to Algorithm~\ref{algo: Off-Set Trajectory Parameter Selection}.  
These agents are already sorted in non-increasing order, and their geometric means satisfy $\mathcal{G}_i = 1$ for every $i \in [n]$.  
Thus, the largest index $l$ satisfying Step~\ref{step:find_l} of the algorithm is $l = n-1$, as  
$
\mathcal{G}_n e^{-2\pi \kappa_n/n} = e^{-2\pi \kappa_n/n} < 1,
$  
where the inequality holds because $\kappa_n > 0$.  
By Proposition~\ref{prop: monotonicity of Cn}, the values $\mathcal{U}_i$ are strictly decreasing in $i$, so the smallest search cost, as determined by Step~\ref{step:set_mu}, is achieved at $\mu = n-1$.  
Then, in Step~\ref{step:output}, the algorithm sets $\nu = \mu + 1 = n$, and the angular parameters are assigned as  
$
\phi_i = \tfrac{2\pi(n-j)}{n},
$ 
for $j \in [n]$,
which exactly matches the definition in Algorithm~\ref{algo: uniform spiral trajectory}.  

Thus, Theorem~\ref{thm: opt off-set algo and performance} not only confirms the validity of the Off-Set algorithm, but also generalizes Theorem~\ref{thm: spiral with n unit speed agents}. In particular, the technical proof of Theorem~\ref{thm: opt off-set algo and performance}, which provides upper bound guarantees for arbitrary speed agents, builds directly on the analysis underlying Theorem~\ref{thm: spiral with n unit speed agents} to re-establish the performance guarantee of $\mathcal{U}_n$ for the special case of uniform unit-speed agents.

\ignore{
\label{step:sort} 
\label{step:compute_gamma}
\label{step:find_l}
\label{step:compute_Sm}
\label{step:set_mu} 
\label{step:output}
}

\subsection{A Simplified Algorithm for $\mathcal P\text{-}\msp_n(c)$}
\label{sec: simplified general spiral upper bound}

This section is devoted to proving Theorem~\ref{thm: simplified general spiral upper bound}, that is, we describe a simpler algorithm than $\mathcal P\text{-}\msp_n(c)$ with the same cost search performance. 
For this, given input speeds $c\in \reals^n$ with $c_0=1$ (recall we identify agent-$0$ as agent-$n$, and hence $c_0=c_n=1$) where the speeds are in non-increasing order, 
we set
\begin{equation}
\label{equa: argmin index}
\ell := \argmin_{i \in [n]} \mathcal U_{i+1} / \mathcal G_{i+1}.
\end{equation}
Recall that $[n]=\{0,1,\ldots,n-1\}$. Now consider the fastest $\ell+1$ many agents $[\ell+1]$ with speeds $c_0,c_1,\ldots, c_{\ell}$. 
The main technical claim is that index $\ell$, hence speed $c_\ell$ of agent-$\ell$, satisfies the condition of Step~\ref{step:find_l} (to be proven shortly). Hence, we can run the algorithm by setting $l=\ell$. 
Since $\ell$ was defined as per~\eqref{equa: argmin index}, in Step~\ref{step:set_mu} of Algorithm~\ref{algo: Off-Set Trajectory Parameter Selection}, we must have $\mu=\ell$, and then Theorem~\ref{thm: simplified general spiral upper bound} follows as a corollary of Theorem~\ref{thm: opt off-set algo and performance}. 

To summarize, it suffices to show that index $\ell$, as defined in~\eqref{equa: argmin index}, satisfies the condition of Step~\ref{step:find_l}, that is, our goal is to show that
\begin{equation}
\label{equa: goal for index ell}
c_\ell \geq e^{-2\pi \kappa_{\ell+1}/\ell} \mathcal G_l.
\end{equation}
By the definition~\eqref{equa: argmin index} of index $\ell$, we have that 
$$
\mathcal U_\ell / \mathcal G_l
\geq 
\mathcal U_{\ell+1} / \mathcal G_{l+1}
$$
which simplifies, after elementary algebraic manipulations, to the equivalent condition 
$$c_\ell \geq \left( \frac{\mathcal U_{\ell+1} }{\mathcal U_\ell} \right)^{\ell+1} \mathcal G_l.$$ 
We conclude that a sufficient condition for~\eqref{equa: goal for index ell} is that
$$
\left( \frac{\mathcal U_{\ell+1} }{\mathcal U_\ell} \right)^{\ell} \geq e^{-2\pi \kappa_{\ell+1}/(\ell+1)},
$$ which we prove next. 
If $\ell=0$, the inequality holds trivially, because $\kappa_1>0$, hence we focus on the case $\ell\geq 1$. 
To that end we calculate, 
$$
\left( \frac{\mathcal U_{\ell+1} }{\mathcal U_\ell} \right)^{\ell}
e^{2\pi \kappa_{\ell+1}/(\ell+1)}
=
\left(
\frac{
\sqrt{1+\kappa_{\ell+1}^2}/\kappa_{\ell+1}
}
{
\sqrt{1+\kappa_{\ell}^2}/\kappa_{\ell}
}
\right)^\ell
\frac{
e^{2\pi \kappa_{\ell+1}}
}
{
e^{2\pi \kappa_{\ell}}
}
$$
which we would like to show is at least $1$, for all $\ell\geq 1$. 
This is taken care of by the next technical lemma. 

\begin{lemma}
\label{lem: critical ratio at least 1}
Let $\kappa_n$ be as in Lemma~\ref{lem: closed form of k_i}. 
Set 
$$
A(x) := \frac{\sqrt{1 + x^2}}{x},
$$ for $x>0$, along with 
$$
H(l)
:=
\left(\frac{A(\kappa_{l+1})}{A(\kappa_l)}\right)^{l}
e^{
  2\pi\left(\kappa_{l+1} - \kappa_l\right)
}.
$$
Then, $H(l)\geq 1$, for all $l\geq 1$. 
\end{lemma}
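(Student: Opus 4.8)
The plan is to pass to logarithms and exploit an exact antiderivative identity. Since $H(l) > 0$, the claim $H(l) \geq 1$ is equivalent to $\ln H(l) \geq 0$, where
$$
\ln H(l) = l\bigl(\ln A(\kappa_{l+1}) - \ln A(\kappa_l)\bigr) + 2\pi(\kappa_{l+1} - \kappa_l).
$$
First I would record two monotonicity facts. From $A(x) = \sqrt{1+x^2}/x = \sqrt{1 + x^{-2}}$ one gets $(\ln A)'(x) = -1/\bigl(x(1+x^2)\bigr) < 0$, so $\ln A$ is strictly decreasing; combined with the fact that $\kappa_x$ is strictly increasing (Proposition~\ref{prop: monotonicity of Cn}, viewing $\kappa_x$ as the unique positive root of $k^3 + k = x/(2\pi)$, smoothly extended to real $x > 0$), the composition $x \mapsto \ln A(\kappa_x)$ is strictly decreasing.

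The key step is to introduce the continuous auxiliary function $F(x) := x\ln A(\kappa_x) + 2\pi\kappa_x$ and to show that, remarkably, $F'(x) = \ln A(\kappa_x)$. Differentiating $\kappa_x^3 + \kappa_x = x/(2\pi)$ gives $\kappa_x' = 1/\bigl(2\pi(3\kappa_x^2 + 1)\bigr)$, while the same relation yields the identity $x = 2\pi\kappa_x(\kappa_x^2 + 1)$. Substituting both into
$$
F'(x) = \ln A(\kappa_x) + x\,(\ln A)'(\kappa_x)\,\kappa_x' + 2\pi\kappa_x',
$$
the middle term collapses to $-2\pi\kappa_x'$ and cancels the last term exactly, leaving $F'(x) = \ln A(\kappa_x)$. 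Spotting this cancellation is the crux of the argument and the step I expect to be the main (if short) obstacle; everything downstream is then routine monotonicity.

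With the identity in hand the conclusion is immediate. I would rewrite the log-cost as a telescoped difference of $F$, observing that $\bigl(F(l+1) - F(l)\bigr) - \ln H(l) = \ln A(\kappa_{l+1})$, so that
$$
\ln H(l) = \bigl(F(l+1) - F(l)\bigr) - \ln A(\kappa_{l+1}) = \int_l^{l+1} \ln A(\kappa_x)\dd x - \ln A(\kappa_{l+1}),
$$
where the last equality uses $F'(x) = \ln A(\kappa_x)$. Since $x \mapsto \ln A(\kappa_x)$ is decreasing, its value on $[l, l+1]$ is everywhere at least $\ln A(\kappa_{l+1})$, so the integral over this unit-length interval dominates $\ln A(\kappa_{l+1})$. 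Hence $\ln H(l) \geq 0$, i.e. $H(l) \geq 1$; note the argument in fact holds for every real $l > 0$, not merely integer $l \geq 1$.
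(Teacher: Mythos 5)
Your proof is correct, and it takes a genuinely different route from the paper's. The paper sets $T_l = \ln H(l)$ and argues via discrete second differences: it shows $T_{l+1} - T_l < 0$ by decomposing the difference into three negative terms (using the concavity and monotonicity of $\kappa_n$ in $n$ together with the convexity and monotonicity of $\ln A$), and then combines this with the limit $\lim_{l\to\infty} H(l) = 1$ to conclude $T_l \geq 0$. Your argument instead produces the exact closed form $\ln H(l) = \int_l^{l+1} \ln A(\kappa_x)\,\dd x - \ln A(\kappa_{l+1})$ via the antiderivative identity $F'(x) = \ln A(\kappa_x)$ for $F(x) = x\ln A(\kappa_x) + 2\pi\kappa_x$; the crucial cancellation $x\,(\ln A)'(\kappa_x)\,\kappa_x' = -2\pi\kappa_x'$ is an immediate consequence of $x = 2\pi\kappa_x(\kappa_x^2+1)$ and checks out. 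From there only the monotonicity of $x \mapsto \ln A(\kappa_x)$ is needed. What your approach buys: it avoids the concavity/convexity composition argument entirely, it avoids the limit computation at infinity (which in the paper involves a $1^\infty$-type expression $\bigl(A(\kappa_{l+1})/A(\kappa_l)\bigr)^l$ whose convergence to $1$ is asserted rather than derived from the stated facts about the individual factors), and it yields the stronger conclusion that $H(l) \geq 1$ for every real $l > 0$. The paper's route, on the other hand, additionally establishes that $H(l)$ is strictly decreasing, which is extra structural information your argument does not provide. Both are valid; yours is shorter and more self-contained.
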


\begin{proof}
We will prove two claims: (a) that $\lim_{l\rightarrow \infty} H(l) = 1$, and (b) that $H(l)$ is strictly decreasing in $H(l)$, implying that $H(l)\geq 1$ for all $l\geq 1$. 

Regarding (a), we recall Lemma~\ref{prop: monotonicity of Cn} according to which 
$\kappa_n$ is positive, strictly increasing, and strictly concave in $n$. 
Then, we observe that $A(x)$ is decreasing in $x>0$, as well as that $\lim_{x\rightarrow \infty} A(x) = 1$. Moreover, by Lemma~\ref{lem: closed form of k_i} it is easy to see that 
$$\lim_{l\rightarrow \infty} \kappa_{l+1}/\kappa_l = 1.$$
Altogether, this implies also that 
$$
\lim_{l\rightarrow \infty} A(\kappa_{l+1})/A(\kappa_l) = \lim_{l\rightarrow \infty} \exp\left(
  2\pi\left(\kappa_{l+1} - \kappa_l\right)
\right) = 1,
$$ thereby also implying that $\lim_{l\rightarrow \infty} H(l) = 1$, as wanted. 

Now we treat property (b). 
We start by setting
$
T_l
=
\ln\left(H(l)\right),
$
so
$$
T_l
=
l\left[
  \ln A\left(\kappa_{l+1}\right)
  -
  \ln A\left(\kappa_l\right)
\right]
+
2\pi\left(\kappa_{l+1} - \kappa_l\right).
$$
We will prove $T_l$ is strictly decreasing, and then show $\lim_{l\rightarrow 0} T_l = 0$, implying this way that $T_l\geq 0$, i.e. $H(l)\geq 1$, as wanted. 

First we show that  $T_l$ is strictly decreasing, that is $T_{l+1} < T_l$ for all $l \geq 1$. Observe
\begin{align*}
T_{l+1} - T_l
& =
(l+1)\left(\ln A(\kappa_{l+2}) - \ln A(\kappa_{l+1})\right)
-
l\left(\ln A(\kappa_{l+1}) - \ln A(\kappa_l)\right)
+
2\pi\left(\kappa_{l+2} - 2\kappa_{l+1} + \kappa_l\right) \\
& = l \cdot X_1(l) + X_2(l) + 2\pi \cdot X_3(l), 
\end{align*}
where 
\begin{align*}
X_1(l) & = \ln A(\kappa_{l+2}) - 2\ln A(\kappa_{l+1}) + \ln A(\kappa_l), \\
X_2(l) & = \ln A(\kappa_{l+2}) - \ln A(\kappa_{l+1}), \\
X_3(l) &= \kappa_{l+2} - 2\kappa_{l+1} + \kappa_l .
\end{align*}
Next, we argue that $X_i(l)<0$, for all $l\geq 1$, for $i=1,2,3$. 

We start with $X_1(l)$. 
Set 
$
h_n=\ln A\left(\kappa_n\right),
$
and recall $\kappa_n$ is concave and increasing in $n$, while $\ln A(\cdot)$ is convex and decreasing in its argument. Therefore, a standard composition argument shows that $h_n$ has negative second finite differences, that is 
$
h_{l+2}-2h_{l+1}+h_l
<0
$
implying that $X_1(l)<0$, as wanted. 

Then, we treat $X_2(l)$. For this, we observe that $A(x)$ is decreasing in $x$, while also $\kappa_l$ is increasing in $l$, which implies directly that $A(\kappa_{l+2})/A(\kappa_{l+1})<1$, and hence $X_2(l)<0$.

Lastly, $X_3(l)<0$ is implied directly by invoking the concavity of $\kappa_n$, as argued above. This concludes the proof that $T_l$ is strictly decreasing, and hence claim of the lemma. 
\qed \end{proof}


\section{Cone Search with a Unit-Speed Agent}
\label{sec: Cone Search}

In this section, we design a trajectory for searching for a point within an angle-$\phi$ cone, where $\phi \in (0,\pi)$, using a unit-speed agent. That is, we propose a solution to the problem $\mathcal{C}_\phi\text{-}\msp_1(1)$ and subsequently prove Theorem~\ref{thm: cone search upper bound}.
These results are developed not only for their independent geometric interest, but also to probe the limitations of spiral-based search when agents of different speeds are present. They will later be combined with the spiral-based constructions introduced earlier to obtain improved strategies for the multi-speed point-search problem, providing evidence that spiral-based trajectories, which are optimal for single-agent search, can be suboptimal in the multi-agent, multi-speed setting.

\begin{figure}[h!]
    \centering
    \includegraphics[width=0.55\textwidth]{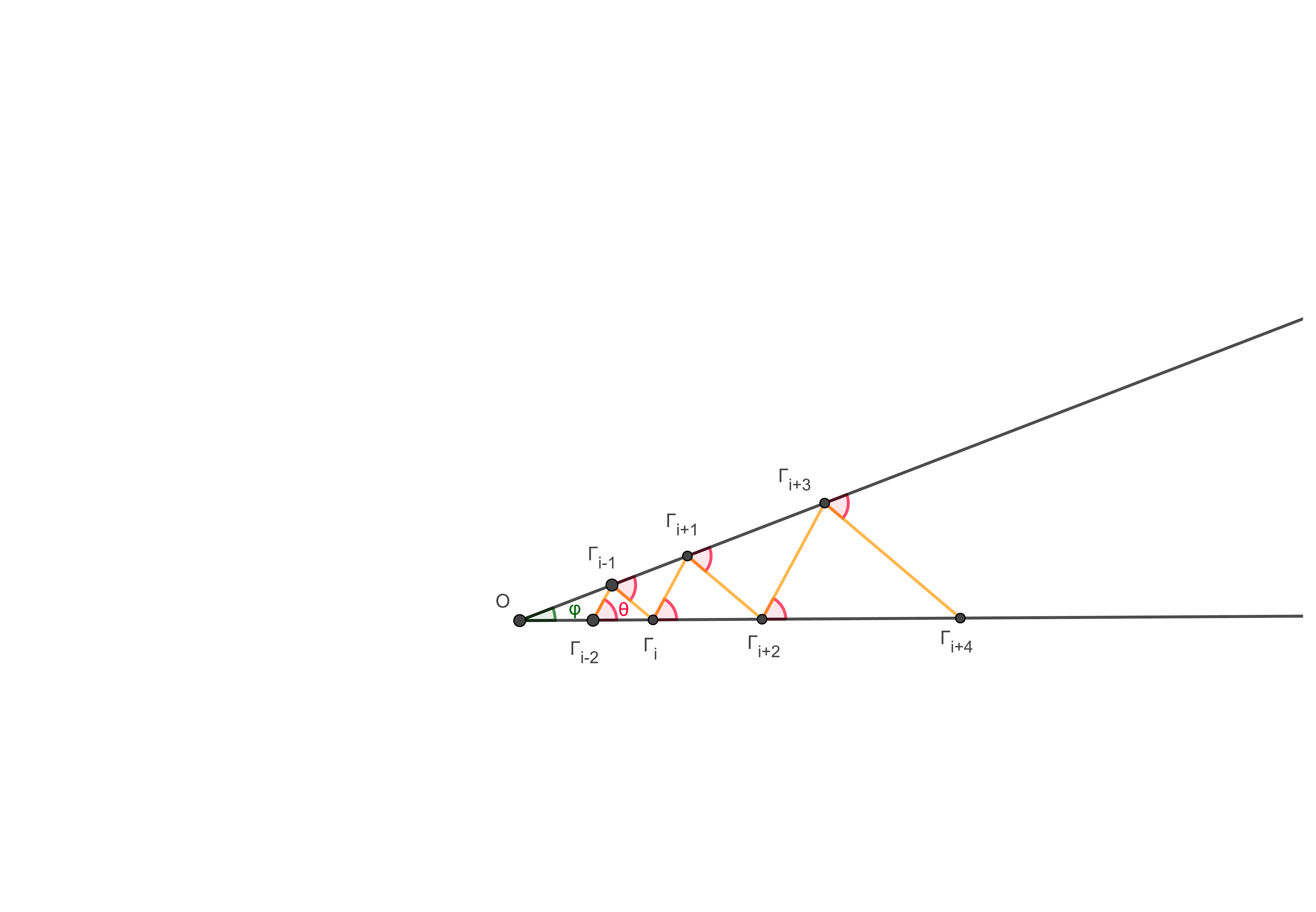}
    \caption{A depiction of the Bouncing Trajectory for $\mathcal C_\phi\text{-}\msp_1(1)$. 
    Angle $\theta$ determines the sequence of points $\Gamma_i$, as per Lemma~\ref{lem: choice of sequence g}, using sequence $q_i = \left(\tfrac{\sinn{\theta}}{\sinn{\theta-\phi}}\right)^i$. 
    }
    \label{fig:bouncing}
\end{figure}

\begin{algorithm}[H]
\caption{Bouncing Trajectory for $\mathcal C_\phi\text{-}\msp_1(1)$}
\label{algo: bouncing}
\begin{algorithmic}
\REQUIRE 
Increasing sequence $\{q_i\}_{i\in \integers}$ of reals, with $\lim_{i\rightarrow -\infty}=0$ and  $\lim_{i\rightarrow \infty}=\infty$.
\STATE Define points 
$\Gamma_{2i} = \langle q_{2i} , 0 \rangle$, and 
$\Gamma_{2i+1} = \langle q_{2i+1} , \phi \rangle$, for $i\in \integers$. 
\STATE \textbf{Output Trajectory:}  Piece-wise linear trajectory moving between points 
$\Gamma_i \rightarrow \Gamma_{i+1}, ~i\in \integers$
\end{algorithmic}
\end{algorithm}

Algorithm~\ref{algo: bouncing} is illustrated in Figure~\ref{fig:bouncing}. Next, we derive an upper bound on its search cost.

\begin{lemma}
\label{lem: general bouncing performance}
The Bouncing Trajectory of Algorithm~\ref{algo: bouncing} provides a feasible solution to $\mathcal C_\phi\text{-}\msp_1(1)$. Moreover, the search cost of the algorithm is given by
$$
\sup_{i\in \integers} \frac{1}{\norm{\Gamma_i}} \sum_{j=-\infty}^{i+2} \norm{\Gamma_j - \Gamma_{j-1}}.
$$
\end{lemma}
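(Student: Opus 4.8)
The plan is to prove the two assertions—feasibility and the exact search cost—separately, exploiting throughout that a unit-speed agent's exposure time equals the arc length of its trajectory up to the exposure point. I would first record the geometry of a single segment $\Gamma_{m-1}\to\Gamma_m$: its endpoints lie on the two boundary rays $\mathcal L_0$ and $\mathcal L_\phi$ with $\phi\in(0,\pi)$, so the segment avoids the origin, its angular coordinate varies monotonically, and it meets every ray $\mathcal L_\psi$ with $\psi\in[0,\phi]$ exactly once, at a point of distance $r=\frac{ab\sin\phi}{a\sin\psi+b\sin(\phi-\psi)}$ from $O$, where $a,b$ are the distances of the angle-$0$ and angle-$\phi$ endpoints (a law-of-sines/area identity; note $r=a$ at $\psi=0$ and $r=b$ at $\psi=\phi$). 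For feasibility I would observe that $\lim_{m\to-\infty}\norm{\Gamma_m}=\lim q_m=0$, so the trajectory emanates from the origin, and that for each fixed $\psi$ the crossing distances grow without bound as $m\to\infty$ (being squeezed between $q_{m-1}$ and $q_m$), so every target $\langle y,\psi\rangle$ is eventually exposed.

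For the search cost I would establish the lower bound first, by exhibiting near-worst targets. Fix a vertex $\Gamma_i$, say on $\mathcal L_0$, and take $A=\langle \norm{\Gamma_i}+\varepsilon,0\rangle$ just beyond it on the same ray. Between reaching $\Gamma_i$ and reaching the next even vertex $\Gamma_{i+2}$, the piecewise-linear path touches $\mathcal L_0$ only at these two endpoints, so for small $\varepsilon$ the target is first exposed exactly at $\Gamma_{i+2}$; since time equals arc length, $t_A=\sum_{j=-\infty}^{i+2}\norm{\Gamma_j-\Gamma_{j-1}}$, and letting $\varepsilon\to0^+$ yields normalized cost $\frac{1}{\norm{\Gamma_i}}\sum_{j=-\infty}^{i+2}\norm{\Gamma_j-\Gamma_{j-1}}$. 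Taking the supremum over $i$ (the odd-$i$ case being symmetric via $\mathcal L_\phi$) shows the search cost is at least the claimed quantity.

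The upper bound is the substantive direction. For an arbitrary target $A=\langle y,\psi\rangle$ first exposed on segment $S_m=\Gamma_{m-1}\to\Gamma_m$, I would bound $t_A$ above by the arc length to $\Gamma_m$, and use that non-exposure on the two preceding segments forces $y>\max\{r_{m-1}(\psi),r_{m-2}(\psi)\}$, giving $\tfrac{t_A}{y}\le \tfrac{(\text{arc length to }\Gamma_m)}{\max\{r_{m-1}(\psi),r_{m-2}(\psi)\}}$. The goal is then to show this ratio never exceeds the boundary ratios $\frac{1}{\norm{\Gamma_i}}\sum_{j\le i+2}\norm{\Gamma_j-\Gamma_{j-1}}$, i.e. that the worst case over all angles is attained in the boundary limits $\psi\to0^+$ or $\psi\to\phi^-$, where the ratio collapses to the vertex expression above.

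I expect this monotonicity-in-$\psi$ comparison to be the main obstacle. An interior ray $\mathcal L_\psi$ is swept twice per round trip rather than once, which shortens the exposure delay; one must show that the correspondingly smaller interior crossing distances $r_m(\psi)$ compensate, keeping every interior normalized cost below the boundary ones. I would carry this out using the explicit formula for $r_m(\psi)$ together with the monotonic dependence of the partial segment arc lengths on the sweep parameter, reducing the interior ratio to the vertex ratio already computed. Combined with the matching lower bound, this gives equality and completes the proof.
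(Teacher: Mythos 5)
Your feasibility argument and your lower bound (targets on a boundary ray just beyond a vertex $\Gamma_i$, first exposed only upon arrival at $\Gamma_{i+2}$) are correct, and are if anything spelled out more carefully than in the paper. The gap is in the upper bound, and it is exactly the step you yourself flag as the ``main obstacle'': the monotonicity-in-$\psi$ comparison is not carried out, and the intermediate bound you commit to cannot be pushed through as stated. You bound $t_A$ by the full arc length $L_m:=\sum_{j\le m}\norm{\Gamma_j-\Gamma_{j-1}}$ and $\norm{A}$ from below by the previous crossing distance $r_{m-1}(\psi)$, and then aim to show $L_m/r_{m-1}(\psi)$ never exceeds the vertex ratios. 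But $r_{m-1}(\psi)$ is minimized over $\psi$ not at the endpoint $\Gamma_{m-2}$ but at the foot of the perpendicular from $O$ to the segment $\Gamma_{m-2}\Gamma_{m-1}$ whenever that foot is interior to the segment, where it equals $\norm{\Gamma_{m-2}}\sin\theta<\norm{\Gamma_{m-2}}$; this happens precisely when the vertex angle $\pi-\theta$ is acute, i.e.\ $\theta>\pi/2$, which lies inside the admissible range $(\phi,\pi/2+\phi/2)$ of Lemma~\ref{lem: choice of sequence g} (and is unavoidable for $\phi$ close to $\pi$). At such a $\psi$ your ratio strictly exceeds the vertex ratio $L_m/\norm{\Gamma_{m-2}}$, so the claimed supremum cannot be recovered without replacing $L_m$ by the exact partial arc length to the crossing point of $S_m$ with $\mathcal L_\psi$ and then verifying a genuine two-parameter inequality in $(m,\psi)$. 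You gesture at this (``partial segment arc lengths''), but that computation is the entire content of the step and is not supplied; as written the proposal first proves a false reduction and then defers the true one.

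The paper's route is different and avoids the per-ray analysis entirely: it partitions $\mathcal C_\phi$ into the triangles $\Delta_i$ with vertices $\Gamma_i,\Gamma_{i+1},\Gamma_{i+2}$, observes that once the agent reaches $\Gamma_{i+2}$ the whole region $\bigcup_{j\le i}\Delta_j$ has been exposed (so $t_A\le L_{i+2}$ for every $A\in\Delta_i$), and pairs this with $\norm{A}\ge\norm{\Gamma_i}$ for $A\in\Delta_i$, which yields the stated supremum immediately; tightness is the same vertex argument you give. Grouping targets by containing triangle rather than by first-exposing segment is what makes the formula fall out in two lines. (Be aware that the paper's lower bound $\norm{A}\ge\norm{\Gamma_i}$ rests on the same geometric fact your $r_{m-1}$ runs into --- that the closest point of the edge $\Gamma_i\Gamma_{i+1}$ to the origin is the vertex $\Gamma_i$ --- so any fully rigorous treatment of the acute-angle case has to confront it; but if you want to salvage your approach with minimal extra work, switching to the triangle decomposition is the cleanest fix.)
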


\begin{proof}
First, we observe that the Bouncing Trajectory of Algorithm~\ref{algo: bouncing} initially places the agent at the origin, since  
$$
\lim_{i\rightarrow -\infty} \norm{\Gamma_{2i} } = \lim_{i\rightarrow -\infty} \norm{\Gamma_{2i+1}} = \lim_{i\rightarrow -\infty} g_{i} = 0,
$$
which implies
$$
\lim_{i\rightarrow -\infty} \Gamma_{2i} = \lim_{i\rightarrow -\infty} \Gamma_{2i+1} = (0,0).
$$

Next, consider the sequence of triangles $\{\Delta_i\}_{i \in \integers}$, where each $\Delta_i$ is formed by the vertices $\Gamma_i, \Gamma_{i+1}, \Gamma_{i+2}$, see also Figure~\ref{fig:bouncing}. By the properties of the sequence $\{g_i\}_{i \in \integers}$, these triangles cover $\mathcal C_\phi$.

Now, we analyze the worst-case search cost for a target $A \in \Delta_i$. The time to expose $A$ is at most the time required for the unit-speed agent to reach $\Gamma_{i+2}$, since by then all points in $\bigcup_{j=-\infty}^{i} \Delta_j$ have been exposed. This time is given by  
$$
\sum_{j=-\infty}^{i+2} \norm{\Gamma_j - \Gamma_{j-1}}.
$$
Furthermore, since $\{g_j\}_j$ is increasing, we have $\norm{A} \geq \norm{\Gamma_i}$, confirming that the given expression serves as an upper bound for the search cost. To see that this bound is tight, consider a sequence of targets in the interior of $\Delta_i$ converging to $\Gamma_i$.
\qed \end{proof}

Next, we propose a specific input sequence to Algorithm~\ref{algo: bouncing} and derive the induced search cost for $\mathcal C_\phi\text{-}\msp_1(1)$. 

\begin{lemma}
\label{lem: choice of sequence g}
For any $\phi \in (0,\pi)$ and $\theta \in (\phi, \pi/2+\phi/2)$, let 
\begin{align*}
\alpha &:= \tfrac{\sinn{\phi}}{\sinn{\theta}}, \\
\beta &:= \tfrac{\sinn{\theta}}{\sinn{\theta-\phi}}.
\end{align*} 
Then, the sequence  
$
q_i = \beta^i, i\in \integers,
$
is a valid input to Algorithm~\ref{algo: bouncing}, inducing a search cost of 
$$
\tfrac{\alpha\beta^3}{\beta -1}
$$
for $\mathcal C_\phi\text{-}\msp_1(1)$. 
\end{lemma}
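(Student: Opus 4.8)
The plan is to verify the two assertions of the lemma in turn: first that $q_i = \beta^i$ is an admissible input to Algorithm~\ref{algo: bouncing}, and then that feeding it into the cost formula of Lemma~\ref{lem: general bouncing performance} produces exactly $\tfrac{\alpha\beta^3}{\beta-1}$. For admissibility I need $\{q_i\}$ increasing with $\lim_{i\to-\infty} q_i = 0$ and $\lim_{i\to\infty} q_i = \infty$; since $q_i = \beta^i$, all three reduce to the single inequality $\beta > 1$. First note that $\theta \in (\phi,\pi/2+\phi/2) \subset (0,\pi)$ and $\theta-\phi \in (0,\pi/2-\phi/2) \subset (0,\pi)$, so $\sinn{\theta}$, $\sinn{\theta-\phi}$, and $\sinn{\phi}$ are all strictly positive; in particular $\alpha>0$ and $\beta>0$ are well defined. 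I would then apply the sum-to-product identity
$$
\sinn{\theta} - \sinn{\theta-\phi} = 2\coss{\theta - \tfrac{\phi}{2}}\sinn{\tfrac{\phi}{2}},
$$
whose right-hand side is positive precisely because $\sinn{\phi/2}>0$ and $\theta - \phi/2 < \pi/2$, the latter being exactly the upper-endpoint constraint $\theta < \pi/2 + \phi/2$. Hence $\sinn{\theta} > \sinn{\theta-\phi} > 0$, so $\beta > 1$ and the input is admissible.

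The core of the argument is the length of the $i$-th segment, $d_i := \norm{\Gamma_i - \Gamma_{i-1}}$. Since $\Gamma_{i-1}$ and $\Gamma_i$ lie on the two bounding rays $\mathcal L_0$ and $\mathcal L_\phi$ at radii $q_{i-1}$ and $q_i$, the triangle with vertices $O,\Gamma_{i-1},\Gamma_i$ has angle $\phi$ at the origin, and the law of cosines gives $d_i^2 = q_i^2 + q_{i-1}^2 - 2 q_i q_{i-1}\coss{\phi}$. Substituting $q_i = \beta^i$ and factoring out $\beta^{2i}$ yields $d_i = \beta^i\sqrt{1 - 2\beta^{-1}\coss{\phi} + \beta^{-2}}$. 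The key step is the trigonometric identity
$$
\sinn{\theta}^2 - 2\sinn{\theta}\sinn{\theta-\phi}\coss{\phi} + \sinn{\theta-\phi}^2 = \sinn{\phi}^2,
$$
which follows by expanding $\coss{\phi} = \coss{\theta-(\theta-\phi)}$ and collecting terms (equivalently, by recognizing the left-hand side as $\sinn{\theta-(\theta-\phi)}^2$). Dividing by $\sinn{\theta}^2$ and using $\beta^{-1} = \sinn{\theta-\phi}/\sinn{\theta}$ turns the radicand into $\sinn{\phi}^2/\sinn{\theta}^2$, so that $d_i = \beta^i\cdot\tfrac{\sinn{\phi}}{\sinn{\theta}} = \alpha\beta^i$. (Alternatively, the same scaling follows from the law of sines once the base angles of the similar triangles $O\Gamma_{i-1}\Gamma_i$ are identified as $\pi-\theta$ and $\theta-\phi$, which sum to $\pi-\phi$ as required.)

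Finally, I substitute $\norm{\Gamma_i} = q_i = \beta^i$ and $\norm{\Gamma_j - \Gamma_{j-1}} = d_j = \alpha\beta^j$ into the cost expression of Lemma~\ref{lem: general bouncing performance}. The inner sum is a geometric series with ratio $\beta > 1$, namely
$$
\sum_{j=-\infty}^{i+2}\alpha\beta^j = \alpha\,\frac{\beta^{i+3}}{\beta-1},
$$
so each term of the outer supremum equals $\tfrac{1}{\beta^i}\cdot\alpha\tfrac{\beta^{i+3}}{\beta-1} = \tfrac{\alpha\beta^3}{\beta-1}$, independent of $i$. The supremum is therefore attained identically, giving search cost $\tfrac{\alpha\beta^3}{\beta-1}$. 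I expect the main obstacle to be establishing the self-similar scaling $d_i = \alpha\beta^i$ through the trigonometric identity above; once that is in hand, the geometric summation and the cancellation of the $\beta^i$ factors (which is what renders the cost $i$-independent) are entirely routine.
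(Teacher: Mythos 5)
Your proposal is correct, and its overall skeleton matches the paper's: establish $\beta>1$ via the identity $\sinn{\theta}-\sinn{\theta-\phi}=2\coss{\theta-\phi/2}\sinn{\phi/2}$ and the endpoint constraints on $\theta$, show that the segment lengths satisfy $d_i=\alpha\beta^i$, and then evaluate the supremum in Lemma~\ref{lem: general bouncing performance} as a geometric series whose $\beta^i$ factors cancel. The one step you handle differently is the derivation of $d_i=\alpha\beta^i$. The paper characterizes the points $\Gamma_i$ geometrically (requiring the vertex angle at $\Gamma_i$ in triangle $O\Gamma_i\Gamma_{i+1}$ to equal $\pi-\theta$) and then extracts both the ratio $q_{i+1}/q_i=\beta$ and the length $d_{i+1}=\alpha q_{i+1}$ from the Law of Sines. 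You instead take the radii $q_i=\beta^i$ as given, which is literally what the lemma supplies as input, and compute $d_i$ directly from the Law of Cosines in the triangle with angle $\phi$ at the origin, reducing the radicand to $\sinn{\phi}^2/\sinn{\theta}^2$ via the identity $\sinn{\theta}^2-2\sinn{\theta}\sinn{\theta-\phi}\coss{\phi}+\sinn{\theta-\phi}^2=\sinn{\phi}^2$ (which is indeed $\sinn{A-B}^2$ in disguise). Your route is slightly more self-contained, since it never needs to argue that the geometrically defined points coincide with the prescribed sequence, whereas the paper's route makes the role of the free parameter $\theta$ as a design angle more transparent. Both are sound, and your treatment of admissibility and of the final geometric summation agrees with the paper's.
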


\begin{proof}
Fix $\phi \in (0,\pi)$. We first show that $\beta > 1$, ensuring that the sequence is well-defined. Using the trigonometric identity  
$$
\sinn{\theta} - \sinn{\theta-\phi} = 2\coss{\theta-\phi/2}\sinn{\phi/2}, 
$$
and noting that $\sinn{\phi/2} > 0$ for $\phi \in (0,\pi)$, we analyze the term $\coss{\theta-\phi/2}$. Since $\theta \in (\phi, \pi/2+\phi/2)$, it follows that  
$$
\theta - \phi/2 \in (\phi/2, \pi/2) \subseteq (0,\pi/2),
$$ 
which implies $\coss{\theta-\phi/2} > 0$. Also, since $\theta -\phi \in (0, \pi/2 -\phi/2)$, we conclude that $\sinn{\theta-\phi} > 0$, establishing $\beta = \sinn{\theta} / \sinn{\theta-\phi} >1$.

Next, we compute the points $\Gamma_i$ in Algorithm~\ref{algo: bouncing} and derive the search cost using Lemma~\ref{lem: general bouncing performance}. The parameter $\beta = \beta(\phi, \theta)$ defines the sequence $\{g_j\}_j$, which uniquely determines the points $\{\Gamma_j\}_j$ and vice versa. We now provide a geometric and inductive definition of these points.  
For reference, Figure~\ref{fig:bouncing} illustrates the construction and highlights the role of the angle $\theta$.  

Setting $g_0 = 1$, we define $\Gamma_0 = (1,0) \in \mathcal L_0$. For each $i \in \integers$, we require that $\Gamma_i$ and $\Gamma_{i+1}$ lie on different extreme rays of $\mathcal C_\phi$ and that the vertex angle at $\Gamma_i$ in the triangle $O\Gamma_i\Gamma_{i+1}$ (where $O$ is the origin) is $\pi-\theta$. This definition uniquely determines $\Gamma_i$ and thus $g_i$ for all $i \in \integers$. 

\ignore{
\begin{figure}[h]
    \centering
    \includegraphics[width=0.4\textwidth]{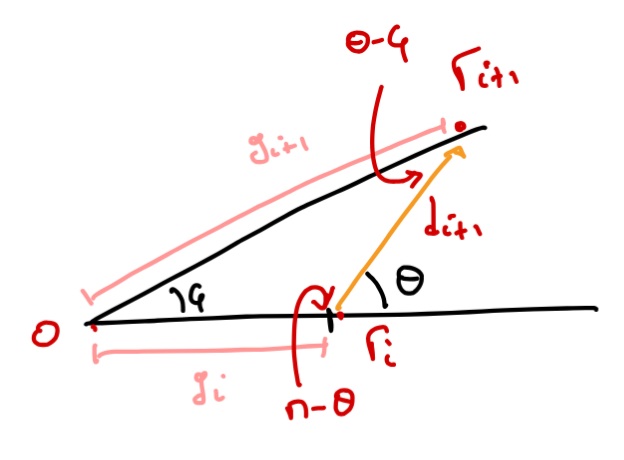}
    \caption{Triangle $O\Gamma_i\Gamma_{i+1}$}
    \label{fig:triangleGamma}
\end{figure}
}

Now, we compute $g_i$ and $d_i := \norm{\Gamma_i-\Gamma_{i-1}}$. Applying the Law of Sines to triangle $O\Gamma_i\Gamma_{i+1}$,  
and using $\norm{\Gamma_i} = g_i$, we obtain  
$$
\frac{d_{i+1}}{\sinn{\phi}}= 
\frac{g_{i+1}}{\sinn{\pi-\theta}}
=\frac{g_{i+1}}{\sinn{\theta}},
$$
since $\theta \in (0,\pi)$. Therefore,  
$$
d_{i+1} = \tfrac{\sinn{\phi}}{\sinn{\theta}} g_{i+1} = \alpha g_{i+1}.
$$
Similarly, using the fact that the angle at $O$ is $\phi$, we find that the angle at $\Gamma_{i+1}$ is $\theta - \phi \in (0,\pi/2)$. Applying the Law of Sines again,  
$$
\frac{g_{i+1}}{\sinn{\pi-\theta}}= 
\frac{g_{i+1}}{\sinn{\theta-\phi}},
$$
which implies  
$$
g_{i+1} = \tfrac{\sinn{\theta}}{\sinn{\theta-\phi}} g_{i} = \beta g_{i}.
$$
Since $g_0=1$, it follows that $g_i = \beta^i$, and consequently, $d_i = \alpha \beta^i$.

Using Lemma~\ref{lem: general bouncing performance}, the search cost of Algorithm~\ref{algo: bouncing} is  
$$
\sup_{i\in \integers} 
\frac{1}{\norm{\Gamma_i}}
\sum_{j=-\infty}^{i+2} \norm{\Gamma_j-\Gamma_{j-1}}
= 
\sup_{i\in \integers} 
\frac{1}{g_i}
\sum_{j=-\infty}^{i+2} d_j
= 
\sup_{i\in \integers} 
\frac{\alpha}{\beta^i}
\sum_{j=-\infty}^{i+2} \beta^j
= 
\sup_{i\in \integers} 
\frac{\alpha}{\beta^i}
\frac{\beta^{i+3}}{\beta - 1}
=
\frac{\alpha\beta^3}{\beta-1}.
$$
Thus, the search cost is independent of $i$, completing the proof.
\qed \end{proof}

We now summarize our findings with the proof of Theorem~\ref{thm: cone search upper bound}.  

\begin{proof}[of Theorem~\ref{thm: cone search upper bound}]
Given $\phi \in (0,\pi)$, we apply Algorithm~\ref{algo: bouncing} to $\mathcal C_\phi\text{-}\msp_1(1)$ with input sequence  
$$
q_i = \left( \frac{\sinn{\theta}}{\sinn{\theta-\phi}} \right)^i, i \in \integers.
$$  
By Lemma~\ref{lem: choice of sequence g}, if $\theta \in (\phi,\pi/2+\phi/2)$, the sequence is valid for the algorithm, and the resulting search cost is  
$$
\frac{\sinn{\phi}}{\sinn{\theta}} \cdot \frac{\beta^3}{\beta-1} = 
\frac{\sinn{\phi}}{2 \sinn{\frac{\phi}{2}}}
\cdot
\left(\frac{\sinn{\theta}}{\sinn{\theta - \phi}}\right)^2 \cdot \frac{1}{\coss{\theta - \phi/2}}.
$$  
This represents the search cost of the algorithm.  

Now, for every $\theta \in (\phi,\pi/2+\phi/2)$, the search cost function is positive and continuous. Moreover, it tends to \(\infty\) as \(\theta\) approaches either \(\phi^+\) or \((\pi/2+\phi/2)^-\). By the Extreme Value Theorem, the function attains a minimum within the open interval \((\phi,\pi/2+\phi/2)\).  
\qed \end{proof}

\section{Conic Complement Search with a Unit-Speed Agent}
\label{sec: Conic Complement Search}

In this section, we design a trajectory for searching a point in the angle-$\phi$ conic complement, where $\phi \in (0,\pi)$, using a unit-speed agent. That is, we propose a solution to the problem $\mathcal W_\phi\text{-}\msp_1(1)$, and subsequently prove Theorem~\ref{thm: wedge search upper bound}.

To describe the trajectory for searching $\mathcal W_\phi$, we first develop some intuition for why this can be done more efficiently than searching the entire plane $\mathcal P$ with a unit-speed agent, even when $\phi \in (0,\pi)$ is small.
A key aspect in analyzing Algorithm~\ref{algo: uniform spiral trajectory} for $\mathcal P\text{-}\msp_1(1)$ is that the search cost is \emph{independent of the target's angular coordinate}. For this reason, and without loss of generality, we fix a reference ray $\mathcal L_0$ and perform worst-case analysis under the assumption that the target lies on $\mathcal L_0$.

Let $f_0^k(t)$ denote the trajectory of Algorithm~\ref{algo: uniform spiral trajectory}, applicable to $\mathcal P\text{-}\msp_1(1)$. This trajectory intersects $\mathcal L_0$ at the points $X_s = f_0^k(2\pi s)$ and intersects the ray $\mathcal L_\phi$ at the points $Z_s = f_0^k(2\pi s + \phi)$, where $s \in \integers$, see also Figure~\ref{fig:shortcut}. 
To adapt $f_0^k(t)$ for searching $\mathcal W_\phi$ (i.e., the plane excluding the cone $\mathcal C_\phi$), we can modify the trajectory by replacing the curved segment between $X_s$ and $Z_s$ with the shorter straight line segment $X_sZ_s$, while preserving the original trajectory between $Z_s$ and $X_{s+1}$. With this modification, every point in $\mathcal W_\phi$ is reached strictly faster. However, the search cost becomes \emph{dependent} on the target's location. Since the search cost is defined as the supremum over all possible target positions in the unbounded domain $\mathcal W_\phi$, this modification does not necessarily reduce the worst-case cost.

In the following, we explore this idea in more detail by properly defining the sequences of points $X_s$ and $Z_s$, ensuring that the search cost remains independent of the target placement. This, in turn, simplifies the analysis and allows us to demonstrate a strict improvement over searching the full plane.

\begin{figure}[h!]
    \centering
    \includegraphics[width=0.46\textwidth]{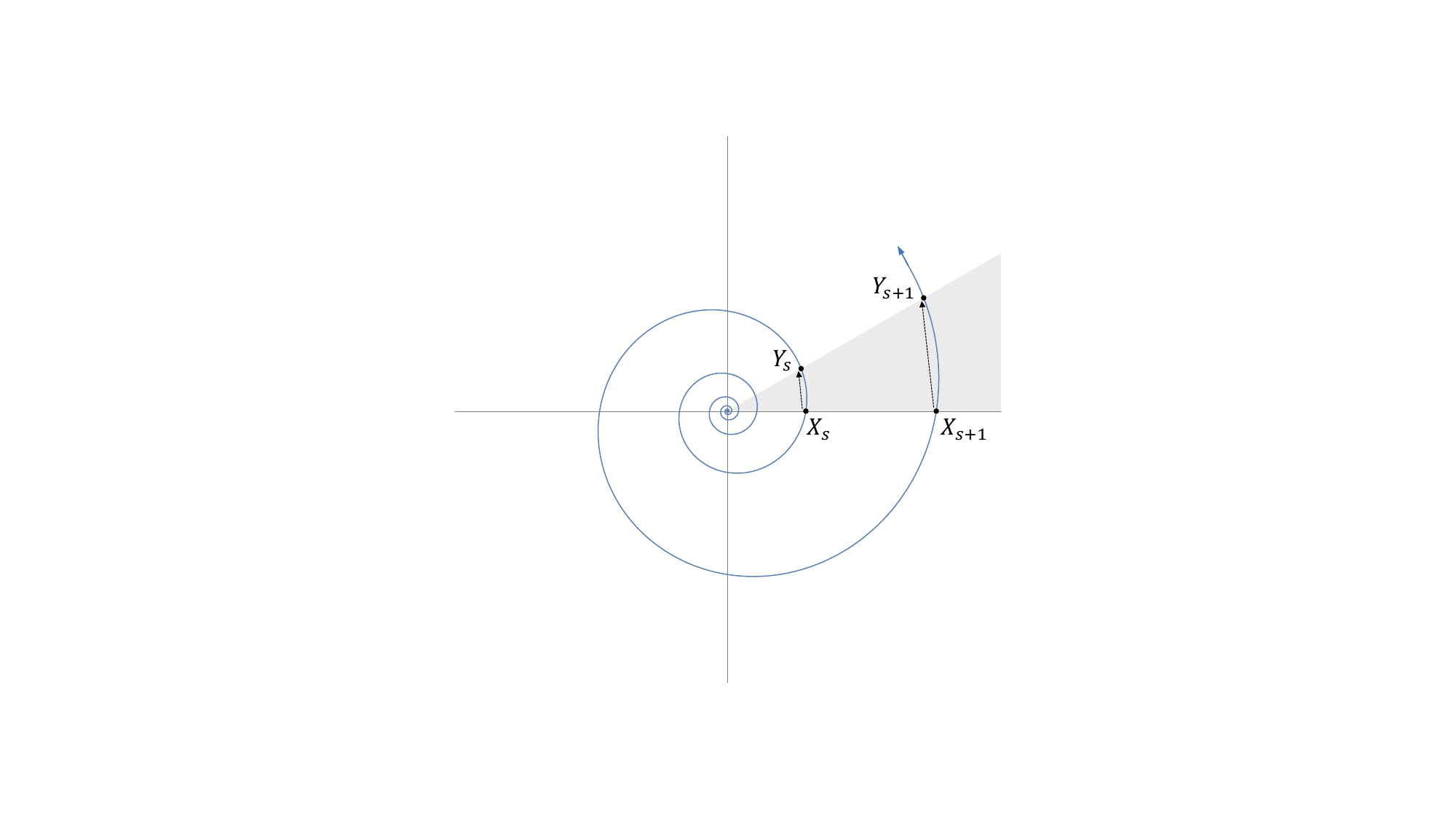}
    \caption{A depiction of the Shortcut Spiral Trajectory to $\mathcal W_\phi\text{-}\msp_1(1)$. The shaded cone corresponds to $\mathcal C_\phi$ with extreme rays $\mathcal L_0,\mathcal L_\phi$, while the shortcut is represented by the dashed arrow within the cone.  
    }
    \label{fig:shortcut}
\end{figure}

\ignore{
\begin{figure}[h!]
  \centering
  \begin{minipage}{0.36\textwidth}
    \includegraphics[width=\linewidth]{figs/shortcut.pdf}
  \end{minipage}%
  \hfill
  \begin{minipage}{0.6\textwidth}
    \captionof{figure}{A depiction of the Shortcut Spiral Trajectory to $\mathcal W_\phi\text{-}\msp_1(1)$. The shaded cone corresponds to $\mathcal C_\phi$ with extreme rays $\mathcal L_0,\mathcal L_\phi$, while the shortcut is represented by the dashed arrow within the cone.}
    \label{fig:shortcut}
  \end{minipage}
\end{figure}
}

\begin{algorithm}[H]
\caption{Shortcut Spiral Trajectory to $\mathcal W_\phi\text{-}\msp_1(1)$}
\label{algo: shortcut spiral}
\begin{algorithmic}
\REQUIRE $k>0$, $1<\lambda <e^{2\pi k}$.
\STATE 
For $s\in \integers$, set 
$X_s := \langle e^{2\pi k s}, 0 \rangle$, and
$Y_s := \langle \lambda e^{2\pi k s}, \phi \rangle$. 
\STATE
Set $\ell := \ln(\lambda)$,
$\alpha := \tfrac{2\pi k-\ell}{2\pi-\phi}$, and
$\beta_s := 2\pi k s + \ell -\alpha \phi$. 
\STATE
Set $h_s(t) := \alpha t + \beta_s$. 
\STATE \textbf{Output Trajectory:}
Line segment $X_s \rightarrow Y_s$, followed by 
$p_s(t) = \langle e^{h_s(t)},t \rangle$ for $t\in [{\phi, 2\pi}]$, and $s\in \integers$
\end{algorithmic}
\end{algorithm}

As before, the conditions $k > 0$ and $1 < \lambda < e^{2\pi k}$ ensure that the trajectory exposes all points in $\mathcal W_\phi$. In particular, the inequality $\lambda < e^{2\pi k}$ implies that $\alpha > 0$. Choosing $\lambda = e^{k \phi}$ recovers the original description of the cross-cut algorithm, where the spiral follows the trajectory of Algorithm~\ref{algo: uniform spiral trajectory} while remaining within $\mathcal W_\phi$.

\begin{lemma}
\label{lem: correctness of shortcut}
For every $\phi \in (0,\pi/2)$, Algorithm~\ref{algo: shortcut spiral} provides a feasible solution to $\mathcal W_\phi\text{-}\msp_1(1)$.
\end{lemma}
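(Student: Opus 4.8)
The plan is to establish the three defining requirements of a feasible solution to $\mathcal W_\phi\text{-}\msp_1(1)$: that the single unit-speed agent's trajectory (i) is a continuous curve emanating from the origin, (ii) can be traversed at unit speed, and (iii) eventually exposes every target $A \in \mathcal W_\phi$. The heart of the matter is verifying that the piecewise description in Algorithm~\ref{algo: shortcut spiral} glues into one continuous curve, and that its logarithmic-spiral portions sweep out all angular coordinates present in $\mathcal W_\phi$ at unboundedly growing radii.

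First I would verify continuity by checking that consecutive pieces share endpoints. Using $\ell = \ln\lambda$, $\alpha = \tfrac{2\pi k - \ell}{2\pi-\phi}$, and $\beta_s = 2\pi k s + \ell - \alpha\phi$, so that $h_s(t) = \alpha(t-\phi) + 2\pi k s + \ell$, a direct substitution gives $h_s(\phi) = 2\pi k s + \ell$, whence $p_s(\phi) = \langle \lambda e^{2\pi k s}, \phi\rangle = Y_s$; and since $\alpha(2\pi-\phi) = 2\pi k - \ell$, one gets $h_s(2\pi) = 2\pi k(s+1)$, whence $p_s(2\pi) = \langle e^{2\pi k(s+1)}, 2\pi\rangle = X_{s+1}$ (identifying angular coordinate $2\pi$ with $0$). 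Thus the segment $X_s\to Y_s$ ends where the spiral $p_s$ begins, and $p_s$ ends where the next segment $X_{s+1}\to Y_{s+1}$ begins; concatenating over all $s\in\integers$ yields a single continuous, piecewise-differentiable curve. For the origin requirement I would observe that, since $k>0$, $h_s(t)\to -\infty$ uniformly over the compact range $t\in[\phi,2\pi]$ as $s\to-\infty$, and likewise $\norm{X_s},\norm{Y_s}\to 0$, so the curve limits to $(0,0)$ as $s\to-\infty$, matching the convention that the parametrization starts at the origin.

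Next I would handle exposure, the crux. Since $\mathcal W_\phi = \bigcup_{\psi\in[\phi,2\pi)}\mathcal L_\psi$, fix a target $A = \langle y,\psi\rangle$ with $\psi\in[\phi,2\pi)$ and $y>0$. Because $\psi$ lies in the parameter range $[\phi,2\pi]$ of every spiral piece, the point $p_s(\psi) = \langle e^{h_s(\psi)}, \psi\rangle$ lies on $\mathcal L_\psi$, with $h_s(\psi) = \alpha(\psi-\phi) + 2\pi k s + \ell$ strictly increasing in $s$ and tending to $+\infty$ as $s\to\infty$ (again using $k>0$). Hence for all sufficiently large $s$ we have $e^{h_s(\psi)} \ge y$, so $p_s(\psi)$ exposes $A$ by the criterion $x\ge y$. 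The speed constraint is then automatic: the curve is piecewise differentiable and, reaching arbitrarily large radii, has infinite length, so it can be reparametrized by arc length to be traversed at unit speed.

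The main obstacle is the continuity gluing in the second step, which rests entirely on the precise algebra defining $\alpha$ and $\beta_s$; in particular one must be careful that identifying angular coordinate $2\pi$ with $0$ is exactly what makes the spiral close up onto the next shortcut endpoint $X_{s+1}$ on $\mathcal L_0$. I would also confirm that the shortcut segment $X_s\to Y_s$ lies in the closed cone $\overline{\mathcal C_\phi}$ (angular coordinates in $[0,\phi]$, valid since $\phi<\pi$) and so contributes nothing extraneous to $\mathcal W_\phi$: its only point in $\mathcal W_\phi$ is the endpoint $Y_s = p_s(\phi)$, consistent with the spiral's coverage of angle $\phi$. Everything else — the origin limit and the exposure argument — follows routinely from $k>0$ forcing $2\pi k s\to\pm\infty$.
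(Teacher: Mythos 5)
Your proposal is correct and follows essentially the same route as the paper's proof: the identical algebraic verification that $h_s(\phi)=2\pi ks+\ell$ and $h_s(2\pi)=2\pi k(s+1)$ glue $p_s$ to $Y_s$ and $X_{s+1}$, the origin limit from $k>0$ as $s\to-\infty$, and exposure of each $A\in\mathcal L_\psi$ via $h_s(\psi)\to\infty$. The extra remarks on uniform convergence, the shortcut segment lying in the closed cone, and arc-length reparametrization are harmless additions not needed by the paper.
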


\begin{proof}
We begin by showing that the trajectory is continuous. The trajectory is defined inductively over $s \in \integers$. For each fixed $s$, it moves from $X_s \in \mathcal L_0$ to $Y_s \in \mathcal L_\phi$, followed by a spiral segment $p_s(t)$ for $t \in [\phi, 2\pi]$, see also Figure~\ref{fig:shortcut}. 
Observe that
\begin{align*}
h_s(\phi) &= \alpha \phi + \beta_s 
= \alpha \phi + 2\pi k s + \ell - \alpha \phi 
= 2\pi k s + \ell, \\
h_s(2\pi) &= \alpha 2\pi + \beta_s 
= \alpha (2\pi - \phi) + 2\pi k s + \ell 
= 2\pi k (s+1).
\end{align*}
This implies
\[
\norm{p_s(\phi)} = e^{h_s(\phi)} = \lambda e^{2\pi k s} = \norm{Y_s}, \quad 
\norm{p_s(2\pi)} = e^{h_s(2\pi)} = e^{2\pi k(s+1)} = \norm{X_{s+1}}.
\]
Hence, $p_s(\phi) = Y_s$ and $p_s(2\pi) = X_{s+1}$, proving that the trajectory is continuous.

Next, we verify feasibility. Since $k > 0$, we have $\lim_{s \to -\infty} X_s = (0,0)$, so the agent starts at the origin. It remains to show that every point in $\mathcal W_\phi$ is eventually exposed.

Let $A = \langle x, \psi \rangle$ with $\psi \in (\phi, 2\pi)$ and $x > 0$. For each $s$, $$h_s(\psi) = \alpha \psi + \beta_s = \alpha \psi + 2\pi k s + \ell - \alpha \phi.$$
Since $k > 0$, $h_s(\psi)$ becomes arbitrarily large as $s \to \infty$. Hence, for sufficiently large $s$, we have $x < e^{h_s(\psi)}$, implying $p_s(\psi)$ exposes $A$.
\qed \end{proof}

It follows that for each $s \in \integers$, the set of targets exposed by $p_s(t)$, for $t \in (\phi, 2\pi)$, forms a nested family in $s$, and their union covers $\mathcal W_\phi$. Define the \emph{wedge} $W_s \subseteq \mathcal W_\phi$ as the set of targets first exposed by $p_{s+1}(t)$ for some $t \in (\phi, 2\pi)$. The wedges $\{W_s\}_s$ are disjoint, and by Lemma~\ref{lem: correctness of shortcut}, we have $\bigcup_s W_s = \mathcal W_\phi$.

We now analyze the search cost of Algorithm~\ref{algo: shortcut spiral}. The following observation will be useful.

\begin{lemma}
\label{lem: compute segments}
For every $s \in \integers$ and $t_1, t_2 \in \reals$, we have
$$
\int_{t_1}^{t_2} \norm{p_s'(t)} \dd t 
= \frac{\sqrt{1+\alpha^2}}{\alpha} e^{2\pi k s} \left( e^{h_0(t_2)} - e^{h_0(t_1)} \right),
$$ 
 and 
$$\norm{X_s Y_s} = e^{2\pi k s} \norm{X_0 Y_0},
$$
where $\alpha$, $p_s(t)$, $X_s$, and $Y_s$ are as defined in Algorithm~\ref{algo: shortcut spiral}.
\end{lemma}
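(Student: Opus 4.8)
The plan is to exploit the self-similar structure of the Shortcut Spiral: advancing the index $s$ by one simply scales every radial coordinate by the fixed factor $e^{2\pi k}$. The entire lemma is a bookkeeping consequence of this scale invariance, so I expect no genuine obstacle.

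First I would record the elementary identity linking $h_s$ to $h_0$. From the definitions $h_s(t) = \alpha t + \beta_s$ and $\beta_s = 2\pi k s + \ell - \alpha\phi$, one reads off immediately that $h_s(t) = h_0(t) + 2\pi k s$, and hence $e^{h_s(t)} = e^{2\pi k s}\, e^{h_0(t)}$. This single observation drives both claims.

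For the arc-length formula, I would note that $p_s(t) = \langle e^{h_s(t)}, t\rangle = \langle e^{\alpha t + \beta_s}, t\rangle$ is a logarithmic spiral of exactly the form treated in Lemma~\ref{lem: standard integral} (taking the phase there to be $0$ and the expansion rate to be $\alpha$ in place of $k$). The computation in that lemma's proof gives $\norm{p_s'(t)} = \sqrt{1+\alpha^2}\, e^{h_s(t)}$. Substituting the scaling identity and pulling the constant $e^{2\pi k s}$ outside the integral, it then remains to evaluate $\int_{t_1}^{t_2} e^{h_0(t)}\dd t = \int_{t_1}^{t_2} e^{\alpha t + \beta_0}\dd t = \tfrac{1}{\alpha}\bigl(e^{h_0(t_2)} - e^{h_0(t_1)}\bigr)$, which is immediate since $\alpha > 0$. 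Collecting factors yields the stated formula.

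For the segment length, I would convert the polar endpoints to Cartesian coordinates: $X_s = e^{2\pi k s}(1,0)$ and $Y_s = \lambda e^{2\pi k s}(\coss{\phi}, \sinn{\phi})$. Both points are $e^{2\pi k s}$ times their $s=0$ counterparts, so the displacement $Y_s - X_s$ equals $e^{2\pi k s}(Y_0 - X_0)$; taking norms gives $\norm{X_s Y_s} = e^{2\pi k s}\norm{X_0 Y_0}$ at once. The only point requiring mild care throughout is the correct invocation of Lemma~\ref{lem: standard integral} — matching the parameter roles (our angular coordinate is $t$ rather than $t+\phi$, but the phase offset does not affect the speed $\norm{p_s'(t)}$) and tracking the constant $1/\alpha$ from the integration.
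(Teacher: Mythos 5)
Your proposal is correct and follows essentially the same route as the paper: both arguments reduce the arc-length claim to Lemma~\ref{lem: standard integral} via the identity $h_s(t) = h_0(t) + 2\pi k s$, and both obtain the segment identity by writing $X_s$ and $Y_s$ in Cartesian coordinates and factoring out $e^{2\pi k s}$. Your extra care about matching the phase parameter in Lemma~\ref{lem: standard integral} is a harmless (and correct) elaboration of what the paper leaves implicit.
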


\begin{proof}
By Lemma~\ref{lem: standard integral}, and since $\alpha > 0$,
\[
\int_{t_1}^{t_2} \norm{p_s'(t)} \dd t
= \frac{\sqrt{1+\alpha^2}}{\alpha} \left( e^{h_s(t_2)} - e^{h_s(t_1)} \right)
= \frac{\sqrt{1+\alpha^2}}{\alpha} e^{2\pi k s} \left( e^{h_0(t_2)} - e^{h_0(t_1)} \right),
\]
using the identity $h_s(t) = 2\pi k s + h_0(t)$.
Also, in Cartesian coordinates, $X_s = e^{2\pi k s}(1,0)$ and 
$$Y_s = \lambda e^{2\pi k s}(\coss{\phi}, \sinn{\phi}),$$
so
\[
\norm{X_s Y_s} = e^{2\pi k s} \norm{ (1,0) - \lambda (\coss{\phi}, \sinn{\phi}) } = e^{2\pi k s} \norm{X_0 Y_0}.
\]
\qed \end{proof}

We now compute the time required to expose a point on $\mathcal L_\psi$, where $\phi \in [\phi, 2\pi]$, which guides the optimal choice of $k$ and $\lambda$ to minimize the search cost.

\begin{lemma}
\label{lem: exposure time}
For every $\psi \in [\phi, 2\pi]$ and $s \in \integers$, any target in $W_s \cap \mathcal L_\psi$ is exposed by $p_{s+1}(\psi)$ at time
$$
e^{2\pi k(s+1)} \left(
\frac{\sqrt{1+\alpha^2}}{\alpha} \lambda e^{\alpha(\psi - \phi)} 
+ \frac{e^{2\pi k}}{e^{2\pi k} - 1} \left( \norm{X_0 Y_0} - \frac{\sqrt{1+\alpha^2}}{\alpha}(\lambda - 1) \right)
\right).
$$
\end{lemma}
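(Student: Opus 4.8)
The plan is to observe that, because a unit-speed agent traverses the Shortcut Spiral Trajectory, the quantity we must compute is simply the total arc length from the origin (reached in the limit $s\to-\infty$) up to the point $p_{s+1}(\psi)$. By the definition of the wedge $W_s$ as the targets first exposed by $p_{s+1}(t)$ for $t\in(\phi,2\pi)$, a target in $W_s\cap\mathcal L_\psi$ is first met along $\mathcal L_\psi$ exactly when the agent reaches $p_{s+1}(\psi)$, so this arc length is the exposure time.

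First I would decompose the relevant portion of the trajectory along its cyclic structure. For each index $j\le s$ the agent completes a full cycle: the line segment $X_j\to Y_j$ followed by the spiral arc $p_j(t)$ for $t\in[\phi,2\pi]$, which ends at $X_{j+1}$. After completing cycle $s$ (arriving at $X_{s+1}$), the agent then traverses the line segment $X_{s+1}\to Y_{s+1}$ and the partial spiral arc $p_{s+1}(t)$ for $t\in[\phi,\psi]$. Next I would evaluate every piece with Lemma~\ref{lem: compute segments}, using the boundary identities $e^{h_0(\phi)}=\lambda$, $e^{h_0(2\pi)}=e^{2\pi k}$, and $e^{h_0(\psi)}=\lambda e^{\alpha(\psi-\phi)}$, each of which follows immediately from $h_0(t)=\alpha t+\ell-\alpha\phi$ together with $\alpha(2\pi-\phi)=2\pi k-\ell$. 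This gives the full-cycle arc $\int_\phi^{2\pi}\norm{p_j'(t)}\dd t=\tfrac{\sqrt{1+\alpha^2}}{\alpha}e^{2\pi k j}(e^{2\pi k}-\lambda)$, the partial arc $\int_\phi^{\psi}\norm{p_{s+1}'(t)}\dd t=\tfrac{\sqrt{1+\alpha^2}}{\alpha}e^{2\pi k(s+1)}\lambda(e^{\alpha(\psi-\phi)}-1)$, and each segment length $\norm{X_jY_j}=e^{2\pi k j}\norm{X_0Y_0}$. Consequently every full cycle $j$ has the common length $C\,e^{2\pi k j}$, where $C:=\norm{X_0Y_0}+\tfrac{\sqrt{1+\alpha^2}}{\alpha}(e^{2\pi k}-\lambda)$.

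Finally I would sum the geometric series $\sum_{j=-\infty}^{s}C\,e^{2\pi k j}=C\,\tfrac{e^{2\pi k(s+1)}}{e^{2\pi k}-1}$, add the terminal segment $e^{2\pi k(s+1)}\norm{X_0Y_0}$ and the partial arc, and factor out $e^{2\pi k(s+1)}$. The main (and essentially the only) obstacle is the ensuing algebraic reorganization: one must show that the collected constant terms combine into $\tfrac{e^{2\pi k}}{e^{2\pi k}-1}\bigl(\norm{X_0Y_0}-\tfrac{\sqrt{1+\alpha^2}}{\alpha}(\lambda-1)\bigr)$, while the sole $\psi$-dependent term reduces to $\tfrac{\sqrt{1+\alpha^2}}{\alpha}\lambda e^{\alpha(\psi-\phi)}$. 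Clearing the denominator $e^{2\pi k}-1$ and cancelling the shared $\tfrac{\sqrt{1+\alpha^2}}{\alpha}\lambda e^{\alpha(\psi-\phi)}$ term on both sides collapses the claim to a short polynomial identity in which all terms involving $\norm{X_0Y_0}$, $\tfrac{\sqrt{1+\alpha^2}}{\alpha}$, and $\tfrac{\sqrt{1+\alpha^2}}{\alpha}\lambda$ cancel, yielding the stated expression and completing the proof.
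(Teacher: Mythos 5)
Your proposal is correct and follows essentially the same route as the paper: decompose the trajectory into the line segments $X_iY_i$ and the spiral arcs, evaluate each piece via Lemma~\ref{lem: compute segments} together with the boundary values $e^{h_0(\phi)}=\lambda$, $e^{h_0(2\pi)}=e^{2\pi k}$, $e^{h_0(\psi)}=\lambda e^{\alpha(\psi-\phi)}$, sum the geometric series, and simplify. The only (immaterial) difference is that you group each segment with its following arc into a ``cycle'' before summing, whereas the paper sums the segments and the arcs as two separate geometric series; your final algebraic identity checks out.
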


\begin{proof}
We have $p_{s+1}(\psi) \in \mathcal L_\psi$ with 
$
\norm{p_{s+1}(\psi)} = e^{h_{s+1}(\psi)} = e^{2\pi k} e^{h_s(\psi)},
$
and by definition of $W_s$, this is at least $\norm{A}$ for all $A \in W_s \cap \mathcal L_\psi$.
By Lemma~\ref{lem: compute segments}, the time to reach $p_{s+1}(\psi)$ is
\begin{align*}
&\sum_{i=-\infty}^{s+1} \norm{X_i Y_i}
+ \sum_{i=-\infty}^{s} \int_\phi^{2\pi} \norm{p_i'(t)} \dd t 
+ \int_\phi^{\psi} \norm{p_{s+1}'(t)} \dd t \\
&= \norm{X_0 Y_0} \sum_{i=-\infty}^{s+1} e^{2\pi k i} 
+ \frac{\sqrt{1+\alpha^2}}{\alpha} \left( 
\left( e^{h_0(2\pi)} - e^{h_0(\phi)} \right) \sum_{i=-\infty}^{s} e^{2\pi k i}
+ e^{2\pi k(s+1)} \left( e^{h_0(\psi)} - e^{h_0(\phi)} \right) 
\right).
\end{align*}

Using $h_0(t) = \alpha(t - \phi) + \ell$, $h_0(\phi) = \ell$, and $h_0(2\pi) = 2\pi k$, and simplifying gives the claimed expression.
\qed \end{proof}

We now show that we can choose $k$ and $\lambda$ to make the search cost independent of the target's location.

\begin{lemma}
\label{lem: choice of k lambda for Wphi}
For $\phi \in (0,\pi)$ and any $\lambda > 1$, define
$$
k_\lambda := \frac{\ln \lambda}{2\pi} + \frac{\lambda - 1}{\sqrt{2\lambda(1 - \coss{\phi})}} \left( 1 - \frac{\phi}{2\pi} \right),
$$ 
and 
$$
\alpha_\lambda := \frac{2\pi k_\lambda - \ln \lambda}{2\pi - \phi}.
$$
Then, $(k_\lambda, \lambda)$ is a valid input to Algorithm~\ref{algo: shortcut spiral} for $\mathcal W_\phi\text{-}\msp_1(1)$, and the search cost equals
$$
\frac{\sqrt{1 + \alpha_\lambda^2}}{\alpha_\lambda} e^{2\pi k_\lambda}.
$$
\end{lemma}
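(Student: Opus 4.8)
The plan is to feed the exposure-time formula of Lemma~\ref{lem: exposure time} into a worst-case ratio computation, and then to observe that the entire freedom in choosing $k$ (for a fixed $\lambda$) can be spent on forcing this ratio to be constant across all rays. First I would fix a ray $\mathcal L_\psi$ with $\psi \in [\phi,2\pi]$ and a wedge $W_s$. The targets in $W_s \cap \mathcal L_\psi$ are precisely those first exposed by $p_{s+1}(\psi)$; their norms range over $\left(\norm{p_s(\psi)},\norm{p_{s+1}(\psi)}\right]$, and they all share the same exposure time $T_s(\psi)$, namely the time to reach $p_{s+1}(\psi)$ supplied by Lemma~\ref{lem: exposure time}. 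Since this time is fixed along the ray, the worst-case ratio is achieved as the target norm tends to $\norm{p_s(\psi)}$ from above, so the relevant quantity is $\mathrm{cost}(\psi):=T_s(\psi)/\norm{p_s(\psi)}$.

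Using $\norm{p_s(\psi)}=e^{2\pi k s}\lambda e^{\alpha(\psi-\phi)}$ together with the exposure-time expression, the factor $e^{2\pi k s}$ cancels and I would obtain
$$
\mathrm{cost}(\psi)=e^{2\pi k}\left(\frac{\sqrt{1+\alpha^2}}{\alpha}+\frac{1}{\lambda e^{\alpha(\psi-\phi)}}\cdot\frac{e^{2\pi k}}{e^{2\pi k}-1}\left(\norm{X_0Y_0}-\frac{\sqrt{1+\alpha^2}}{\alpha}(\lambda-1)\right)\right).
$$
All dependence on the ray (and on the wedge index $s$) is confined to the factor $1/(\lambda e^{\alpha(\psi-\phi)})$ multiplying the bracketed constant. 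Hence $\mathrm{cost}(\psi)$ is independent of $\psi$ if and only if that constant vanishes, i.e.
$$
\norm{X_0Y_0}=\frac{\sqrt{1+\alpha^2}}{\alpha}(\lambda-1),
$$
in which case the common search cost equals $e^{2\pi k}\sqrt{1+\alpha^2}/\alpha$, matching the claimed value once $k=k_\lambda$ and $\alpha=\alpha_\lambda$ are substituted.

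It then remains to solve this condition and translate it back to $k$. In Cartesian coordinates $\norm{X_0Y_0}^2=(1-\lambda\coss{\phi})^2+(\lambda\sinn{\phi})^2=\lambda^2-2\lambda\coss{\phi}+1$; squaring the condition and using the clean identity $\lambda^2-2\lambda\coss{\phi}+1-(\lambda-1)^2=2\lambda(1-\coss{\phi})$ yields
$$
\alpha=\frac{\lambda-1}{\sqrt{2\lambda(1-\coss{\phi})}}=\alpha_\lambda.
$$
Substituting this into the algorithm's defining relation $\alpha=(2\pi k-\ln\lambda)/(2\pi-\phi)$ and isolating $k$ reproduces exactly the stated formula for $k_\lambda$. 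Finally I would verify validity: $k_\lambda>0$ is immediate since $\ln\lambda>0$ and the second summand is positive for $\lambda>1$, $\phi\in(0,\pi)$; and the requirement $1<\lambda<e^{2\pi k_\lambda}$ reduces to $\alpha_\lambda>0$, which holds for the same reasons.

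The main obstacle is the conceptual step rather than the algebra: one must correctly identify that the worst target on each ray sits at the inner boundary $\norm{p_s(\psi)}$ of its wedge, so that the fixed exposure time is divided by the smallest admissible norm, and then recognize that equalizing the cost across all rays is precisely the single scalar condition that the bracketed constant above vanishes. Some care is also needed to confirm that targets exposed by the straight shortcut segments $X_iY_i$ lie inside $\mathcal C_\phi$ and therefore do not influence the $\mathcal W_\phi$ search cost, so that the supremum over $\psi\in[\phi,2\pi]$ indeed equals the constant value computed above.
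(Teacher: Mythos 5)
Your proposal is correct and follows essentially the same route as the paper's proof: divide the exposure time of Lemma~\ref{lem: exposure time} by the inner-boundary norm $\norm{p_s(\psi)}$, observe that the only $\psi$- and $s$-dependence sits in the coefficient of $\norm{X_0Y_0} - \tfrac{\sqrt{1+\alpha^2}}{\alpha}(\lambda-1)$, and choose $k=k_\lambda$ to make that term vanish. Your version merely makes explicit the algebra (via the identity $\lambda^2-2\lambda\coss{\phi}+1-(\lambda-1)^2 = 2\lambda(1-\coss{\phi})$) that the paper compresses into ``simple algebraic manipulations,'' and the validity check $1<\lambda<e^{2\pi k_\lambda}$ is identical.
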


\begin{proof}
Since $\lambda > 1$, it follows that $k_\lambda > \frac{\ln \lambda}{2\pi} > 0$, so $2\pi k_\lambda > \ln \lambda$, and hence $\lambda < e^{2\pi k_\lambda}$, satisfying the algorithm's conditions.

Let $\psi \in (\phi, 2\pi)$ and $A \in W_s \cap \mathcal L_\psi$. By Lemma~\ref{lem: exposure time}, $A$ is exposed by $p_{s+1}(\psi)$, and since
\[
\norm{A} \geq \norm{p_s(\psi)} = \lambda e^{2\pi k s} e^{\alpha(\psi - \phi)},
\]
the search cost is at most
\[
\frac{\sqrt{1 + \alpha^2}}{\alpha} e^{2\pi k} 
+ \frac{e^{4\pi k}}{\lambda (e^{2\pi k} - 1) e^{\alpha(\psi - \phi)}} 
\left( \norm{X_0 Y_0} - \frac{\sqrt{1 + \alpha^2}}{\alpha} (\lambda - 1) \right).
\]
Elementary calculations show that 
$\norm{X_0Y_0}=1+\lambda^2-2\lambda \coss{\phi}$. Recalling that $\alpha=\tfrac{2\pi k-\ell}{2\pi-\phi}$, and by applying simple algebraic manipulations, we see that the choice of $k=k_\lambda$ is the one that makes $\norm{X_0Y_0} - \tfrac{\sqrt{1+\alpha^2}}{\alpha}(\lambda -1)$ equal to $0$, that is, it makes the search cost independent of the placement of the target. 
\qed \end{proof}

We are now ready to prove Theorem~\ref{thm: wedge search upper bound}, which follows from Lemma~\ref{lem: choice of k lambda for Wphi} by showing that the search cost admits a minimum for every $\phi \in (0,\pi)$ when the parameter $\lambda$ is restricted to values strictly greater than $1$.

\begin{proof}[of Theorem~\ref{thm: wedge search upper bound}]
By Lemma~\ref{lem: choice of k lambda for Wphi}, for all $\lambda > 1$, and setting $k = k_\lambda$, the search cost of Algorithm~\ref{algo: shortcut spiral} is
$$
p(\lambda) := \frac{\sqrt{1+\alpha_\lambda^2}}{\alpha_\lambda} e^{2\pi k_\lambda},
$$
where $\alpha_\lambda = \frac{2\pi k_\lambda - \ln \lambda}{2\pi - \phi}$.
We now simplify $p(\lambda)$ and show that for every $\phi \in (0,\pi)$, this function attains a minimum over $\lambda > 1$.

First, observe that $p(\lambda)$ is continuous for all $\lambda > 1$, and since $\alpha_\lambda > 0$, we have $p(\lambda) \geq 0$. Substituting the expression for $\alpha_\lambda$ from Lemma~\ref{lem: choice of k lambda for Wphi},
$$
\alpha_\lambda = \frac{\lambda - 1}{\sqrt{2\lambda (1 - \coss{\phi})}},
$$
we obtain the simplified form
$$
p(\lambda) =
\frac{\sqrt{\lambda^2 - 2\lambda \cos(\phi) + 1}}{\lambda - 1} e^{2\pi k_\lambda}.
$$

We now examine the behavior at the boundaries. As $\lambda \to \infty$, we have $\lim_{\lambda \to \infty} k_\lambda = \infty$, and
$$
\lim_{\lambda \to \infty} \frac{\sqrt{\lambda^2 - 2\lambda \cos(\phi) + 1}}{\lambda - 1} = 1,
$$
so $\lim_{\lambda \to \infty} p(\lambda) = \infty$.
As $\lambda \to 1^+$, we have $\lim_{\lambda \to 1^+} k_\lambda = 0$, but
$$
\lim_{\lambda \to 1^+} \frac{\sqrt{\lambda^2 - 2\lambda \cos(\phi) + 1}}{\lambda - 1} = \infty,
$$
implying $\lim_{\lambda \to 1^+} p(\lambda) = \infty$.
Thus, the function $p(\lambda)$ is continuous and non-negative on $(1,\infty)$ and diverges at both endpoints. By the Extreme Value Theorem, $p(\lambda)$ attains a minimum in the open interval $(1, \infty)$.
\qed \end{proof}

Theorem~\ref{thm: wedge search upper bound} will be used in the next section 
to establish the  
performance of Algorithm~\ref{algo: Off-Set Trajectory} for $\mathcal P\textrm{-}\msp_2(1,c)$. 
Algorithm~\ref{algo: shortcut spiral} will be utilized by the speed-$1$ agent. 
Recall that for each $\phi$, Algorithm~\ref{algo: shortcut spiral} is parameterized by $\lambda, k$, whereas the optimal parameters are set to $\lambda=\lambda(\phi)$ and $k_\lambda=k_\lambda(\phi)$ by Theorem~\ref{thm: wedge search upper bound}. 
At the same time, Algorithm~\ref{algo: shortcut spiral} emerged by modifying Algorithm~\ref{algo: uniform spiral trajectory} which corresponds to choosing $\lambda=e^{k \phi}$. We are therefore motivated to explore for each $\phi$ how $e^{k_\lambda \phi}$ compares to $\lambda$. This is done in Figure~\ref{fig:actualk_lambdaexpcomp}.

\begin{figure}[h!]
    \centering
    \begin{subfigure}{0.44\textwidth}
        \centering
        \includegraphics[width=\linewidth]{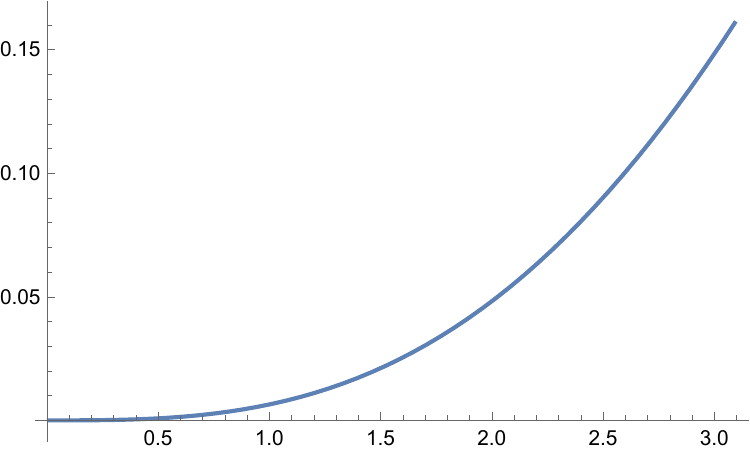}
        \caption{
        Comparison between $e^{k_\lambda \phi}$ and $\lambda$, for the optimal values of $\lambda(\phi),k_\lambda(\phi)$ as functions of $\phi$. The plot shows $e^{k_\lambda \phi} - \lambda$, as a function of $\phi$. 
        }
        \label{fig:lambdaexpcomp}
    \end{subfigure}
    \hfill
    \begin{subfigure}{0.44\textwidth}
        \centering
        \includegraphics[width=\linewidth]{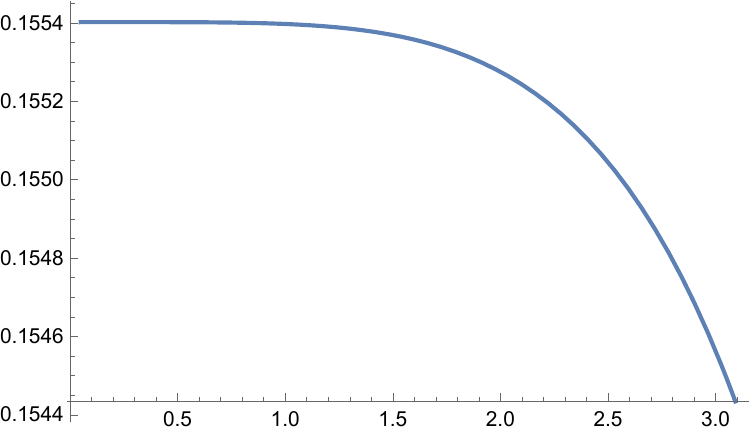}
        \caption{The plot of the optimal $k_\lambda=k_\lambda(\phi)$ 
as it is used by Algorithm~\ref{algo: shortcut spiral}.
        }
        \label{fig:actualk}
    \end{subfigure}
    \caption{
    Plot of parameters $\lambda=\lambda(\phi)$ and $k_\lambda=k_\lambda(\phi)$ as computed by Theorem~\ref{thm: wedge search upper bound} and used in 
    Algorithm~\ref{algo: shortcut spiral} for searching $\mathcal W_\phi$, $\phi \in (0,\pi)$. 
    The horizontal axis in both plots corresponds to the angular parameter $\phi$. 
    }
    \label{fig:actualk_lambdaexpcomp}
\end{figure}

\section{Sub-Optimality of the Spiral-Type Off-Set Algorithm}
\label{sec: suboptimality}

In this section, we  demonstrate that the best performance achieved by the logarithmic-spiral-type Off-Set Algorithm~\ref{algo: Off-Set Trajectory} for planar search, as described in Theorem~\ref{thm: simplified general spiral upper bound}, can be strictly improved when the speed of one of the agents is sufficiently small. We illustrate this by focusing on the special case of two agents searching the plane, that is, the problem $\mathcal P\text{-}\msp_2(1,c)$, where $c \in (0,1)$. This establishes Theorems~\ref{thm: hybrid upper bound} and~\ref{thm: numerical upper bound}.
We begin with an observation. 

\begin{proposition}
\label{prop: no slow agent}
The optimal search cost for $\mathcal P\text{-}\msp_2(1,c)$ with $c < 1/\mathcal U_1 \approx 0.0578391$ is $\mathcal U_1 \approx 17.28935$.
\end{proposition}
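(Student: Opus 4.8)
The plan is to prove matching upper and lower bounds of $\mathcal{U}_1$. For the upper bound I would simply ignore the slow agent: assign the unit-speed agent the provably optimal logarithmic spiral for $\mathcal{P}\text{-}\msp_1(1)$ from~\cite{langetepe2010optimality}, which already exposes every point of $\mathcal{P}$ at normalized cost $\mathcal{U}_1$, and let the speed-$c$ agent follow any feasible trajectory. Since feasibility only demands that each target be exposed by \emph{some} agent, this is a feasible two-agent strategy with search cost $\mathcal{U}_1$.

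The substance is the lower bound, where the key estimate is how slowly the speed-$c$ agent can expose a target. If this agent exposes $A = \langle \|A\|, \psi\rangle$, it must visit some point $P = \langle x, \psi\rangle$ with $x \geq \|A\|$; the length of its trajectory from the origin to $P$ is at least the straight-line distance $\|P\| = x \geq \|A\|$, so it reaches $P$ no earlier than time $\|A\|/c$. Hence the normalized exposure time the slow agent can ever contribute is at least $1/c$, and the hypothesis $c < 1/\mathcal{U}_1$ makes this strictly larger than $\mathcal{U}_1$.

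With this estimate I would fix an arbitrary feasible strategy with search cost $S$ and split into two cases. If the earliest exposure of some target $A$ is realized by the speed-$c$ agent, then $S \geq t_A/\|A\| \geq 1/c > \mathcal{U}_1$. Otherwise the unit-speed agent is (weakly) the first to expose every target; in particular it exposes all of $\mathcal{P}$ on its own and $t_A$ coincides with its exposure time for every $A$, so $S$ equals the single-agent search cost of its trajectory, which is at least $\mathcal{U}_1$ by the optimality of the spiral for $\mathcal{P}\text{-}\msp_1(1)$~\cite{langetepe2010optimality}. Either way $S \geq \mathcal{U}_1$, matching the upper bound and giving optimal cost exactly $\mathcal{U}_1$.

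The main obstacle is ruling out any cooperative gain from the slow agent in the lower bound. The argument sidesteps a direct analysis of joint trajectories by bounding the slow agent's normalized cost uniformly below by $1/c > \mathcal{U}_1$, which forces the fast agent to carry the entire search whenever one hopes to beat $\mathcal{U}_1$ and thereby reduces the claim to the known single-agent optimality. The only delicate point is handling ties in the earliest-exposure time, which the case split on whether the slow agent is ever weakly first resolves cleanly.
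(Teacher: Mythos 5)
Your proposal is correct and follows essentially the same route as the paper: the upper bound comes from letting the unit-speed agent run the optimal single-agent spiral alone, and the lower bound comes from observing that any point first exposed by the speed-$c$ agent incurs normalized cost at least $1/c > \mathcal{U}_1$, so in any strategy beating $\mathcal{U}_1$ the unit-speed agent must expose everything and is then subject to the single-agent optimality of~\cite{langetepe2010optimality}. If anything, your explicit case split (including the tie-handling) spells out the concluding step slightly more carefully than the paper's contradiction-style phrasing, but the argument is the same.
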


\begin{proof}
The optimal search cost for $\msp_2(1,c)$ is at most $\mathcal U_1$, since the unit-speed agent can simulate the optimal solution to $\msp_1(1)$. We now show that, when $c < 1/\mathcal U_1$, the speed-$c$ agent cannot be the first to expose any point in an optimal solution.

Suppose, for contradiction, that an optimal algorithm for $\msp_2(1,c)$ has the speed-$c$ agent first exposing a point $A$. Consider a unit-speed agent following the same trajectory. The unit-speed agent must take at least 1 unit of time to expose $A$, so the speed-$c$ agent, moving $1/c$ times slower, takes at least $1/c$ time to reach $A$. Hence, the search cost for exposing $A$ would be at least $1/c > \mathcal U_1$, contradicting the assumption that the algorithm is optimal.
\qed \end{proof}

Proposition~\ref{prop: no slow agent} is a negative result, as it shows that slow agents with speed $c < 1/\mathcal U_1$ cannot help reduce the optimal search cost below $\mathcal U_1$ in the problem $\mathcal P\text{-}\msp_2(1,c)$. However, can the Off-Set logarithmic-spiral Algorithm~\ref{algo: Off-Set Trajectory} make use of such slow-moving agents? The following immediate corollary of Theorem~\ref{thm: simplified general spiral upper bound} addresses this.

\begin{corollary}
\label{cor: 2 speed upper bound corollary}
The performance of the Off-Set Algorithm~\ref{algo: Off-Set Trajectory} for $\mathcal P\text{-}\msp_2(1,c)$ equals
$\mathcal U_1$ if $c \leq (\mathcal U_2/\mathcal U_1)^2 \approx 0.266687$, and
$\mathcal U_2/\sqrt{c}$ if $c > (\mathcal U_2/\mathcal U_1)^2$.
\end{corollary}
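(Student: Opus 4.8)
The plan is to treat this statement as a direct specialization of Theorem~\ref{thm: simplified general spiral upper bound} to the case $n = 2$ with speed vector $(c_0, c_1) = (1, c)$, $c \in (0,1)$. First I would record the two relevant geometric means: since $\mathcal{G}_\ell$ is the geometric mean of the fastest $\ell$ speeds, here $\mathcal{G}_1 = c_0 = 1$ and $\mathcal{G}_2 = (c_0 c_1)^{1/2} = \sqrt{c}$.

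Next I would simply invoke Theorem~\ref{thm: simplified general spiral upper bound}, which guarantees that the Off-Set Algorithm achieves search cost
$$
\min_{i = 1, 2} \frac{\mathcal{U}_i}{\mathcal{G}_i} = \min\left\{ \frac{\mathcal{U}_1}{1},\; \frac{\mathcal{U}_2}{\sqrt{c}} \right\} = \min\left\{ \mathcal{U}_1,\; \frac{\mathcal{U}_2}{\sqrt{c}} \right\}.
$$
Since the theorem asserts that the constructed solution attains exactly this value, the entire remaining content of the corollary is to resolve this minimum as an explicit function of $c$.

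Finally I would compare the two candidate values. Multiplying through by $\sqrt{c} > 0$ and dividing by $\mathcal{U}_1 > 0$, the inequality $\mathcal{U}_1 \leq \mathcal{U}_2/\sqrt{c}$ is equivalent to $\sqrt{c} \leq \mathcal{U}_2/\mathcal{U}_1$, i.e.\ to $c \leq (\mathcal{U}_2/\mathcal{U}_1)^2$. Hence the first term attains the minimum precisely when $c \leq (\mathcal{U}_2/\mathcal{U}_1)^2$, giving cost $\mathcal{U}_1$ (the regime in which only the unit-speed agent contributes), while the second term attains it when $c > (\mathcal{U}_2/\mathcal{U}_1)^2$, giving cost $\mathcal{U}_2/\sqrt{c}$. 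This matches the stated piecewise form.

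There is essentially no obstacle here, as the result is a one-line substitution into an already-established theorem; the only point worth a sanity check is that the threshold is an admissible speed. By Proposition~\ref{prop: monotonicity of Cn} we have $\mathcal{U}_2 < \mathcal{U}_1$, so $(\mathcal{U}_2/\mathcal{U}_1)^2 \in (0,1)$, confirming that both regimes genuinely occur for $c \in (0,1)$ and that the numerical value $\approx 0.266687$ is consistent with Table~\ref{tab: geometric mean factor}.
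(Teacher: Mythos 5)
Your proposal is correct and matches the paper exactly: the paper presents this corollary as an immediate consequence of Theorem~\ref{thm: simplified general spiral upper bound} specialized to $n=2$ with $\mathcal{G}_1 = 1$ and $\mathcal{G}_2 = \sqrt{c}$, and the threshold follows by resolving the minimum just as you do. No gaps.
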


Next, we formally demonstrate how to strictly improve the performance of the Off-Set Algorithm~\ref{algo: Off-Set Trajectory} for $\mathcal P\text{-}\msp_2(1,c)$ when  
$$
1/\mathcal U_1 < c < \left(\mathcal U_2/ \mathcal U_1\right)^2.
$$
Moreover, this improvement extends to speeds slightly exceeding the threshold $\left(\mathcal U_2/\mathcal U_1\right)^2$, which marks the limit beyond which Algorithm~\ref{algo: Off-Set Trajectory} ceases to be optimal for $\mathcal P\text{-}\msp_2(1,c)$. In this extended range, the speed-$c$ agent exposes new points while incurring a strictly lower search cost than $\mathcal U_1$, though further improvements are still possible.

The Off-Set Algorithm~\ref{algo: Off-Set Trajectory} is spiral based in the sense that both agents are tasked with covering all directions, and the analysis enforces a shared worst case exposure across agents. In the regime of Corollary~\ref{cor: 2 speed upper bound corollary}, this forces a dichotomy: either the slow agent does not help at all, or it participates but drives the bound to the spiral expression $\mathcal U_2/\sqrt{c}$. The hybrid strategy below avoids this global coupling by partitioning the plane into two geometrically different regions, and assigning each agent only the region it can search efficiently. The speed-$c$ agent is restricted to a cone, where directional search is effective, while the unit-speed agent covers the conic complement using the wedge search bound, and this separation is what enables an improvement over two spirals for an intermediate range of speeds.

We begin by describing an alternative algorithm for $\mathcal P\text{-}\msp_2(1,c)$.

\begin{algorithm}[H]
\caption{Cone-Wedge Hybrid Algorithm for $\mathcal P\text{-}\msp_2(1,c)$}
\label{algo: cone wedge hybrid}
\begin{algorithmic}
\REQUIRE $\phi \in (0,\pi)$
\STATE Sequence $\{q_i\}_{i\in \integers}$ defined as in Lemma~\ref{lem: choice of sequence g}, and fixed to $q_i = q_i(\phi)$ by Theorem~\ref{thm: cone search upper bound}.
\STATE Parameters $\lambda, k_\lambda$ defined in Lemma~\ref{lem: choice of k lambda for Wphi}, and fixed to $\lambda = \lambda(\phi)$, $k_\lambda = k_\lambda(\phi)$ by Theorem~\ref{thm: wedge search upper bound}.
\STATE \textbf{Output Trajectories:} \\
The speed-$c$ agent searches $\mathcal C_\phi$ using Algorithm~\ref{algo: bouncing} with input sequence $\{q_i\}_{i \in \integers}$.\\
The unit-speed agent searches $\mathcal W_\phi$ using Algorithm~\ref{algo: shortcut spiral} with parameters $\lambda$ and $k_\lambda$.
\end{algorithmic}
\end{algorithm}

We now demonstrate the effectiveness of Algorithm~\ref{algo: cone wedge hybrid} by proving Theorem~\ref{thm: hybrid upper bound}.

\begin{proof}[of Theorem~\ref{thm: hybrid upper bound}]
We analyze the performance of Algorithm~\ref{algo: cone wedge hybrid} for input $\phi \in (0,\pi)$.  
The plane is partitioned into the cone $\mathcal{C}_\phi$, searched by the speed-$\gamma(\phi)$ agent, and the wedge $\mathcal{W}_\phi$, searched by the unit-speed agent.  
By Theorem~\ref{thm: cone search upper bound}, the search cost for points in $\mathcal{C}_\phi$ is  
$$
\coss{\phi/2} \cdot \frac{\mathcal{F}_\phi}{\gamma(\phi)},
$$
i.e., the unit-speed cost scaled by $1/\gamma(\phi)$.  
Similarly, Theorem~\ref{thm: wedge search upper bound} gives a search cost of $\mathcal{R}_\phi$ for points in $\mathcal{W}_\phi$.  
By the definition of $\gamma(\phi)$, these two expressions are equal, so the overall search cost is $\mathcal{R}_\phi$, as claimed.
\qed \end{proof}

Theorem~\ref{thm: numerical upper bound} now follows as a numerical corollary of Theorem~\ref{thm: hybrid upper bound}.  
For it's proof, we numerically compute the pairs $(\gamma(\phi), \mathcal{R}_\phi)$ for $\phi \in (0, \pi/2)$, where a speed-$\gamma(\phi)$ agent searches the cone $\mathcal{C}_\phi$, incurring a corresponding search cost $\mathcal{R}_\phi$.  
We compare these values to the upper bound $\mathcal{U}_2/\sqrt{c}$ given in Theorem~\ref{thm: simplified general spiral upper bound}, as shown in Figure~\ref{fig: num upper bound plane 2 agents}.
Since our goal is to optimize performance for small values of $c$, the approach is valid as long as the range of $\gamma(\phi)$ contains the interval
$$
\left( 1/\mathcal{U}_1, \left( \mathcal{U}_2/ \mathcal{U}_1\right)^2 \right) \approx (0.0578391, 0.266687).
$$
Numerical calculations which are depicted in Figure~\ref{fig: range of gamma(phi)} 
confirm that this interval is indeed fully covered.

\begin{figure}[h!]
    \centering
    \includegraphics[width=0.45\textwidth]{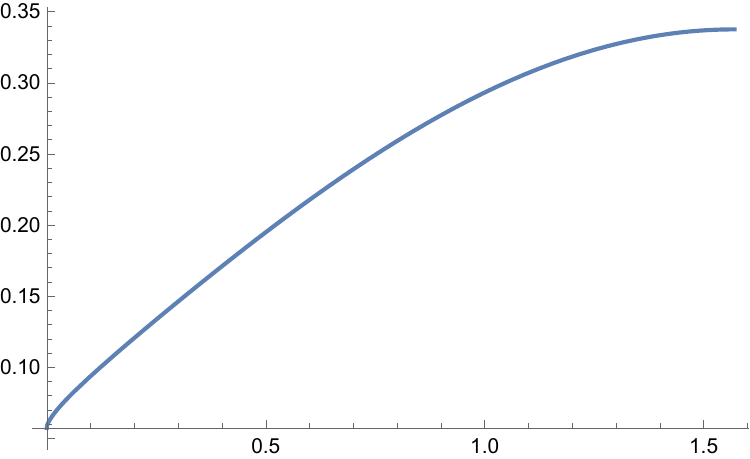}
    \caption{
        Plot of $\gamma(\phi) = \coss{\phi/2}  \tfrac{\mathcal{F}_\phi}{\mathcal{R}_\phi}$ against $\phi$,  
        where $\mathcal{F}_\phi$ and $\mathcal{R}_\phi$ are defined in Theorems~\ref{thm: cone search upper bound} and~\ref{thm: wedge search upper bound}, respectively.
    }
    \label{fig: range of gamma(phi)}
\end{figure}

\section{Discussion}

We studied point search in the plane using multi-speed agents, focusing on extensions of spiral-based strategies that are known to be optimal in the single-agent, unit-speed setting. While spirals provide a natural baseline, our results show that they need not be optimal once agents have heterogeneous speeds. In particular, we identify a concrete regime in which spiral-based offset strategies become suboptimal, providing the first rigorous evidence of a qualitative separation between single-speed and multi-speed search.

The quantitative improvement obtained by the conic-complement hybrid strategy is modest and confined to a narrow range of speed ratios. We do not claim that this hybrid is close to optimal, and we currently lack intuition about the structure of globally optimal strategies in this regime. Rather, the hybrid construction serves to demonstrate that assigning geometrically distinct tasks to agents of different speeds can outperform frameworks in which all agents are required to cover all directions.

At the level of subproblems, stronger statements may be possible. For directional (cone) search, the cone-search strategy appears to capture the essential geometric difficulty of the problem, and we conjecture that it is optimal in that setting. For $n$ agents of equal unit speed, we likewise conjecture that spiral-based strategies remain optimal, although establishing this lies outside the scope of the present work. Finally, for sufficiently large speed ratios, we conjecture that the offset spiral strategy analyzed here becomes optimal again, reflecting the diminishing influence of slower agents.

Randomized strategies define a largely orthogonal direction. Randomizing the global orientation of a deterministic trajectory removes the adversary’s ability to select a worst-case target angle, so that the expected performance coincides with the average-case performance of the corresponding deterministic strategy under a uniformly random target direction. A systematic comparison between deterministic and randomized strategies in the multi-speed planar search model remains open.

Several further questions arise naturally. Can matching lower bounds be established for restricted classes of speed distributions? Does a similar separation between spiral-based and non-spiral strategies arise in shoreline search with heterogeneous agents? How should strategies adapt when agent speeds are unknown or time-varying, or when search is conducted in bounded domains or environments with obstacles?

\bibliographystyle{plainurl}
\bibliography{PointSearch2Dbib-new}

\end{document}